\newcommand{\blind}{1}
\let\amsmath@bigm\bigm
\renewcommand{\bigm}[1]{%
  \ifcsname fenced@\string#1\endcsname
    \expandafter\@firstoftwo
  \else
    \expandafter\@secondoftwo
  \fi
  {\expandafter\amsmath@bigm\csname fenced@\string#1\endcsname}%
  {\amsmath@bigm#1}%
}
\newcommand{\DeclareFence}[2]{\@namedef{fenced@\string#1}{#2}}
\DeclareFence{\mid}{|}
\begin{document}

\def\spacingset#1{\renewcommand{\baselinestretch}%
{#1}\small\normalsize} \spacingset{1}


\if1\blind
{
  \title{\bf A Joint MLE Approach to Large-Scale Structured Latent Attribute Analysis}
  \author{Yuqi Gu\footnote{Email: yg2811@columbia.edu.~Address:~Room 928 SSW, 1255 Amsterdam Avenue,~New York,~NY 10027.}
  \hspace{.2cm}\\
    Department of Statistics, Columbia University\\
    and \\
    Gongjun Xu\footnote{Email: gongjun@umich.edu.} \\
    Department of Statistics, University of Michigan}
    \date{}
  \maketitle
} \fi

\if0\blind
{
  \bigskip
  \bigskip
  \bigskip
  \begin{center}
    {\LARGE
    \bf 
     A Joint MLE Approach to Large-Scale Structured Latent Attribute Analysis\par}
\end{center}
  \medskip
} \fi

\begin{abstract}
Structured Latent Attribute Models (SLAMs) are a family of discrete latent variable models widely used in education, psychology, and epidemiology to model multivariate categorical data. 
A SLAM assumes that multiple discrete latent attributes explain the dependence of observed variables in a highly structured fashion. Usually, the maximum marginal likelihood estimation approach is adopted for SLAMs, treating the latent attributes as random effects. The increasing scope of modern assessment data involves large numbers of observed variables and high-dimensional latent attributes. This poses challenges to classical estimation methods and requires new methodology and understanding of latent variable modeling. Motivated by this, we consider the joint maximum likelihood estimation (MLE) approach to SLAMs, treating latent attributes as fixed unknown parameters. We investigate estimability, consistency, and computation in the regime where sample size, number of variables, and number of latent attributes all can diverge. We establish the statistical consistency of the joint MLE and propose efficient algorithms that scale well to large-scale data for several popular SLAMs. 
Simulation studies demonstrate the superior empirical performance of the proposed methods. An application to real data from an international educational assessment gives interpretable findings of cognitive diagnosis.
\end{abstract}

\noindent%
{\it Keywords:} discrete latent variables, cognitive diagnostic modeling, $\QQ$-matrix, joint maximum likelihood estimation


\spacingset{1.45} 

\section{Introduction}

\paragraph{A Modern Family of Fine-grained Discrete Latent Variable Models.}
Structured Latent Attribute Models (SLAMs) are discrete latent variable models that have attracted substantial attention in various applications, including cognitive diagnosis in educational assessments \citep{junker2001cognitive,HensonTemplin09,dela2011}, psychiatric diagnosis of mental disorders \citep{templin2006measurement,dela2018}, and epidemiological studies of disease etiology \citep{wu2017nested,o2019causes}. 
A SLAM assumes multiple binary latent {attributes} explain observed variables in a highly structured fashion. 
In particular, for each subject $i$ a SLAM models the $J$-dimensional observations $\bo r_i = (r_{i,1},\ldots,r_{i,J})$ using a $K$-dimensional  latent attribute profile $\bo a_i = (a_{i,1},\ldots,a_{i,K}) \in \{0,1\}^K$. 
 In many applications, each attribute $a_{i,k}=1$ or 0 carries substantive meanings; e.g., mastery/deficiency of some skill in an educational test, or presence/absence of some pathogen in epidemiological diagnosis.
An important ``{structured}'' feature of a SLAM comes from a binary loading matrix, the $\QQ$-matrix \citep{Tatsuoka1983}. 
The $J\times K$ matrix $\QQ=(q_{j,k})$ encodes how the observed variables depend on the latent attributes, where $q_{j,k}=1$ or 0 means whether or not the $j$th observed variable depends on the $k$th latent attribute.
By modeling the latent variables as multidimensional binary and incorporating structural constraints in the $\QQ$-matrix, SLAMs provide a powerful framework to infer subjects' fine-grained latent traits, and to perform clustering based on the inferred latent  profiles.

Since the latent variables are discrete, a SLAM can be viewed as a {mixture model}, where each subject's latent attribute profile $\bo a_i$ is a random variable following a categorical distribution with $|\{0,1\}^K|=2^K$ components. 
Over the past two decades when latent attribute models have attracted a great surge of interest, 
this perspective of treating subjects' latent attributes as \textit{random effects} is usually taken in the literature of modeling \citep{davier2008general,HensonTemplin09,dela2011}, estimation \citep{chen2015statistical,xu2018jasa,culpepper2019,gu2019jmlr}, and study of model identifiability \citep{xu2017,fang2019,gu2020partial,chen2020sparse}.
Taking this perspective, estimation is usually performed by maximizing the marginal likelihood. 
The corresponding estimators can be obtained via an EM algorithm for mixture models. 
But an obstacle to adopting such an approach in large-scale and high-dimensional data is that the number of latent patterns $2^K$ grows exponentially with the number of attributes $K$. 
This quickly becomes computationally cumbersome as $K$ grows large, which is commonly seen in modern large-scale assessment data.
{For example, the TIMSS 2003 8th grade dataset available in the R package \texttt{CDM} involves $K=13$ skill attributes, which gives rise to $2^{13}=8192$ binary skill patterns.}

\vspace{-3mm}
\paragraph{The Joint MLE Approach.}
On the other hand, the joint maximum likelihood estimation (joint MLE) approach treats the subjects' latent attributes {$\{\bo a_i: 1\leq i\leq N\}$} as \textit{fixed effects} and directly incorporates them into the likelihood as unknown parameters. 
{This approach would naturally avoid the need to model the joint distribution of the exponentially many latent attribute configurations.}
For traditional problems, joint maximum likelihood estimation was usually inconsistent when the sample size goes to infinity (large $N$) but the number of observed variables is fixed (fixed $J$) \citep[][]{neymanscott1948}. 
But in modern large-scale educational assessments, data are collected in an ever-increasing scope involving many student test-takers (large $N$) and many test items (large $J$). For example, the Trends in International Mathematics and Science Study (TIMSS), a series of international assessments of the mathematics and science knowledge, involve students in over 50 countries and have nearly 800 assessment items in total \citep{timss2015enc}.
This scope of data provides new opportunities and requires new methods and understanding of latent variable modeling.

The joint MLE's unique feature of directly incorporating subjects' latent attributes $\bo a_i$'s as parameters to estimate has important and useful practical implications. 
In the applications of SLAMs to cognitive diagnosis \citep{von2019handbook}, estimating each student's latent skill profiles $\bo a_i$ is of great interest as this can provide useful diagnosis of a student's strengths and weaknesses to facilitate better follow-up instructions.
However, most statistical developments of SLAMs \citep{chen2015statistical, xu2017, xu2018jasa, gu2019jmlr} focused on the random-effect versions which marginalize out the $\bo a_i$'s in the likelihood and focus on estimating other quantities, so their identifiability and estimation results do not apply to $\{\bo a_i\}$.
The important questions of what conditions can guarantee the $\{\bo a_i\}$ is consistently estimable and how to estimate this for large-scale data remain unaddressed.
The joint MLE approach considered in this work directly targets at estimating the unknown $\{\bo a_i\}$ and $\QQ$, and we will use this framework to address the aforementioned questions.

Recently, for structured latent factor analysis with {continuous} latent variables, \cite{chenlizhang2019joint} and \cite{chenlizhang2020} studied the joint MLE approach and established identifiability and estimability of continuous latent factors in the double asymptotic regime when $N$ and $J$ both go to infinity. 
However, SLAMs form a different landscape with all the latent variables being discrete. 
Establishing theory for statistical estimability and consistency for discrete latent variables in full generality requires different arguments from those in \cite{chenlizhang2019joint, chenlizhang2020}. 
In addition, new computational methods need to be developed to address the unique challenge of estimation with a large number of discrete latent attributes.

\vspace{-3mm}
\paragraph{Our Contributions.}
We investigate the joint MLE approach to large-scale structured latent attribute analysis, 
and make the following theoretical and methodological contributions.
\begin{enumerate}
	\item We consider the triple-asymptotic regime where all of the $N$, $J$, and $K$ can grow to infinity, for the first time in the literature of SLAMs. In this scenario, we establish the estimability and consistency of both the binary factor loadings in the $\QQ$-matrix and the latent attribute profiles of the subjects $\{a_{i,k}\}$. We also derive finite-sample error bounds for the considered estimators.
	\item We propose a scalable approximate algorithm to compute the joint MLE for two-parameter SLAMs (defined in Example \ref{exp-two}).
	We also propose an efficient two-step estimation procedure for general multi-parameter SLAMs (defined in Example \ref{exp-mult}). This two-step procedure is inspired by investigating a common and interesting type of model oversimplification of SLAMs. When misspecifying a general multi-parameter SLAM to the two-parameter submodel, we show the oversimplified joint MLE can consistently recover part, or even all, of the latent structure under certain conditions.
\end{enumerate}

The rest of the paper is organized as follows. 
Section \ref{sec-setup} introduces the setup of SLAMs  and discusses its connections with other latent variable models.
Section \ref{sec-consist} defines the joint MLE and studies its statistical properties.
Section \ref{sec-comp} proposes scalable algorithms for computing the joint MLE.
Section \ref{sec-simu} provides simulation studies and Section \ref{sec-real} applies our method to a dataset from the TIMSS 2011 Austrian assessment.
Section \ref{sec-disc} gives a discussion.
Technical proofs and additional discussion on computation are included in the Supplementary Material.

\section{Setup of Structured Latent Attribute Models}\label{sec-setup}

\paragraph{General Formulation and Concrete Examples.}
In this paper, we focus on SLAMs for multivariate binary data, which are ubiquitously encountered in educational assessments (correct/wrong answers), social science survey responses (yes/no responses), or biomedical and epidemiological diagnostic tests (positive/negative results).
For $N$ subjects and $J$ variables, collect the observed data in a $N\times J$ binary matrix $\mathbf R=(r_{i,j})$, where $r_{i,j}=1$ or 0 denotes whether the $i$th subject gives a positive response to the $j$th variable. 
Suppose there are $K$ binary latent attributes, then the $J\times K$ binary loading matrix $\QQ = (q_{j,k})$ encodes how the $J$ observed variables depend on the $K$ latent attributes. 
The $N\times K$ binary matrix $\AA = (a_{i,k})$ that stores the latent attribute profiles for the $N$ subjects. 
Both $\QQ$ and $\AA$ have binary entries, where $q_{j,k}=1$ or 0 represents whether the $j$th test item depends on the $k$th latent attribute, and $a_{i,k}=1$ or 0 represents whether the $i$th individual possesses the $k$th attribute.
Generally, a SLAM is a probabilistic model with discrete structures $\QQ$, $\AA$, and additional continuous parameters to specify the generative process of the response data $\RR$.

Denote the additional continuous parameters needed to complete the model specification by $\bo\Theta=\{\ttt_1,\ldots,\ttt_J\}$. Each observed variable $j$ has its continuous parameter vector which we generically denote by $\ttt_j$, whose form depends on the specific model and will be made concrete in Examples \ref{exp-two}--\ref{exp-mult}.
Each observed $r_{i,j}$ follows a Bernoulli distribution with parameter $f(\aa_i,\qq_j,\ttt_j)$ as a function of $\aa_i$, $\qq_j$, and $\ttt_j$.
Given the subjects' latent attribute matrix $\AA$, binary loading matrix $\QQ$, and parameters $\bo\Theta$, the observed responses are assumed to be conditionally independent.
In summary, a SLAM postulates the following statistical model,
\begin{align}\label{eq-bernoulli}
    (r_{i,j}\mid \AA,\QQ,\TT) 
    &\sim 
    \text{Bernoulli}(f(\aa_i,\qq_j,\ttt_j));
    \\ \label{eq-general}
	\mathbb P(\RR \mid \AA,\QQ,\TT)
	&=
	\prod_{i=1}^N \prod_{j=1}^J 
	\left( f(\aa_i,\qq_j,\ttt_j) \right)^{r_{i,j}}
	\left( 1 - f(\aa_i,\qq_j,\ttt_j) \right)^{1 - r_{i,j}}.
\end{align}

\color{black}
\begin{figure}[h!]\centering
\resizebox{0.9\textwidth}{!}{%
\begin{tikzpicture}[scale=1.9]

    \draw [very thick] (0,0) rectangle (2.4/2, 3.2/2);
    
    \filldraw [fill=black!20!white, draw=black!40!black] (0,0) rectangle (2.4/2, 3.2/2);
    
    \draw [step=0.4/2, very thin, color=gray] (0,0) grid (2.4/2, 3.2/2);

    \filldraw [fill=white] (0.4/2, 2.0/2) rectangle (0.8/2, 2.4/2);
    \draw (0.6/2, 2.2/2)  node[text=black] {{\tiny $r_{ij}$}};

    \draw (1.2/2, -0.3) node {{\tiny{$\mathbf R \in \{0,1\}^{N\times J}$}}};

    \draw (3.4/2, 1.7/2) node {\tiny{$\text{probabilistic}\atop\text{model}$}};
    \draw (3.4/2, 1.4/2) node {\tiny{$\Longleftarrow$}};
    \draw (3.4/2, 1.2/2) node {\tiny{$f(\bo a_i,\qq_j,\ttt_j)$}};

    \draw [very thick] (4.4/2, 0) rectangle (5.2/2, 3.2/2);
    \filldraw [fill=black!20!white,draw=black!40!black] (4.4/2,0) rectangle (5.2/2, 3.2/2);
    \draw [step=0.4/2, very thin, color=gray] (4.4/2, 0) grid (5.2/2, 3.2/2);

    \draw (4.1/2, 2.2/2) node[text=black] {{\tiny $\bo a_{i}$}};
    
    \filldraw [fill=white] (4.4/2, 2.0/2) rectangle (4.8/2, 2.4/2);
    \draw (4.6/2, 2.2/2) node[text=black] {{\tiny $a_{i1}$}};
    %
    \filldraw [fill=white] (4.8/2, 2.0/2) rectangle (5.2/2, 2.4/2);
    \draw (5/2, 2.2/2) node[text=black] {{\tiny $a_{i2}$}};
    
    \draw (4.8/2, -0.3) node {{\color{black}\tiny{$\mathbf A \in \{0,1\}^{N\times K}$}}};
    
    \draw (6/2, 1.6/2) node {\tiny{$\text{certain}\atop\text{function}$}};
    
    \draw [very thick] (6.8/2, 1.2/2) rectangle (9.2/2, 2/2);
    \filldraw [fill=black!20!white, draw=black!40!black] (6.8/2, 1.2/2) rectangle (9.2/2, 2/2);
    
    \draw [step=0.4/2, very thin, color=gray] (6.8/2, 1.2/2) grid (9.2/2, 2/2);
    
    \draw (7.4/2, 2.3/2) node[text=black] {{\tiny $\bo q_{j}^\top$}};
    
    \filldraw [fill=white] (7.2/2, 1.6/2) rectangle (7.6/2, 2/2);
    \draw (7.4/2, 1.8/2) node[text=black] {{\tiny $q_{j1}$}};
    %
    \filldraw [fill=white] (7.2/2, 1.2/2) rectangle (7.6/2, 1.6/2);
    \draw (7.4/2, 1.4/2) node[text=black] {{\tiny $q_{j2}$}};

    \draw (8/2, -0.3) node {{\tiny{$\QQ^\top \in \{0,1\}^{K\times J}$}}};
    
\end{tikzpicture}
} 
\caption{A visualization of a SLAM as taking $\AA$ and $\QQ$ as input in a probabilistic model and then generating the data $\RR$. The $K$ equals 2 in the figure.
The entry $r_{ij}$ follows a Bernoulli distribution with parameter $f(\bo a_i, \bo q_j, \ttt_j)$, which is a function of the $i$th row of $\AA$ (denoted by $\bo a_i$), the $j$th row of $\QQ$ (denoted by $\bo q_j$), and continuous parameters $\ttt_j$.
}
\label{fig-matrixfac}
\end{figure}

Figure \ref{fig-matrixfac} gives a visualization of a SLAM, making clear how the unknown binary matrices $\AA$ and $\QQ$ underlie the data generating process.
In this paper, we treat both $\AA$ and $\QQ$ as unknown fixed parameters and consider the large-scale scenarios where the number of subjects $N$, the number of observed variables $J$, and the number of latent attributes $K$ all can go to infinity, that is, a triple-asymptotic regime.

We next review two main types of SLAMs widely adopted in the cognitive diagnostic modeling literature: the two-parameter models and the multi-parameter models.

\begin{example}[Two-Parameter SLAMs]\label{exp-two}
For each item $j$, a two-parameter SLAM compactly uses two distinct Bernoulli parameters to model $r_{i,j}$, with $\ttt_j = (\theta_j^+,\theta_j^-)$.
There are two different types of two-parameter SLAMs, the Deterministic Input Noisy output ``And'' (DINA) model proposed in \cite{junker2001cognitive}, and the Deterministic Input Noisy output ``Or'' (DINO) model proposed in \cite{templin2006measurement}.
Under DINA and DINO models, the Bernoulli parameter $f(\aa_i,\qq_j,\ttt_j)$ in \eqref{eq-bernoulli} takes the following specific forms,
\begin{align*}
	f^{\text{DINA}}(\aa_i,\qq_j,\ttt_j)
	&=
	\begin{cases}
		\theta_j^+, &\text{if} ~~ a_{ik}=1 ~\text{for all}~k~\text{such that} ~ q_{jk}=1;\\
		\theta_j^-, &\text{otherwise}.
	\end{cases}
	\\[3mm]
	f^{\text{DINO}}(\aa_i,\qq_j,\ttt_j)
	&=
	\begin{cases}
		\theta_j^+, &\text{if} ~~ a_{ik}=1 ~\text{for at least one}~k~\text{such that} ~ q_{jk}=1;\\
		\theta_j^-, &\text{otherwise}.
	\end{cases}
\end{align*}
DINA is often used in educational testing with latent skills as attributes, and DINO often in psychiatric diagnosis with mental disorders as attributes \citep{dela2018}.
\end{example}

\cite{chen2015statistical} established duality between the DINA and DINO models with $\mathbb P(r_{ij}=1\mid \aa_i=\aaa,~ \qq_j,~ \ttt_j,~ \text{DINO})  = 1 - \mathbb P(r_{ij}=1\mid \aa_i=\one_K - \aaa,~ \qq_j,~ \ttt_j,~ \text{DINA})$ for any $\aaa\in\{0,1\}^K$, where $\one_K$ is a $K$-dimensional all-one vector. Thanks to this duality, identifiability and estimation results developed under DINA easily carry over to the DINO case. So without loss of generality, next we focus on the DINA model when studying two-parameter SLAMs.

\begin{example}[Multi-Parameter SLAMs]\label{exp-mult}
Unlike a two-parameter model, a multi-parameter SLAM models each observed variable $j$ using potentially more than two Bernoulli parameters. 
The $f(\aa_i,\qq_j,\ttt_j)$ in \eqref{eq-bernoulli} now takes the form
\begin{align}\label{eq-alleff}
	f^{\mult}(\aa_i,\qq_j,\ttt_j) = 
	f\Big( & \mu_{j,0} + \sum_{k=1}^K \mu_{j,k} (q_{jk} a_{ik}) + \sum_{1\leq k_1 < k_2\leq K} \mu_{j,k_1k_2} (q_{jk_1} a_{ik_1}) (q_{jk_2}a_{ik_2})
	\\ \notag
	& 
	+ \cdots + \mu_{j,12\cdots K} \prod_{k=1}^K (q_{jk} a_{ik})
	\Big)
\end{align}
where different link functions $f(\cdot)$ lead to different specific models; when $f(\cdot)$ is the identity, \eqref{eq-alleff} gives the Generalized DINA model \citep[GDINA,][]{dela2011}; when $f(\cdot)$ is the logistic function, \eqref{eq-alleff} gives the Log-linear Cognitive Diagnosis Models \citep[LCDMs,][]{HensonTemplin09}; see also the General Diagnostic Models \citep[GDMs,][]{davier2008general}. 
Note that in \eqref{eq-alleff}, not all the $\mu$-coefficients are meaningful and need to be incorporated into the model; for example, if $q_{j,k} = 0$ then $\mu_{j,k}$ is not needed and if $q_{jk_1} q_{jk_2} = 0$ then $\mu_{j,k_1k_2}$ is not needed, etc.
In multi-parameter SLAMs, the continuous parameter vector $\ttt_j$ is the collection of all the meaningful $\mu_{j,\bcdot}$-coefficients.
Multi-parameter models under \eqref{eq-alleff} are quite general, as they incorporate all the possible main and interaction effects of the meaningful latent attributes. 
\end{example}

Examples \ref{exp-two} and \ref{exp-mult} imply that the two-parameter model can be viewed as a submodel of the multi-parameter model.
To see this, just set all the $\mu$-coefficients in \eqref{eq-alleff} to zero except $\mu_{j,0}$ and the highest order term $\mu_{j, \,\text{high}} := \mu_{j, \,\{k:\, q_{jk}=1\}}$, then $\theta_{j}^- = f(\mu_{j,0})$ and $\theta_{j}^+ = f(\mu_{j,0} + \mu_{j,\, \text{high}})$ correspond to the two parameters for variable $j$ defined in Example \ref{exp-two}.

\paragraph{Connections between SLAMs and Other Latent Variable Models.}
We briefly review the family of latent variable models and locate SLAMs within this context.
Latent variable models can be categorized into four types according to the nature of the observed and the latent variables. 
When the observed and latent variables are both continuous, the factor analysis \citep{anderson1956fa} has been widely used. 
When the observed variables are discrete but the latent variables are continuous, the Item Response Theory (IRT) models \citep{embretson2013item, reckase2009mirt} are typical modeling choices.
On the other hand, to model continuous observed data using a discrete latent variable, researchers have employed mixture models such as the Gaussian mixtures \citep{reynolds2000gmm}. 
Finally, when both the observed variables and the latent one are discrete, the latent class model has been a popular modeling tool since decades ago \citep{lazarsfeld1968latent}.

SLAMs can be viewed as a modern generalization of latent class models (LCMs), in that both adopt discrete latent structure to model discrete data. 
Despite this similarity, the following two key characteristics distinguish SLAMs from traditional LCMs: 
(a) the discrete latent constructs in SLAM are multidimensional instead of  unidimensional as in LCMs;
and (b) a SLAM models dependence of the observed variables on the latent ones by a binary loading matrix $\QQ$.
Figure \ref{fig-graph} provides graphical model representations of LCMs and SLAMs that highlight their connections and differences.
Both LCMs and SLAMs assume the multivariate categorical observations $\bo r_i=(r_{i,1},\ldots,r_{i,J})$ are conditionally independent given the latent part.
When modeling the observed $\bo r_i$, an LCM in Figure \ref{fig-graph}(a) adopts a unidimensional latent variable $z_i\in\{1,\ldots,C\}$ while a SLAM in Figure \ref{fig-graph}(b) adopts a $K$-dimensional binary latent vector $\bo a_i = (a_{i,1},\ldots,a_{i,K}) \in \{0,1\}^K$. 
Therefore, an LCM does not necessarily distinguish the $C$ latent classes by definition, 
while a SLAM naturally defines $2^K$ distinct latent classes, each as a pattern detailing the statuses of $K$ fine-grained traits.

Additionally and perhaps more importantly, a SLAM has the key $J\times K$ binary loading matrix $\QQ=(q_{j,k})$, where $q_{j,k}=1$ means observed $r_{i,j}$ depends on the latent $\alpha_{i,k}$ and $q_{j,k}=0$ otherwise. 
Such dependence encoded in $\QQ$ can be equivalently represented as a bipartite graph from the latent to the observed variables as illustrated in Figure \ref{fig-graph}(b).
{As shown in this figure, each observed variable (shaded node) can depend on multiple different latent variables (white nodes). In the literature, the multidimensional IRT models proposed in \cite{bartolucci2007class} and \cite{bacci2016two} also assume multiple discrete latent variables explain multivariate categorical data. Compared with those approaches, SLAMs are more general in that each observable is not restricted to depending on only one latent variable, but rather can depend on multiple ones. This is exactly reflected by the fact each row of $\QQ$ can contain an arbitrary number of ``1''s. In the application to educational assessment, this means each test item can target multiple different latent skills.
Thus the matrix $\QQ$ imposes meaningful and flexible constraints on the parameters to enhance model interpretability.}

\color{black}
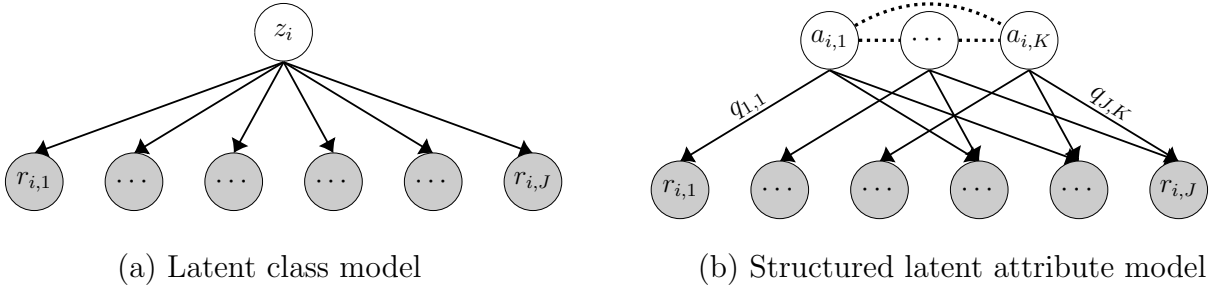
\begin{figure}[h!]
	\centering

\begin{minipage}[c]{0.48\textwidth}
\centering
    	
\resizebox{0.95\textwidth}{!}{%
\begin{tikzpicture}[scale=1.9]

\draw[arr](2,0.96) -- (0,0.23) node [midway,above=-0.12cm,sloped] {};
\draw[arr](2,0.96) -- (0.8,0.23) node [midway,above=-0.12cm,sloped] {};
\draw[arr](2,0.96) -- (1.6,0.23) node [midway,above=-0.12cm,sloped] {};
\draw[arr](2,0.96) -- (2.4,0.23) node [midway,above=-0.12cm,sloped] {};
\draw[arr](2,0.96) -- (3.2,0.23) node [midway,above=-0.12cm,sloped] {};  
\draw[arr](2,0.96) -- (4,0.23) node [midway,above=-0.12cm,sloped] {};  

    \node (h2)[hidden] at (2.0, 1.2) {$z_i$};
    \node (h1)[circle,minimum size=20pt,inner sep=0pt, fill=white] at (1.2, 1.2) {};
    \node (h3)[circle,minimum size=20pt,inner sep=0pt, fill=white] at (2.8, 1.2) {};
    
    \node (v1)[neuron] at (0, 0) {$r_{i,1}$};
    \node (v2)[neuron] at (0.8, 0) {$\cdots$};
    \node (v3)[neuron] at (1.6, 0) {$\cdots$};
    \node (v4)[neuron] at (2.4, 0) {$\cdots$};
    \node (v5)[neuron] at (3.2, 0) {$\cdots$};
    \node (v6)[neuron] at (4, 0) {$r_{i,J}$};

    \fill [dotted,line width=0.5mm,color=white] (h1) -- (h2);
    \fill [dotted,line width=0.5mm,color=white] (h2) -- (h3);

\end{tikzpicture}
}

\end{minipage}
	\hfill
\begin{minipage}[c]{0.48\textwidth}\centering
	
\resizebox{0.95\textwidth}{!}{%
\begin{tikzpicture}[scale=1.9]

\draw[arr](1.2,0.96) -- (0,0.23) node [midway,above=-0.12cm,sloped] {$q_{1,1}$};
\draw[arr](1.2,0.96) -- (2.4,0.23) node [midway,above=-0.12cm,sloped] {};
\draw[arr](1.2,0.96) -- (3.2,0.23) node [midway,above=-0.12cm,sloped] {};  
    
\draw[arr](2,0.96) -- (0.8,0.23) node [midway,above=-0.12cm,sloped] {};
\draw[arr](2,0.96) -- (2.4,0.23) node [midway,above=-0.12cm,sloped] {};
\draw[arr](2,0.96) -- (4,0.23) node [midway,above=-0.12cm,sloped] {};
    
\draw[arr](2.8,0.96) -- (1.6,0.23) node [midway,above=-0.12cm,sloped] {};
\draw[arr](2.8,0.96) -- (3.2,0.23) node [midway,above=-0.12cm,sloped] {};  
\draw[arr](2.8,0.96) -- (4,0.23) node [midway,above=-0.12cm,sloped] {$q_{J,K}$}; 

    \node (v1)[neuron] at (0, 0) {$r_{i,1}$};
    \node (v2)[neuron] at (0.8, 0) {$\cdots$};
    \node (v3)[neuron] at (1.6, 0) {$\cdots$};
    \node (v4)[neuron] at (2.4, 0) {$\cdots$};
    \node (v5)[neuron] at (3.2, 0) {$\cdots$};
    \node (v6)[neuron] at (4, 0) {$r_{i,J}$};

    \node (h1)[hidden] at (1.2, 1.2) {$a_{i,1}$};
    \node (h2)[hidden] at (2.0, 1.2) {$\cdots$};
    \node (h3)[hidden] at (2.8, 1.2) {$a_{i,K}$};

    \draw[dotted,line width=0.5mm] (h1) -- (h2);
    \draw[dotted,line width=0.5mm] (h2) -- (h3);
    \path (h1) edge[dotted,bend left,line width=0.5mm] node [right] {} (h3);

    \end{tikzpicture}
    }	

	\end{minipage}

	\bigskip
	\begin{minipage}[c]{0.45\textwidth}\centering
    (a) Latent class model
	\end{minipage}
    \hfill	
    \begin{minipage}[c]{0.45\textwidth}\centering
    (b) Structured latent attribute model
	\end{minipage}
	
	\caption{Graphical model representations of a traditional latent class model in (a) and a structured latent attribute model in (b).
	All the variables are discrete and the directed solid edges encode conditional dependences.
	In $\QQ=(q_{j,k})_{J\times K}$, the $q_{j,k}=1$ if there is a directed edge from latent attribute $a_{i,k}$ to observed variable $r_{i,j}$.}
	\label{fig-graph}
\end{figure}


The two key features of SLAMs, the \textit{multidimensional discreteness} and \textit{structured dependence}, are motivated by the needs of modern scientific applications, and in turn bring great flexibility and interpretability to real-world modeling practices. 
The latent $\bo a_i = (a_{i,1}, \ldots, a_{i,K})$  summarizes subject $i$'s statuses of multiple latent traits; in educational testing, the skill profile ${\bo a}_i$ provides useful diagnostic feedback by detailing what latent skills each student has/has not mastered; in medical or epidemiological diagnosis, the ${\bo a}_i$ details each patient's presence/absence of certain latent diseases or pathogens.
Such fined-grained profiles form a foundation for designing effective follow-up intervention or treatment.
Furthermore, the structural dependence encoded in the $\QQ$-matrix can represent practitioners' prior knowledge, facilitate dimension reduction, and enhance interpretability. 
In summary, SLAMs enable uncovering hidden fine-grained scientific information, providing model-based clustering of subjects, and facilitating better intervention. These advantages distinguish SLAMs from traditional models such as IRT models or LCMs, and make SLAMs and their variants suitable for a variety of modern applications, including not only education and psychology \citep{chen2015statistical, xu2018jasa, gu2019jmlr}, but also epidemiology \citep{ o2019causes} and biomedicine \citep{ni2020double, chen2020itr}.

\section{Joint MLE and Its Statistical Properties}\label{sec-consist}

\paragraph{Definition of Joint MLE.}
We next formally introduce the joint maximum likelihood estimator.
Under the general setup in \eqref{eq-general}, the log of the joint likelihood of $(\AA,\,\QQ,\,\TT)$ is
\begin{align}\label{eq-loglike}
	\ell^{\text{m}}(\AA,\,\QQ,\,\TT\mid\RR)
	=
	\sum_{i=1}^N \sum_{j=1}^J \left[r_{i,j}\log(f^{\text{m}}(\aa_i,\qq_j,\ttt_j)) + (1-r_{i,j})\log(1 - f^{\text{m}}(\aa_i,\qq_j,\ttt_j))\right],
\end{align}
where the superscript ``m'' denotes a specific model, e.g., a two-parameter or multi-parameter model reviewed in Examples \ref{exp-two}-\ref{exp-mult}.
The joint MLE approach has an important feature that the subjects' latent attributes $\AA=(a_{i,k})$ are incorporated as unknown parameters to estimate. This is different from the marginal MLE which marginalize out the $a_{i,k}$'s and focus on estimating other quantities. 
Indeed, in the applications of SLAMs to cognitive diagnostic modeling \citep{von2019handbook}, inferring the students' latent skill profiles is of great interest as they can provide useful diagnosis of a student's strengths and weaknesses for navigating better follow-up instructions.
However, most statistical developments of SLAMs \citep{chen2015statistical, xu2017, xu2018jasa, gu2019jmlr} focused on the random-effect versions, so their results typically do not apply to the underlying $\AA$.
The important questions of what conditions can guarantee the $\AA$ is consistently estimable and how to estimate it for large-scale data remain unaddressed.
To this end, the joint MLE approach considered in this work directly targets at estimating the unknown quantities $\AA$ and $\QQ$ and provides a natural basis for addressing these questions.

Given the general log-likelihood in \eqref{eq-loglike}, define the joint MLE under a specific SLAM as
\begin{align}\label{eq-prob1}
	(\hat\AA,\,\hat\QQ,\,\hat\TT)^{\text{m}} = 
	&~\argmax_{(\AA,\QQ,\TT)} \ell^{\text{m}}(\RR;\,\AA,\,\QQ,\, \TT)\\ 
	\label{eq-upper}
	&~~\text{subject to fitting a $K$-attribute specified SLAM with $\sum_{k=1}^K\hat q_{j,k} \leq B_j$,}
\end{align}
where $B_j$'s are prespecified upper bounds depending on the model, imposed for theoretical identifiability reasons.
{An interesting study \cite{bonhomme2015grouped} also considered the fixed-effect estimation of discrete latent heterogeneity, motivated by panel data in econometrics. In the regime where the number of subjects $N$ and number of time points $T$ both go to infinity, \cite{bonhomme2015grouped} considered \textit{continuous} data and \textit{unidimensional} discrete heterogeneity. Different from that, in this work when $N$ and $J$ go to infinity, we consider multivariate \textit{categorical} data and \textit{multidimensional} discrete latent features. Therefore, the least squares estimation criterion used in \cite{bonhomme2015grouped} is not applicable here, and we need to seek estimators based on the specific likelihood functions.}

We next make several important remarks about the nuances and differences between estimating two-parameter and multi-parameter SLAMs, in terms of both identifiability and computation.

\begin{remark}[Solve \eqref{eq-prob1} under a two-parameter model]\label{rmk-2p}
\normalfont{
Based on the setup in Example \ref{exp-two}, the two-parameter log-likelihood in \eqref{eq-prob1} can be written in the following explicit form, 
\begin{align}\label{eq-ll-2p}
\ell^{\text{\normalfont{two}}}(\QQ,\,\AA,\,\TT\mid\RR)
=& \sum_{i=1}^N \sum_{j=1}^J \Big[ r_{i,j}\Big(  \prod_{k=1}^K a_{i,k}^{q_{j,k}} \log \theta^+_j + (1-\prod_{k=1}^K a_{i,k}^{q_{j,k}}) \log \theta^-_j \Big)  \\ \notag
&  + (1-r_{i,j}) \Big(  \prod_{k=1}^K a_{i,k}^{q_{j,k}} \log(1- \theta^+_j) + (1-\prod_{k=1}^K a_{i,k}^{q_{j,k}}) \log(1- \theta^-_j) \Big)\Big].
\end{align}
Under the two-parameter likelihood, when solving \eqref{eq-prob1} for $(\hat\AA,\,\hat\QQ,\,\hat\TT)^{\text{\normalfont{two}}}$, we impose a natural constraint $\hat\theta_{j}^+ > \hat\theta_{j}^-$ to ensure identifiability \citep{junker2001cognitive, gu2019dina}. 
A careful inspection of the special combinatorial form under the two-parameter DINA model \eqref{eq-ll-2p} reveals that the upper bound $B_j$ in the optimization problem \eqref{eq-prob1} can be taken as $B_1=\cdots=B_J=B^{\two}=\infty$. That is, there is essentially no need to constrain the number of ``1''s in the estimation of $\QQ$.
To solve \eqref{eq-prob1} under the two-parameter likelihood, we propose a scalable approximate EM-flavor algorithm. This algorithm treats the unknown discrete structures $(\QQ, \AA)$ as missing data to impute in an approximate E (Expectation) step which is based on a few Gibbs samples, and treats continuous parameters $\bo\Theta=\{\bo\theta_j^+, \bo\theta_j^-: 1\leq j\leq J\}$ as model parameters to update in an M (Maximization) step; see Section \ref{sec-est-2p} for details.
}
\end{remark}

\begin{remark}[Solve \eqref{eq-prob1} under a multi-parameter model]\label{rmk-qid}
\normalfont{
Under a multi-parameter SLAM, denote the true binary loading matrix by $\QQ^\true=(q^\true_{j,k})$, and we take the upper bound in \eqref{eq-upper} to be $B_j = B^{\mult}_j={\sum}_{k=1}^K q_{j,k}^\true$.
Under a multi-parameter model \eqref{eq-alleff}, $\QQ$ captures the sparsity structure of the underlying continuous parameters $\bo\mu_j$,
so the constraint ${\sum}_{k=1}^K\hat q_{j,k} \leq B_j = {\sum}_{k=1}^K q_{j,k}^\true$ in \eqref{eq-upper} resembles the $L_0$ constraint on regression coefficients in regression problems for variable selection. 
Theoretically, such a constraint is necessary to ensure $\qq_j$'s are identifiable under a multi-parameter SLAM. 
To see this, consider a toy example with $\qq^\true_j=(1,0)$, then the multi-parameter model with an identity link in Example 2 gives
    \begin{align*}
        \mathbb P(r_{i,j}=1\mid \aa_i,\qq_j^\true,\bo\mu_j) 
        =&~ \mu_{j,0} + \mu_{j,1}q^\true_{j,1}a_{i,1} + \mu_{j,2}q^\true_{j,2}a_{i,2} + \mu_{j,12}(q^\true_{j,1}a_{i,1})(q^\true_{j,2}a_{i,2})\\
        =&~\mu_{j,0} + \mu_{j,1}a_{i,1};
    \end{align*}
    while with an alternative $\tilde\qq_j=(1,1)$ and $\tilde{\bo\mu}_j=(\tilde\mu_{j,0}, \tilde\mu_{j,1}, \tilde\mu_{j,2}, \tilde\mu_{j,12})$ where $\tilde\mu_{j,2}=\tilde\mu_{j,12}=0$, 
    \begin{align*}
        \mathbb P(r_{i,j}=1\mid \aa_i, \tilde\qq_j, \tilde{\bo\mu}_j) 
        =&~ \mu_{j,0} + \tilde\mu_{j,1}\tilde q_{j,1}a_{i,1} + \tilde\mu_{j,2}\tilde q_{j,2}a_{i,2} + \tilde\mu_{j,12}(\tilde q_{j,1}a_{i,1})(\tilde q_{j,2}a_{i,2})\\
        =&~ \tilde\mu_{j,0} + \tilde\mu_{j,1}a_{i,1}
        = \mu_{j,0} + \mu_{j,1}a_{i,1}.
        \quad(\text{if}~~\tilde\mu_{j,0} = \mu_{j,0}~\text{and}~\tilde\mu_{j,1} = \mu_{j,1})
    \end{align*}
    This example illustrates that despite $\qq^\true_j=(1,0) \neq \tilde\qq_j=(1,1)$, the distribution of $r_{i,j}$ given the two are identical, indicating non-identifiability. 
    Therefore theoretically, we need to constrain the number of ``1''s in $\QQ$ for identifiability when the model is multi-parameter.
    
    Although the constraint ${\sum}_{k=1}^K\hat q_{j,k} \leq {\sum}_{k=1}^K q_{j,k}^\true$ is needed for theoretical identifiability under multi-parameter models as stated above, practically, the constrained optimization problem \eqref{eq-upper} can be replaced by an unconstrained one by imposing an appropriate penalty.
Indeed, our estimation method for multi-parameter SLAMs does not assume knowledge of the true values of ${\sum}_{k=1}^K q_{j,k}^\true$, but rather adopts marginal screening and variable selection approaches to directly estimate the entries of $\QQ$ in a second regression stage, following a first stage of approximate estimation of latent attributes in $\AA$; see Section \ref{sec-est-mult} for details.
    }
\end{remark}

\paragraph{Theoretical Properties of the Joint MLE.}
From now on, we consider the model sequence indexed by $(N,J,K)$, where each of  $N,~ J,$ and $K$ can go to infinity.
Thus far we have treated $\bo\Theta$ as a generic notation for continuous parameters in any specific SLAM. 
For technical convenience, we next fix the notation of $\bo\Theta$ as a $J\times 2^K$ matrix $\bo\Theta = (\theta_{j, \aaa})$, where 
\begin{equation}\label{eq-thetaj}
   \theta_{j,\aaa} = \mathbb P(r_{i,j}=1\mid \aa_i=\aaa,\; \qq_j,\;\text{specific model})
\end{equation}
for $j\in[J]$ and $\aaa\in\{0,1\}^K$. The  expressions of $\theta_{j,\aaa}$ under specific two- or multi-parameter models can be easily derived based on Examples \ref{exp-two}--\ref{exp-mult}. 
The following assumptions are made on the true parameters $(\TT^{\true}, \QQ^{\true}, \AA^{\true})$ that generate the data.

\begin{assumption}
	\label{assume-pij}
There exists a finite number $d\geq 2$ such that
	\begin{equation}\label{eq-pijb}
		\frac{1}{J^d} \leq 
	\min_{1\leq j\leq J,\atop \aaa\in\{0,1\}^K} \theta^{\true}_{j,\aaa}
	\leq 
	\max_{1\leq j\leq J,\atop \aaa\in\{0,1\}^K} \theta^{\true}_{j,\aaa}
	\leq 1-\frac{1}{J^d}.	
	\end{equation}
\end{assumption}

\begin{assumption}\label{cond-gap-2p}
For two-parameter SLAMs, suppose $\theta^{+,\true}_{j} > \theta^{-,\true}_{j}$ for each $j$ and that there exists $\{\beta_{J}\}\subseteq(0,\infty)$ such that
	\begin{align}
\label{eq-as-gen2p}
\min_{1\leq j\leq J}~
  \left(\theta^{+,\true}_{j} - \theta^{-,\true}_{j}
        \right)^2 \geq &~ \beta_J.
	\end{align}
For multi-parameter SLAMs, there exists $\{\beta_{J}\}\subseteq(0,\infty)$ such that
	\begin{align}
\label{eq-as-gen2}
\min_{1\leq j\leq J}~
  \left\{ \min_{
              \aaa\circ\qq_j \neq \aaa'\circ\qq_j} 
 \left(\theta^{\true}_{j,\aaa} - \theta^{\true}_{j,\aaa'}\right)^2
        \right\}\geq &~\beta_J,
	\end{align}
where $\aaa\circ\qq_j = (\alpha_{1}q_{j,1}, \ldots, \alpha_{K}q_{j,K})$ denotes element-wise product of binary vectors $\aaa$ and $\qq_j$.
\end{assumption}

\begin{assumption}
	\label{cond-id}
There exist $\{\delta_{J}\}$, $\{p_{N}\}\subseteq(0,\infty)$ and a constant $\epsilon>0$ such that
	\begin{align}
	\label{eq-infq}
		\min_{1\leq k\leq K}\frac{1}{J} \sum_{j=1}^J I(\qq^{\true}_j=\ee_k) 
		\geq &~ \delta_{J};\\
		\label{eq-pmin}
		\min_{\aaa\in\{0,1\}^K}
         \frac{1}{N}
		\sum_{i=1}^N I(\aa^{\true}_i=\aaa) \geq &~ p_{N} \geq \frac{\epsilon}{2^K}.
	\end{align}
	Also assume $\sum_{k=1}^K q_{j,k}^{\true} \leq K_0$ for a constant $K_0$.
\end{assumption}

Note that by writing all the lower bounds in the above assumptions as depending on a subscript $J$ or $N$, we indeed allow them to go to zero as $J$ and $N$ go to infinity.
This type of assumptions distinguish the current theoretical investigation from all the previous works on SLAMs \citep[e.g.,][]{chen2015statistical, xu2018jasa, gu2019jmlr}. 
As to be shown in the following theorems, as long as the rate at which these $\beta_J$, $\delta_J$, and $p_N$ go to zero satisfy some mild requirements, consistency of joint MLE can be ensured.

More detailed discussions on the assumptions are in order. 
Assumption \ref{assume-pij} is a very mild condition on the Bernoulli parameters $\theta_{j,\aaa}$'s. 
Assumption \ref{cond-gap-2p} lower bounds the gap of Bernoulli parameters for different latent classes, under the two- and multi-parameter SLAM, respectively.
Such a gap $\beta_J$ measures how separated the latent classes are and hence quantifies how strong the signals are.
This assumption has its counterpart in the finite-$J$ regime; e.g., \cite{xu2018jasa} and \cite{gu2019dina, gu2020partial} imposed $\beta_J>0$ when studying identifiability. Instead, here we allow $\beta_J\to 0$ and establish estimability and consistency.
Assumption \ref{cond-id} is about the discrete structures $\QQ$ and $\AA$, where \eqref{eq-infq} resembles a requirement that ``$\QQ$ should contain an identity submatrix $I_K$'' in the studies with finite $J$ \citep{chen2015statistical, xu2018jasa}.
Here with $J\to\infty$, a finite number of submatrices $I_K$ in $\QQ$ may not suffice for estimability and consistency, and \eqref{eq-infq} requires $\QQ$ to contain an increasing number of $I_K$'s as $J$ grows.
In the literature, \cite{wang2015nonpa} made a similar assumption on $\QQ$ when establishing consistency of a nonparametric estimator for cognitive diagnostic models, and \cite{chenlizhang2020} also imposed a similar requirement on the loading matrix when studying continuous latent factor models.
{Note that this requirement \eqref{eq-infq} implies the matrix $\QQ$ does not contain any all-zero column.
Theoretically, if $\QQ$ contains some all-zero column, then the model is not identifiable. This is because dropping this all-zero column of $\QQ$ and reducing the number of latent attributes by one will give the same distribution of the observables.}
As for another requirement \eqref{eq-pmin} in Assumption \ref{cond-id}, it implies the $2^K$ latent patterns do not exhibit too uneven proportions in the sample. A resemblance for this requirement in random-effect SLAMs is $p_{\aaa}>0$ for all $\aaa\in\{0,1\}^K$, where $p_{\aaa}$ denotes the population proportion of latent pattern $\aaa$.

Denote $M = (NJ)^{-1}\sum_{i=1}^N\sum_{j=1}^J \MP(r_{i,j}=1\mid \text{true model})$, the average positive response rate in the sample.
The following main theorem establishes the consistency and bounds the rate of convergence of joint MLE in recovering the latent structure.

\begin{theorem}[Consistency of joint MLE under either two- or multi-parameter model]
\label{thm-joint-both}
Consider either a two-parameter or a multi-parameter SLAM with $(\hat\QQ,\hat\AA)$ obtained from solving \eqref{eq-prob1}. 
When $N,J\to\infty$, suppose 
$\sqrt{J}=O\left(\sqrt{M}N^{1-c}\right)$ for some small constant $c\in(0,1)$ and $K=o(MJ\log J)$.
Under Assumptions \ref{assume-pij}, \ref{cond-gap-2p}, and \ref{cond-id}, the following two conclusions hold.
\begin{itemize}
\item[(a)] There is
\begin{align}
\notag
\frac{1}{NJ}
\sum_{i=1}^N\sum_{j=1}^J \Big(\MP(r_{i,j}=1\mid \QQ^{\true},\AA^{\true},\TT^{\true}) - \MP(r_{i,j}=1\mid \hat \QQ,\hat\AA,\TT^{\true}) \Big)^2
= o_P\left(\frac{\gamma_{J}}{\beta_J}\right),	
\end{align}
where for a small positive constant $\epsilon>0$,
\begin{equation}\label{eq-defgamma}
	\gamma_{J} = \frac{(\log J)^{1+\epsilon}}{\sqrt{J}}\cdot\sqrt{M\log (2^K)}.
\end{equation}

\item[(b)] 
Up to a permutation of the $K$ latent attributes, there is
	\begin{align}\label{eq-errqa}
	\frac{1}{J}
	\sum_{j=1}^J I(\qq_j^{\true} \neq \hat\qq_j)
	= o_P\left(\frac{\gamma_{J}}{\beta_J \cdot p_N}\right),
	\qquad
	\frac{1}{N}\sum_{i=1}^N I(\hat\aa_i\neq \aa_i^{\true})
    = o_P\left(\frac{\gamma_{J}}{\beta_J \cdot \delta_J}\right).
\end{align}
\end{itemize}
\end{theorem}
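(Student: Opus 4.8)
The plan is to treat this as an M-estimation problem and to control the joint MLE through the induced matrix of fitted success probabilities. Write $\theta^{\true}_{i,j}=\mathbb{P}(r_{i,j}=1\mid \QQ^{\true},\AA^{\true},\TT^{\true})$ and, for the maximizer, $\hat\theta_{i,j}=f(\hat\aa_i,\hat\qq_j,\hat\ttt_j)$; here for any fixed discrete pair $(\AA,\QQ)$ the continuous parameters profile out to the within-class empirical response rates. First I would invoke the basic inequality $\ell^{\mathrm{m}}(\hat\AA,\hat\QQ,\hat\TT\mid\RR)\ge \ell^{\mathrm{m}}(\AA^{\true},\QQ^{\true},\TT^{\true}\mid\RR)$ and decompose the gap as the difference of the expected log-likelihoods, which equals $-\sum_{i,j}\mathrm{KL}\big(\mathrm{Ber}(\theta^{\true}_{i,j})\,\|\,\mathrm{Ber}(\hat\theta_{i,j})\big)$, plus a centered empirical fluctuation. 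Rearranging yields the central inequality that the average KL divergence between the true and fitted cell distributions is bounded by the centered empirical log-likelihood evaluated at the maximizer minus at the truth.

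The second step, which I expect to be the technical core, is a uniform bound on that fluctuation over the entire feasible set. The centered process is linear in the responses, $\sum_{i,j}(r_{i,j}-\theta^{\true}_{i,j})\,g_{i,j}$ with $g_{i,j}$ a difference of log-odds; Assumption \ref{assume-pij} keeps $|g_{i,j}|=O(\log J)$ and bounds each cell variance by $\theta^{\true}_{i,j}$, whose average is $M$. A Bernstein-type bound at a fixed parameter, combined with a union bound over the discrete structures and a covering of the continuous parameters, delivers the uniform control. The dominant complexity term is the $NK\log 2$ bits needed to specify $\AA$ (the row-sum cap $B_j\le K_0$ from Assumption \ref{cond-id} makes the $\QQ$-complexity subdominant), which is exactly what produces the factor $\sqrt{M\log(2^K)}$, while the per-item averaging and the log-odds bound produce the $(\log J)^{1+\epsilon}/\sqrt J$ factor; the scaling conditions $\sqrt J=O(\sqrt M N^{1-c})$ and $K=o(MJ\log J)$ are what render these deviations $o_P$. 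Pinsker's inequality then converts the average-KL bound into the prediction-error bound $\tfrac1{NJ}\sum_{i,j}(\hat\theta_{i,j}-\theta^{\true}_{i,j})^2=o_P(\gamma_J)$.

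To reach part (a) I would upgrade this bound on the \emph{fitted} probabilities to a bound on the \emph{true}-probability discrepancy of the estimated discrete structure, which is where the separation $\beta_J$ of Assumption \ref{cond-gap-2p} enters. The key observation is that within each estimated latent class for item $j$ the profiled constant $\hat\theta$ is the group mean, so if that class pools cells whose true probabilities are separated by at least $\sqrt{\beta_J}$, the optimal constant still pays a within-class variance of order $\beta_J$ per contaminated cell. Hence the number of cells whose estimated ideal response disagrees with the truth is at most of order (prediction error)$/\beta_J=o_P(\gamma_J/\beta_J)$, and since each summand in part (a) is bounded by $1$, the claimed rate $o_P(\gamma_J/\beta_J)$ follows once the labeling is fixed by the optimal permutation of attributes. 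The genuine obstacle here is ruling out likelihood-neutral misassignments, namely estimated classes that are \emph{systematically} occupied by a single wrong true class, for which the constant fit is accurate and the likelihood pays nothing; such configurations cannot be excluded item-by-item and must be controlled globally through the profiles $\hat\aa_i$ that are shared across all items.

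Finally, part (b) translates cell-level ideal-response recovery into recovery of individual rows and profiles by exploiting Assumption \ref{cond-id}. For the attribute profiles, the requirement \eqref{eq-infq} that at least $\delta_J J$ items satisfy $\qq_j=\ee_k$ means any subject whose estimated $k$-th attribute is wrong is misclassified on all those identity items, so a counting argument gives $\tfrac1N\sum_i I(\hat\aa_i\ne\aa^{\true}_i)=o_P(\gamma_J/(\beta_J\delta_J))$; symmetrically, the balance condition \eqref{eq-pmin} guarantees at least $p_N N$ subjects of each profile, so any item with a wrong $\hat\qq_j$ is detected across $\gtrsim p_N N$ cells, giving $\tfrac1J\sum_j I(\hat\qq_j\ne\qq^{\true}_j)=o_P(\gamma_J/(\beta_J p_N))$. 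The identity items also supply the bootstrap that resolves the obstacle in the previous step: because most attributes are correctly recovered, the identity-item classes are majority-correct, which is precisely what makes the remaining misassignments detectable and closes the argument.
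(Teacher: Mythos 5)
Your overall architecture matches the paper's: the basic inequality for the joint MLE, a decomposition of the centered profile log-likelihood into per-class KL terms plus a linear empirical process, Bernstein plus a union bound over the $(2^K)^N$ discrete configurations to obtain the uniform rate $\gamma_J$, Pinsker's inequality for part (a), and counting arguments driven by Assumption \ref{cond-id} for part (b). One technical difference worth noting: rather than covering the continuous parameters, the paper exploits the fact that each profiled $\hat\theta_{j,a}$ lives on the finite grid $\{0,1/n_{j,a},\dots,1\}$ and applies Chernoff--Hoeffding to each grid point (Lemma \ref{lem-kl-2p}), which sidesteps the unboundedness of $\log\theta$ near the endpoints; a generic covering argument would need extra care there.

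The genuine gap is in how you close part (b). You correctly flag the obstacle of likelihood-neutral misassignments---an estimated class occupied wholly, or in the majority, by a single wrong true class---but your proposed resolution is circular: you invoke ``because most attributes are correctly recovered, the identity-item classes are majority-correct,'' which is precisely the conclusion being established. The paper resolves this with two ingredients you do not use. First, the order constraint $\hat\theta^+_j>\hat\theta^-_j$ imposed in Remark \ref{rmk-2p} forces $N^j_{11}N^j_{00}>N^j_{10}N^j_{01}$, ruling out the complete flip; without it, complementing a column of $\AA$ and swapping $\theta^+_j\leftrightarrow\theta^-_j$ on the corresponding identity items is exactly likelihood-equivalent, and the theorem would fail as stated. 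Second, Lemma \ref{lem-j012} shows that items for which one true class is majority-misassigned incur a KL penalty of order $N\beta_J\epsilon 2^{-K_0}$, because \eqref{eq-pmin} bounds the true class sizes from below; hence such items number only $o_P(J\gamma_J/\beta_J)$. Relatedly, your final counting step (``any subject whose estimated $k$-th attribute is wrong is misclassified on all the items with $\qq_j=\ee_k$'') presupposes that $\hat\qq_j$ equals $\ee_k$ for those items, while the symmetric claim for $\QQ$ presupposes that the $\hat\aa_i$ are correct---a chicken-and-egg problem. The paper breaks it with a joint combinatorial argument (Lemma \ref{lem-qa}) applied to products of identity blocks $\mathcal B^m$ of $\QQ$ with complete blocks $\mathcal N^s$ of attribute profiles: each cell-level disagreement is charged to the event that either $\hat\qq_{\mathcal B^m}\nsim I_K$ or $\hat\aa_{\mathcal N^s}\neq\{0,1\}^K$ as whole blocks, and only afterwards are the remaining rows of $\AA$ and the multi-attribute rows of $\QQ$ bootstrapped from the recovered blocks. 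Without these two ingredients your argument does not close.
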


\begin{remark}\label{rmk-continuous}
For large-scale continuous latent factor analysis, \cite{chenlizhang2019joint, chenlizhang2020} exploited the \textit{low-rank-matrix} structure to establish consistency and bound convergence rate of MLE.
	In a continuous latent factor model, the low-rankness usually exactly captures the intrinsic characteristic of the model; for example, the latent structure is summarized as an inner product term $\Theta_{N\times K} A_{J\times K}^\top$ in \cite{chenlizhang2020} (where $\Theta_{N\times K}$ collect the continuous person-parameters and $A_{J\times K}$ collect the continuous item-parameters), which is a matrix with low rank $K$.
	However, for discrete latent variable models, especially the complicated SLAMs considered here, the low-rankness is often a too rough and sometimes imprecise summary of the latent structure. This is because discrete latent structure ($\AA$ and $\QQ$ here) would induce an \textit{unobserved partition} of data underlying a probabilistic model, which is not the case when latent variables are continuous and hence requires different analysis.
\end{remark}

\begin{remark}
The proof of part (a) of Theorem \ref{thm-joint-both} uses a similar technique as the profile likelihood approach  in the network community detection literature \citep[see, e.g.][]{choi2012sbm, zhao2020id}.
A proof technique of a similar spirit is useful here because the existence of discrete latent variables allows reformulating the maximum likelihood problem \eqref{eq-prob1} as performing certain model-based clustering. Indeed, each vector $\qq_j$ categorizes the attribute patterns into distinct clusters locally for each observed variable $j$, in different ways under different model assumptions in Examples \ref{exp-two} and \ref{exp-mult}.
Notably, also apparent from Examples \ref{exp-two} and \ref{exp-mult} is that the model setup of a SLAM is fundamentally different from a stochastic block model for network data. 
The unobserved partition in SLAMs are more subtle to deal with than other simpler discrete latent variable models (including the network community models) due to the parameter constraints imposed by the $\QQ$-matrix. 
The overall proof procedure used to establish Theorem \ref{thm-joint-both} needs to take into account such unique parameter constraints. 
\end{remark}

The $\gamma_{J}$ in part (a) of Theorem \ref{thm-joint-both} bounds the error of recovering the average positive response probability under the estimated $(\hat\QQ,~\hat\AA)$. Part (b) further separately bounds the errors of the estimators for $\hat\QQ$ and $\hat\AA$, respectively.
The derived rates in Theorem \ref{thm-joint-both}(b) imply that the sequences $\{\beta_J\}$, $\{p_N\}$, and $\{\delta_J\}$ are allowed to go to zero while still guaranteeing consistency, as long as $\gamma_{J}/(\beta_J\cdot p_N)\to 0$ and $\gamma_{J}/(\beta_J\cdot \delta_J)\to 0$.
Theorem \ref{thm-joint-both}(b) not only ensures the asymptotic consistency of joint MLE, but also offers insight into the accuracy of estimating $\QQ$ and $\AA$ with finite samples and finite $J$.
In particular, if $\beta_J$ and $\delta_J$ are constants and $K$ is finite, then the finite sample error bounds in \eqref{eq-errqa} become $O((\log J)^{1+\epsilon} \cdot J^{-1/2})$.

\section{Scalable Estimation Algorithms}\label{sec-comp}

This section presents algorithms for computing the joint MLE for two-parameter and multi-parameter SLAMs.
Recall that a two-parameter SLAM can be viewed as a submodel for a multi-parameter one. The succinct form of two-parameter models allows for developing a scalable approximate estimation approach, and the next Section \ref{sec-est-2p} proposes an algorithm specifically tailored for two-parameter models.
Then Section \ref{sec-est-mult} builds on this algorithm and further provides an estimation approach for the more general multi-parameter models.

\vspace{-1mm}
\subsection{Estimation under the Two-Parameter Model}\label{sec-est-2p}
EM algorithms \citep{dempster1977}  are popular methods for latent variable model estimation.
For SLAMs, a traditional EM algorithm for computing the marginal MLE under a random-effect model assumes each $\aa_i$ follows a categorical distribution with $|\{0,1\}^K|=2^K$ components. In this setup, the E step updates the probabilities of each $\aa_i$ being each possible pattern in $\{0,1\}^K$. 
The cardinality of this space $|\{0,1\}^K| = 2^K$ grows exponentially with $K$, so evaluating all the $\aa_i$'s and $\qq_j$'s probabilities of being all the possible configurations has complexity $O((N+J) 2^K)$ in each EM iteration. 
This incurs high computational cost for moderate to large $K$. 
{On the other hand, here we consider the joint MLE for fixed-effect SLAMs and treat subjects' latent attributes $a_{i,k}$'s and also $q_{j,k}$'s as parameters. This formulation requires different estimation procedures from the traditional EM for computing the marginal MLE.
We next propose a new EM-flavor algorithm with a stochastic component well suited to the considered scenario. The new algorithm directly targets at estimating the individual $a_{i,k}$'s and $q_{j,k}$'s, and further uses a stochastic step in order to scale to high-dimensional data.}
In particular, our new algorithm draws a few Gibbs samples of the entries of the discrete $\QQ$ and $\AA$ in an approximate E step to achieve scalability.

The details of the algorithm are as follows.
The entries of $\QQ$, $\AA$ are treated as missing data to be imputed in an approximate E step, and the continuous $\bo\Theta=(\ttt^+, \ttt^-)$ are treated as model parameters to be updated in an M step.
In the approximate E step,
we propose to take an approximation by drawing a few (denote the number by $C$) Gibbs samples of entries of $\AA = (a_{i,k})$ (along the direction of updating subjects' patterns), and then draw $C$ Gibbs samples of entries of $\QQ = (q_{j,k})$ (along the direction of updating variables' loadings).
Under the two-parameter log-likelihood in \eqref{eq-ll-2p}, given the current iterates of parameters $(\ttt^+,\ttt^-)$, the conditional distributions of each $a_{i,k}$ and $q_{j,k}$ from which we draw the Gibbs samples are
\begin{align*}
    \mathbb P(a_{i,k} = 1\mid -)
    &=\sigma\left(-\sum_{j=1}^J q_{j,k} \prod\limits_{1\leq m\leq K\atop m\neq k} a_{i,m}^{q_{j,m}} 
    \left[r_{i,j} \log\left(\frac{\theta^+_j}{ \theta^-_j}\right) + (1-r_{i,j})\log\left(\frac{1-\theta^+_j}{1-\theta^-_j}\right)\right]\right);\\[2mm]
    \mathbb P(q_{j,k} = 1\mid -)
    &=\sigma\left(\sum_{i=1}^N (1-a_{i,k}) \prod_{1\leq m\leq K\atop m\neq k} a_{i,m}^{q_{j,m}}
    \left[r_{i,j} \log\left(\frac{\theta^+_j}{ \theta^-_j}\right) + (1-r_{i,j})\log\left(\frac{1-\theta^+_j}{1-\theta^-_j}\right) \right]\right),
\end{align*}
where 
 $\sigma(x) = \exp(x)/(1 + \exp(x))$ denotes the sigmoid function.
In approximate E step in the $t$-th iteration, after drawing $C$ Gibbs samples $\AA^{(t), 1}, \ldots, \AA^{(t), C}$ and $\QQ^{(t), 1}, \ldots, \QQ^{(t), C}$, we take a stochastic approximation of $\AA$ in the following manner,
\begin{align}\label{eq-algo-sa}
	\AA^{\text{ave},\,(t)} \leftarrow 
	\left(1-t^{-1}\right)\AA^{\text{ave},\,(t-1)}
	+ 
	t^{-1}
	\sum_{c=1}^C \AA^{(t), c}/C,
\end{align}
where $\AA^{\text{ave},(t-1)}$ denotes the $\AA$-matrix averaged from all the previous iterations up to iteration $t-1$. 
The update \eqref{eq-algo-sa} uses a similar idea to the stochastic approximation EM (SAEM) algorithm in \cite{delyon1999saem}.
For $\QQ$,  we define $\QQ^{(t)} = I(\sum_{c=1}^C \QQ^{(t), c}/C > 1/2)$; that is, the average $\QQ$ obtained from the $C$ Gibbs samples is rounded element-wisely to the nearest integer (0 or 1) to give $\QQ^{(t)}$.
Then in the M step, fixing the current $\AA^{\text{ave},\,(t)}$ and $\QQ^{(t)}$, we can update the item parameters $\ttt^+$ and $\ttt^-$ in closed forms under the two-parameter model.
We call such an algorithm EM with Alternating Direction Gibbs EM (ADG-EM) as each E step iteratively draws Gibbs samples of discrete latent structures $\AA$ and $\QQ$.
Preliminary simulations show that drawing $C<10$ Gibbs samples in each E step usually suffices for good performance.
The steps of ADG-EM are summarized in Algorithm \ref{algo-screen}. 
The Supplementary Material includes simulation studies assessing the convergence behavior of this algorithm.

{We make a remark on the stochastic approximation step of the proposed algorithm. As briefly mentioned before, computation can be challenging for data with large $N$, $J$, and $K$, because $\QQ$ and $\AA$ will be huge matrices with complex dependencies. But if we think of entries of $\QQ$ and $\AA$ individually in a Bayesian fashion, then each entry follows a Bernoulli distribution \textit{a posteriori} and indeed has an analytic posterior that is easy to sample from. With this thinking, our stochastic approximation procedure relies on a few Gibbs steps to achieve scalability. Such a procedure is specifically motivated by the multidimensional binary nature of the latent structures.}

Our Algorithm \ref{algo-screen} applies the stochastic approximation to updating $\AA$ but not to that of $\QQ$ in each iteration; that is, the update ``$\QQ= I(\QQ^{\text{sum}}/C>1/2)$ element-wisely'' in Algorithm \ref{algo-screen} does not depend on the iteration number $t$, in contrast to \eqref{eq-algo-sa}. 
We find through simulations that this algorithm has good estimation accuracy in various cases including when $N$ and $J$ are both very large. But one could similarly apply the stochastic approximation to both $\QQ$ and $\AA$; we present this modified version as Algorithm \ref{algo-adg-saem} in the Supplementary Material.

\begin{algorithm}[h!]
\caption{ADG-EM: Alternating Direction Gibbs EM for estimating $\QQ$ and $\AA$}\label{algo-screen}

\SetKwInOut{Input}{input}
\SetKwInOut{Output}{Output}

\KwData{Response matrix $\RR=(r_{i,j})_{N\times J}\in\{0,1\}^{N\times J}$ and number of attributes $K$.}
 Initialize $\AA=(a_{i,k})_{N\times K}\in\{0,1\}^{N\times K}$ and  $\QQ=(q_{j,k})_{J\times K}\in\{0,1\}^{J\times K}$.\\
 Initialize parameters $\ttt^+$ and $\ttt^-$. 
 Set $t=1$, ~$\AA^{\text{ave}}=\zero$.\\
 \While{not converged}{
 
   \lFor{$(i,j)\in[N]\times [J]$}{
      $
      \psi_{i,j} \leftarrow 
      r_{i,j}\log[\theta^+_j / \theta^-_j] + 
      (1-r_{i,j})\log[(1-\theta^+_j) / (1-\theta^-_j)]
      $
  }
  
  $\AA^{\text{sum}}\leftarrow\zero$, \quad $\QQ^{\text{sum}}\leftarrow\zero$.\\
\tcp{Approximate E Step:~draw $C$ Gibbs samples of entries of $\AA$ and $\QQ$}
\For{$r\in[C]$}{
    \For{$(i,k)\in[N]\times[K]$}{
    Draw $a_{i,k}\sim\text{Bernoulli}\Big(\sigma\Big(-\sum_{j} q_{j,k} \prod_{m\neq k} a_{i,m}^{q_{j,m}} \psi_{i,j} \Big)\Big)$
    }
     $\AA^{\text{sum}} \leftarrow \AA^{\text{sum}}+\AA$;    
    }
    
  $\AA^{\text{ave}} \leftarrow
   (1-t^{-1})\AA^{\text{ave}} +
   t^{-1}(\AA^{\text{sum}}/C);\quad
  t\leftarrow t+1.$
  
  \For{$r\in[C]$}{
    \For{$(j,k)\in[J]\times[K]$}{
    Draw $q_{j,k}\sim\text{Bernoulli}\Big(\sigma\Big(\sum_{i} (1-a_{i,k}) \prod_{m\neq k} a_{i,m}^{q_{j,m}} \psi_{i,j} \Big)\Big)$
    }
     $\QQ^{\text{sum}} \leftarrow \QQ^{\text{sum}}+\QQ$;
}
  
  $\QQ= I(\QQ^{\text{sum}}/C>1/2)$ element-wisely;\quad
  $\mathbf I^{\text{ave}} =\Big( \prod_k \{a_{i,k}^{\text{ave}}\}^{q_{j,k}} \Big)_{N\times J}$;
  \color{black}
  
  \tcp{M Step:~update model parameters}
   \For{$j\in [J]$}{
     $\theta_j^+ \leftarrow (\sum_{i}r_{i,j} I^{\text{ave}}_{i,j})/(\sum_{i}I_{i,j}^{\text{ave}})$,\qquad 
     $\theta_j^- \leftarrow (\sum_{i}r_{i,j} (1-I^{\text{ave}}_{i,j}))/(\sum_{i}(1-I_{i,j}^{\text{ave}}))$;
   }
 }

$\hat \AA = I(\AA^{\text{ave}} > 1/2)$ element-wisely.

\Output{$\hat\QQ$ and $\hat\AA$.}
\end{algorithm}

In terms of computational complexity, Algorithm \ref{algo-screen} has $O((N+J)K)$ complexity in each iterative step thanks to the approximation based on a small number ($C<10$) of Gibbs samples, in contrast to the $O((N+J)2^K)$ complexity of the regularized EM algorithms in \cite{chen2015statistical} and \cite{xu2018jasa} that evaluate the probabilities of all the $2^K$ configurations of the binary latent patterns. This reduction to linear complexity in $K$ greatly reduces the computational cost of estimating a SLAM for large-scale data.
To our knowledge, this is among the first estimation algorithms for SLAMs or cognitive diagnostic models that have linear complexity in $K$ and enjoy good estimation accuracy; see the simulation studies in Section \ref{sec-simu} for details of performance.

\subsection{Estimation under the Multi-Parameter Model}
\label{sec-est-mult}

The multi-parameter model in Example \ref{exp-mult} involves potentially many more parameters than the two-parameter model, since all the main effects and interaction effects of latent attributes possibly enter the likelihood.
This complicated form poses a greater challenge to computation, especially for large-scale and high-dimensional scenarios considered here.
Fortunately, the two-parameter DINA model is a submodel for multi-parameter SLAMs in an interesting way such that under a same $\QQ$-matrix, the main term in the former exactly captures the highest-order interaction term of the active attributes in the latter (see the discussion after Example \ref{exp-mult}). Therefore, when the key interest is in recovering the discrete latent structures in $\QQ$ and $\AA$, such a relationship inspires the following question: can one maximize the two-parameter likelihood to obtain any meaningful approximate estimator when data indeed come from a multi-parameter model?

On the practical side, the two-parameter DINA is indeed a very popular model employed by practitioners and likely the most widely used model in analyzing diagnostic assessment data in education \citep[e.g., see][]{chen2015statistical, culpepper2015, chen2018bayesian}, though the multi-parameter models are more general and flexible alternatives \citep{HensonTemplin09, dela2011}.
Such practices are mainly due to the computational simplicity and nice interpretability of the two-parameter model, yet the risk of over-simplification exists.
Motivated by the computational need and the scientific practice stated above, we next first study the property of the \textit{oversimplified joint MLE}, obtained from maximizing the two-parameter likelihood when the true data-generating model is instead multi-parameter. Later, we will show that such theoretical property inspires the development of a scalable two-step estimation procedure for multi-parameter models.

\paragraph{Property of the Oversimplified Joint MLE under the Two-parameter Likelihood.}
Next, we consider the situation when a multi-parameter SLAM is {oversimplified} to a two-parameter SLAM. 
We first provide conditions that guarantee a {oversimplified} joint MLE is consistent in estimating part of model structure. This provides a basis for subsequent second-stage estimation. 
We also establish that under certain stronger conditions, the {oversimplified} two-parameter joint MLE directly give consistent estimation of rows of $\QQ$ and $\AA$.
Together, these theoretical results will inspire the development of valid and efficient estimation methods for multi-parameter SLAMs in the later half of this section.

We introduce some notation. Under a multi-parameter SLAM, denote $P^{\true}_{i,j} = \MP(r_{i,j}=1\mid \text{true model})$. Given any $(\QQ, \AA)$, we define the two-parameter approximation by 
\begin{align}\label{eq-pijd2}
    P^{2\approx}_{i,j}(\QQ, \AA) = 
    \begin{cases}
		\dfrac{\sum_{m=1}^N I(\aa_m \succeq \qq_j) P_{m,j}^{\true} }{\sum_{m=1}^N I(\aa_m \succeq \qq_j)}, & \text{if }\aa_i \succeq \qq_j;
		\\[5mm]
		\dfrac{\sum_{m=1}^N I(\aa_m \nsucceq \qq_j) P_{m,j}^{\true} }{\sum_{m=1}^N I(\aa_m \nsucceq \qq_j)}, & \text{if }\aa_i \nsucceq \qq_j.
	\end{cases}
\end{align}
The $P^{2\approx}_{i,j}(\QQ, \AA)$ is determined by an arbitrary specification of the discrete latent structure $(\QQ, \AA)$ and also the true continuous parameters $P^\true_{i,j}$ (which further depends on $(\QQ^\true, \AA^\true)$ and $\bo\Theta^\true$).
As implied by the definition in \eqref{eq-pijd2}, the $P^{2\approx}_{i,j}(\QQ, \AA)$ is indeed a two-parameter approximation, because for each item $j$, the set of probabilities $\{P^{2\approx}_{i,j}(\QQ, \AA):\, 1\leq i\leq N\}$ only take two possible values, depending on whether or not $\aa_i \succeq \qq_j$.

We first provide conditions sufficient for consistency of \textit{part of} the discrete latent structures given by a {oversimplified} MLE. These conditions would imply a two-stage estimation procedure to be described in Section \ref{sec-est-mult}.
Denote $D(p \| q) = p\log(p/q) + (1-p)\log\{(1-p)/(1-q)\}$, the Kullback-Leibler divergence of a Bernoulli distribution with parameter $p$ from that with parameter $q$. Define the following function of $(\QQ,\AA)$,
\begin{align}\label{eq-fj}
f_j(\QQ, \AA)
=\sum_{i=1}^N D\Big(P^{\true}_{i,j} \Big\| P^{2\approx}_{i,j}(\QQ,\AA) \Big).	
\end{align}
To interpret, for item $j$ the $f_j(\QQ, \AA)$ characterizes the KL divergences from the true parameters $P^{\true}_{i,j}$ to the two-parameter approximation induced by the discrete structure $(\QQ, \AA)$.
We first consider the following assumption to replace the previous Assumption \ref{cond-gap-2p} on the true parameters under the multi-parameter model.
For two numbers $a$ and $b$, denote the maximum of them by $a\vee b$. 
Recall that the $\mathcal E_0$ defined in Assumption \ref{as-weak} is the set of variables that depend on some single latent attribute.
Define $\mathcal E_0 = \{j\in[J]:\,\qq_j=\ee_k~\text{for some}~k\in[K]\}$.
Under Assumption \ref{as-weak}, we have the following theorem.

\begin{assumption}\label{as-weak}
Define $\mathcal E_0 = \{j\in[J]:\,\qq_j=\ee_k~\text{for some}~k\in[K]\}$.
	The true data-generating multi-parameter SLAM satisfies
	\begin{align}\label{eq-less}
	\min_{j\in\mathcal E_0} \left(\theta^{\true}_{j,\one_K} - \theta^{\true}_{j,\zero_K}\right)^2 \geq \zeta_J; \qquad
	\sum_{j\notin\mathcal E_0} f_j(\ZZ^{\true})
	    =
	    \min_{\ZZ=(\QQ,\AA)}
		\sum_{j\notin\mathcal E_0} f_j(\ZZ)+
	    o(NJ\cdot\eta_J),
	\end{align}
	for some $\{\zeta_J\}\subseteq(0,1)$ and some bounded sequence $\{\eta_J\}\subseteq[0,\infty)$.
\end{assumption}
In the special case where $\eta_J=0$, Eq.~\eqref{eq-less} implies that the oracle two-parameter approximation is the best possible two-parameter approximation in the sense of minimizing the KL-divergence.
In general cases when $\{\eta_J\}\subseteq[0,\infty)$ is a bounded sequence, Eq.~\eqref{eq-less} weakens to requiring the oracle two-parameter approximation to be close to the best.
This \eqref{eq-less} in Assumption \ref{as-weak} imposes a quite mild requirement on the data-generating true parameters.

\begin{theorem}\label{thm-mis-weak}
	Suppose the data $\RR^{\true}$ come from a multi-parameter SLAM but the estimators $\hat\ZZ^{2\approx} = (\hat\QQ^{2\approx},\hat\AA^{2\approx})$ are obtained through maximizing the oversimplified two-parameter likelihood \eqref{eq-prob1}. 
    Suppose Assumptions \ref{assume-pij}, \ref{cond-id}, and \ref{as-weak} hold.
    With $\sqrt{J}=O\left(\sqrt{M}N^{1-c}\right)$ for a small $c>0$ and $\gamma_J$ defined in \eqref{eq-defgamma}, there is
	\begin{align*}
	\frac{1}{J}\sum_{j\in\mathcal E_0} I(\hat\qq^{2\approx}_j\neq\qq_j^{\true}) 
	= o_P\left(\frac{\eta_J \vee \gamma_J}{\zeta_J\cdot p_N}\right),
	\qquad
	\frac{1}{N}\sum_{i=1}^N I(\hat\aa^{2\approx}_i\neq \aa_i^{\true})
    = o_P\left(\frac{\eta_J \vee \gamma_J}{\zeta_J\cdot \delta_J}\right),
\end{align*}
up to a permutation of the $K$ attributes.
The joint MLE under an oversimplified two-parameter submodel is  consistent in recovering $\AA^{\true}$ and the single-attribute rows in $\QQ^{\true}$.
\end{theorem}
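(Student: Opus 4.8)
The plan is to profile out the continuous item parameters, reduce the oversimplified two-parameter MLE to the minimization of a population Kullback--Leibler objective $G(\ZZ)=\sum_{j=1}^J f_j(\ZZ)$ built from \eqref{eq-fj}, and then sandwich this objective between the truth $\ZZ^{\true}$ and the estimator $\hat\ZZ^{2\approx}$ so as to isolate the contribution of the single-attribute items $\mathcal E_0$. First I would observe that, given any discrete structure $\ZZ=(\QQ,\AA)$, maximizing the two-parameter likelihood \eqref{eq-ll-2p} over the continuous item parameters $\theta_j^+,\theta_j^-$ yields, for each item $j$, the empirical positive-response rates within the two groups $\{i:\aa_i\succeq\qq_j\}$ and $\{i:\aa_i\nsucceq\qq_j\}$, which are the sample analogues of the two-parameter approximation $P^{2\approx}_{i,j}(\QQ,\AA)$ in \eqref{eq-pijd2}. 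Writing the resulting profiled log-likelihood as $\tilde\ell(\ZZ)$, a Bernstein-type concentration inequality combined with a union bound over the $2^{(N+J)K}$ possible binary structures---the same empirical-process control that drives Theorem \ref{thm-joint-both}(a)---gives, uniformly in $\ZZ$, $\tilde\ell(\ZZ)=C-G(\ZZ)+o_P(NJ\gamma_J)$ for an additive constant $C$ not depending on $\ZZ$. Crucially, this concentration is agnostic to whether the data come from a two- or a multi-parameter model, since both the fitted model and the target $P^{2\approx}$ are two-parameter and only the mean $P^{\true}_{i,j}$ of $r_{i,j}$ enters. Because $\hat\ZZ^{2\approx}$ maximizes $\tilde\ell$, it follows that $G(\hat\ZZ^{2\approx})\le G(\ZZ^{\true})+o_P(NJ\gamma_J)$.

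Next I would split $G$ into the single-attribute and the remaining items and exploit Assumption \ref{as-weak}. For every $j\in\mathcal E_0$ the true probability $P^{\true}_{i,j}$ depends only on the one active attribute, so the two-parameter approximation at the truth is exact and $f_j(\ZZ^{\true})=0$; hence $G(\ZZ^{\true})=\sum_{j\notin\mathcal E_0}f_j(\ZZ^{\true})$. On the other hand, trivially $\sum_{j\notin\mathcal E_0}f_j(\hat\ZZ^{2\approx})\ge\min_{\ZZ}\sum_{j\notin\mathcal E_0}f_j(\ZZ)$, and rearranging the near-optimality clause of \eqref{eq-less} gives $\min_{\ZZ}\sum_{j\notin\mathcal E_0}f_j(\ZZ)=\sum_{j\notin\mathcal E_0}f_j(\ZZ^{\true})-o(NJ\eta_J)$. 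Subtracting this lower bound on the $j\notin\mathcal E_0$ part from the upper bound on $G(\hat\ZZ^{2\approx})$, the non-single-attribute contributions cancel and leave $\sum_{j\in\mathcal E_0}f_j(\hat\ZZ^{2\approx})=o_P\!\big(NJ(\gamma_J\vee\eta_J)\big)$. This is the heart of the argument: even though the multi-attribute rows of $\QQ$ need not be recovered, the KL deficiency accumulated on the anchor items $\mathcal E_0$ is forced to be small.

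Finally I would convert this aggregate bound into the two stated recovery rates, mirroring the passage from part (a) to part (b) of Theorem \ref{thm-joint-both} with $\beta_J$ replaced by $\zeta_J$ and the error $\gamma_J$ by $\gamma_J\vee\eta_J$, but restricted to $\mathcal E_0$. Using a reverse-Pinsker bound together with Assumption \ref{assume-pij} to control boundary effects, each subject whose estimated group membership on an item $j\in\mathcal E_0$ disagrees with the truth contributes at least order $\zeta_J$ to $f_j(\hat\ZZ^{2\approx})$. For a fixed attribute $k$ there are at least $\delta_J J$ items with $\qq_j^{\true}=\ee_k$ by \eqref{eq-infq}, so mislabeling a fraction $\rho$ of subjects on attribute $k$ forces $\sum_{j\in\mathcal E_0}f_j(\hat\ZZ^{2\approx})$ to be at least of order $\zeta_J\,\delta_J J\cdot\rho N$; comparing with the bound above yields $N^{-1}\sum_{i=1}^N I(\hat\aa^{2\approx}_i\neq\aa_i^{\true})=o_P\!\big((\gamma_J\vee\eta_J)/(\zeta_J\delta_J)\big)$ after fixing the label permutation. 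With $\AA$ thus recovered, any $\hat\qq^{2\approx}_j\neq\qq^{\true}_j$ at a single-attribute item alters the induced clustering on at least a $p_N$-fraction of subjects (by \eqref{eq-pmin}), each again costing order $\zeta_J$ in KL, which gives $J^{-1}\sum_{j\in\mathcal E_0}I(\hat\qq^{2\approx}_j\neq\qq^{\true}_j)=o_P\!\big((\gamma_J\vee\eta_J)/(\zeta_J p_N)\big)$.

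The hardest step is the last one: converting the aggregate KL control on $\mathcal E_0$ into per-attribute and per-item recovery while $\QQ$ and $\AA$ are estimated simultaneously and only up to a permutation of attributes. The subtlety is that the estimated group membership $I(\hat\aa_i\succeq\hat\qq_j)$ couples errors in $\AA$ and in $\qq_j$, and one must rule out that the uncontrolled errors on multi-attribute items leak into corrupting the single-attribute analysis; this is exactly what the decoupling through $\mathcal E_0$ and the cancellation of the $j\notin\mathcal E_0$ terms in the sandwich accomplish. A secondary technical point is justifying the reverse-Pinsker lower bound uniformly when the true Bernoulli parameters may approach $0$ or $1$ at rate $J^{-d}$, which is handled by Assumption \ref{assume-pij}.
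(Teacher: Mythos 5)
Your proposal follows essentially the same route as the paper's proof: a uniform concentration of the profiled oversimplified likelihood around its expectation (the paper's Lemma S.7, giving the $o_P(NJ\gamma_J)$ control), the sandwich $G(\hat\ZZ^{2\approx})\le G(\ZZ^{\true})+o_P(NJ\gamma_J)$ from the maximizing property, the cancellation of the $j\notin\mathcal E_0$ terms via the near-optimality clause of Assumption \ref{as-weak} together with $f_j(\ZZ^{\true})=0$ for $j\in\mathcal E_0$, and finally the conversion of the aggregate KL bound on $\mathcal E_0$ into the stated rates by rerunning Steps 5--6 of Theorem \ref{thm-joint-both} with $\beta_J$ replaced by $\zeta_J$ and $\gamma_J$ by $\gamma_J\vee\eta_J$. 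The argument is correct and matches the paper's in all essential respects (the inequality you invoke in the last step is ordinary Pinsker, $D(p\|q)\ge 2(p-q)^2$, rather than a reverse-Pinsker bound, but this is only a naming slip).
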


Theorem \ref{thm-mis-weak} has the following useful practical implication. After a first step of maximizing the {oversimplified} two-parameter likelihood to estimate $\AA^{\true}$ and the single-attribute rows in $\QQ^{\true}$, a ``regression'' type second step can be used to further estimate the remaining multi-attribute rows in $\QQ^{\true}$ based on the first stage estimator $\hat\AA$.
In Section \ref{sec-est-mult}, we provide a practical estimation procedure following this rationale. 

In practice, when the true parameters are more similar to the two-parameter submodel than Assumption \ref{as-weak}, the {oversimplified} joint MLE can even directly gives the consistency of all row vectors of $\QQ$ and $\AA$. 
The following assumption and theorem formalize this intuition.

\begin{assumption}[True Parameters More Similar to a Two-Parameter Model]\label{as-strong}
	As $N,~J\to\infty$, the true data-generating multi-parameter SLAM satisfies
	\begin{align}
	\label{eq-strong1}
        &\min_{j\in[J]} \min_{\aaa\succeq\qq_j^{\true}\atop \aaa'\nsucceq\qq_j^{\true}} \left(\theta^{\true}_{j,\aaa} - \theta^{\true}_{j,\aaa'}\right)^2 \geq \Delta_J; \qquad
		\sum_{j\notin\mathcal E_0} f_j(\QQ^\true, \AA^\true)
		= o(NJ\cdot\eta'_J),
	\end{align} 
	for some $\{\Delta_J\},\,\{\eta'_J\}\subseteq(0,\infty)$, where $f_j(\QQ^\true, \AA^\true)$ is as defined in \eqref{eq-fj}.
\end{assumption}

\begin{theorem}[True Parameters More Similar to a Two-Parameter Model]\label{thm-mis-strong}
	Suppose the data $\RR^{\true}$ come from a multi-parameter SLAM but the estimators $\hat\ZZ^{2\approx} = (\hat\QQ^{2\approx},\hat\AA^{2\approx})$ are obtained through maximizing the oversimplified two-parameter likelihood \eqref{eq-prob1}. 
	Under Assumptions \ref{assume-pij}, \ref{cond-id}, and \ref{as-strong}, as $N,\,J\to\infty$, with $\sqrt{J}=O\left(\sqrt{M}N^{1-c}\right)$ for a small $c>0$ and $\gamma_J$ defined in \eqref{eq-defgamma}, 
	\begin{align*}
	\frac{1}{J}\sum_{j=1}^J I( \hat\qq_j^{2\approx}\neq\qq_j^{\true}) 
	= o_P\left(\frac{\gamma_J\vee\eta'_J}{\Delta_J\cdot p_N}\right),
	\qquad
	\frac{1}{N}\sum_{i=1}^N I(\hat\aa_i^{2\approx}\neq \aa_i^{\true})
    = o_P\left(\frac{\gamma_J\vee\eta'_J}{\Delta_J\cdot \delta_J}\right),
\end{align*}
up to a permutation of the $K$ latent attributes.
In this case, the joint MLE under a oversimplified two-parameter submodel is consistent in recovering rows of $\QQ^{\true}$ and $\AA^{\true}$.
\end{theorem}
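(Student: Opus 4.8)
The plan is to follow the two-stage template already used for Theorem \ref{thm-joint-both} and Theorem \ref{thm-mis-weak}, and to exploit the single feature that distinguishes Assumption \ref{as-strong} from Assumption \ref{as-weak}: the separation gap $\Delta_J$ in \eqref{eq-strong1} is uniform over \emph{all} items $j\in[J]$, not only over the single-attribute set $\mathcal E_0$. Write $\hat P^{2\approx}_{i,j}=P^{2\approx}_{i,j}(\hat\QQ^{2\approx},\hat\AA^{2\approx})$ for the fitted two-parameter probability of \eqref{eq-pijd2}. The first stage is an aggregate prediction bound. Since $\hat\ZZ^{2\approx}$ maximizes the oversimplified two-parameter likelihood \eqref{eq-prob1}, its fitted log-likelihood dominates the one evaluated at the oracle structure $(\QQ^\true,\AA^\true)$ equipped with the oracle two-parameter item parameters, whose population target is exactly $P^{2\approx}_{i,j}(\QQ^\true,\AA^\true)$. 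Combining this likelihood inequality with a Bernstein-type concentration of the empirical log-likelihood around its mean, taken uniformly over the discrete configurations of $(\QQ,\AA)$ (the union bound over configurations produces the $\log(2^K)$ and counting factors assembling into $\gamma_J$ of \eqref{eq-defgamma}), and absorbing the irreducible misspecification bias $\tfrac{1}{NJ}\sum_{j\notin\mathcal E_0} f_j(\QQ^\true,\AA^\true)=o(\eta'_J)$ supplied by \eqref{eq-strong1}, I would obtain, after converting the Kullback--Leibler bound into squared error via $D(p\|q)\geq 2(p-q)^2$,
\[
\frac{1}{NJ}\sum_{i=1}^N\sum_{j=1}^J \big(P^\true_{i,j}-\hat P^{2\approx}_{i,j}\big)^2 = o_P\!\left(\gamma_J\vee\eta'_J\right).
\]

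The second stage converts this aggregate closeness into exact recovery of the capability indicators $I(\aa_i\succeq\qq_j)$, and this is precisely where Assumption \ref{as-strong} strengthens the conclusion. The first bound in \eqref{eq-strong1} guarantees that at \emph{every} item, any $\aaa\succeq\qq_j^\true$ and $\aaa'\nsucceq\qq_j^\true$ have true response probabilities separated by at least $\sqrt{\Delta_J}$. Consequently, whenever the fitted structure misassigns the capability status of a pair $(i,j)$, the true probability $P^\true_{i,j}$ sits on one side of the gap while the fitted group-average $\hat P^{2\approx}_{i,j}$ is pulled toward the other, so each misassignment contributes at least order $\Delta_J$ to the displayed squared error, net of the per-pair within-group heterogeneity whose aggregate is the $\eta'_J$ bias already controlled in stage one. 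Dividing through shows the fraction of misclassified capability indicators is $o_P\big((\gamma_J\vee\eta'_J)/\Delta_J\big)$ uniformly, now covering multi-attribute items and not just those in $\mathcal E_0$.

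It then remains to translate recovered capability indicators into recovered rows. For $\AA$, the single-attribute items guaranteed by \eqref{eq-infq} satisfy $I(\aa_i\succeq\ee_k)=a_{i,k}$, so averaging the correctly classified indicators over the at least $J\delta_J$ copies of each $\ee_k$ pins down $\aa_i$ up to error $o_P\big((\gamma_J\vee\eta'_J)/(\Delta_J\delta_J)\big)$, exactly as in Theorem \ref{thm-mis-weak}. The genuinely new ingredient is full recovery of $\QQ$, including its multi-attribute rows: using the accurately recovered $\AA$ together with \eqref{eq-pmin}, which places at least $Np_N$ subjects in each of the $2^K$ patterns, each $\qq_j^\true$ induces a capability partition of the sample distinct from the partition induced by any competing $\tilde\qq_j$ admissible under the cardinality constraint $\sum_k\hat q_{j,k}\le\sum_k q^\true_{j,k}$; the uniform gap $\Delta_J$ then forces the likelihood-minimizing $\hat\qq_j$ to equal $\qq_j^\true$ except on a fraction $o_P\big((\gamma_J\vee\eta'_J)/(\Delta_J p_N)\big)$ of items, yielding the stated rate.

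The step I expect to be the main obstacle is the second stage applied to the \emph{multi-attribute} rows of $\QQ$. Because $\hat P^{2\approx}_{i,j}$ averages heterogeneous true multi-parameter probabilities over a fitted capability group, one must show that the signal $\Delta_J$ dominates both the within-group heterogeneity and the irreducible bias $\eta'_J$, and simultaneously that distinct candidate vectors $\qq_j$ remain statistically distinguishable under the pattern-coverage condition \eqref{eq-pmin}. Controlling this interplay — ensuring the separation is not washed out by misspecification while keeping the union bound over all $\qq_j$ admissible under the constraint $B_j=\sum_k q^\true_{j,k}$ under control — is the delicate part that upgrades the partial recovery of Theorem \ref{thm-mis-weak} to the full recovery of rows of $\QQ^\true$ and $\AA^\true$ claimed here.
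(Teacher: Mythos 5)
Your proposal is correct and follows essentially the same route as the paper's proof: a uniform concentration bound over discrete configurations yielding $\gamma_J$, the basic likelihood inequality at the oracle two-parameter approximation, absorption of the misspecification bias $o(NJ\cdot\eta'_J)$ from Assumption \ref{as-strong}, and then a conversion of the aggregate Kullback--Leibler bound into structure recovery using the uniform gap $\Delta_J$ together with $\delta_J$ (via the single-attribute items) for $\AA$ and $p_N$ (via the per-item capability-partition argument) for all rows of $\QQ$, including the multi-attribute ones. The obstacle you flag --- showing that a wrong $\hat\qq_j$ at a multi-attribute item forces a per-item KL contribution of order $N\Delta_J$ once $\hat\AA$ is accurate --- is exactly the claim the paper isolates and proves by the counting argument with $p_N$.
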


The implication of Theorem \ref{thm-mis-strong} is that when the true parameters are  similar enough to a two-parameter model, directly maximizing the oversimplified two-parameter likelihood suffices in recovering all the discrete latent structures $\QQ$ and $\AA$.
Thanks to Theorem \ref{thm-mis-weak} and Theorem \ref{thm-mis-strong}, the scalable estimation algorithm for the two-parameter model proposed in Section \ref{sec-est-2p} can serve as a useful approximation for computing joint MLE under a multi-parameter model.
In particular, the different scenarios characterized by  Assumption \ref{as-weak} (referred to as multi-parameter model with \textit{weaker two-parameter signal} from now on) and Assumption \ref{as-strong} (referred to as multi-parameter model with \textit{stronger two-parameter signal})  inspire two ways of performing estimation. 
Since the conditions in Theorem \ref{thm-mis-weak} are weaker than those in Theorem \ref{thm-mis-strong}, we next focus on the more general case of {weaker two-parameter signal} and present a two-stage estimation procedure.
We also provide the one-stage estimation results corresponding to the stronger-two-parameter-signal case in the Supplementary Material.

\paragraph{Two-stage Estimation for Multi-parameter SLAMs Corresponding to Theorem \ref{thm-mis-weak}.}
When the multi-parameter model satisfies Assumption \ref{as-weak}, Theorem \ref{thm-mis-weak} offers a useful insight that directly maximizing the {oversimplified} two-parameter likelihood can lead to consistent estimators of $\AA$ and those single-attribute row vectors in $\QQ$.
Such theoretical guarantee about $\hat\AA$ via a {oversimplified} MLE inspires the following two-stage estimation procedure. 
After using Algorithm \ref{algo-screen} to obtain $\hat\QQ$ and $\hat\AA$, we fix $\hat\AA$ as some surrogate ``covariates'' in order to re-estimate the matrix $\QQ$ through a second regression step.
Specifically, a multi-parameter SLAM in Example \ref{exp-mult} has the following reparametrization,
\begin{align}\label{eq-repar}
	\theta_{j,\hat\aa_i}
	= f\left(\sum_{S\subseteq \{1,\ldots,K\}} \mu_{j,S}\prod_{k\in S}\hat a_{i,k} \right),
	\quad \mu_{j,S}\neq 0~\text{only if}~
	\qq_{j, S} = \one,
\end{align} 
where $\mu_{j,S}$ is the coefficient for the interaction effect of the attributes in $S$, and $\qq_{j,S} = (q_{j,k}:\, k\in S)$.
Therefore, the sparsity structure of vector 
$\bo \mu_j=(\mu_{j,S};\,S\subseteq\{0,1\}^K)$ in the reparametrization \eqref{eq-repar} encode the information of $\qq_j$.
Now if $\aa_i$'s are treated as known instead of latent, 
for each $j$ we can use a penalized logistic regression to find the nonzero regression coefficients $\mu_{j,S}$'s. 
Then those nonzero $\mu_{j,S}$'s define the set $\mathcal K_j$ and hence determine the vector $\qq_j$. This is the basic rationale for our second \textit{regression stage}.

More specifically, in this second {regression stage}, for each item $j$, the parameter vector $\bo\mu_j=(\mu_{j,S};\,S\subseteq[K])$ involves all the possible interaction effects of the $K$ binary attributes. So $\bo\mu_j$ has dimension $2^K$, which can be huge given a moderate number of latent attributes. This is in analogy to the high-dimensional regression problem for a generalized linear model with link function $f^{-1}$.
When $2^K$ is huge, we recommend using the independence screening approach \citep{fan2008sure} to select candidate interactions of the attributes and then performing the variable selection only on the set of candidate interactions of attributes.
The \textit{all-effect marginal screening} method is as follows.
For an arbitrary subset $S\subseteq[K]$ of latent attributes, viewing the interaction term ${\prod_{k\in S}}\, \hat a_{i,k}$
as a ``feature'', we define its maximum marginal likelihood estimator $\hat{\bo\mu}_{j,S}^M=(\hat\mu_{j,S_0}^M, ~\hat\mu_{j,S}^M)$ based on the logistic regression as 
\begin{align}\notag
	\hat{\bo\mu}_{j,S}^M=
	 \argmin_{\bo\mu_{j,S}^M}
	 \frac{1}{N}\sum_{i=1}^N \left[r_{i,j}
	 \left(\mu_{j,S_0}^M + \mu_{j,S}^M {\prod_{k\in S}}\hat a_{i,k}\right) -
	 \log\left\{\exp\left(\mu_{j,S_0}^M + \mu_{j,S}^M {\prod_{k\in S}}\hat a_{i,k}\right)+1\right\}
	 \right];
\end{align}
here using logistic regression is appropriate for marginal screening because the responses $\{r_{1,j}, \ldots, r_{N, j}\}$ are binary.
Then we select the following set $\hat{\mathcal M}_{j,\,\text{Scre}}$ of candidate interactions,
$\hat{\mathcal M}_{j,\,\text{Scre}} = \left\{S\subseteq\{0,1\}^K:\, |\hat\mu_{j,S}^M|>\tau_{\text{thres}}\right\}$,	
where $\tau_{\text{thres}}>0$ is a prespecified threshold.
An even faster screening method is the \textit{main-effect marginal screening}, which  only screens the marginal main effects of the $K$ attributes for each item. That is, for $\tau'_{\text{thres}}>0$, define
\begin{align}\label{eq-kmain}
	&\hat{\mathcal K}^{\text{main}}_{j} 
	=
	 \left\{k\in\{1,\ldots,K\}:\, |\hat\mu_{j,\{k\}}^M|>\tau'_{\text{thres}}\right\},
	 \quad \hat{\mathcal M}_{j,\,\text{mScre}}
	=
	\text{power set of }\hat{\mathcal K}^{\text{main}}_{j}.
\end{align}
\begin{remark}
In practice, one can bypass the issue of the selection of the threshold $\tau'_{\text{thres}}$ in \eqref{eq-kmain} in the following way. That is, we can arrange the absolute values of the $K$ marginal main effects $|\hat\mu_{j,k}^M|$ from the largest to the smallest, denoted by $|\hat\mu_{j,(1)}^M|>|\hat\mu_{j,(2)}^M|>\cdots>|\hat\mu_{j,(K)}^M|$. From this ranking, we then select the first $k'$ attributes as the candidate ones for which the gap $|\hat\mu_{j,(k')}^M| - |\hat\mu_{j,(k'+1)}^M|$ is the largest.
In the simulation studies, we find that main-effect marginal screening coupled with this selection strategy usually suffices for good performance.
\end{remark}

Finally, for each item $j$, given the set of candidate terms $\hat{\mathcal M}_{j,\,\text{mScre}}$ (or $\hat{\mathcal M}_{j,\,\text{Scre}}$), we use a $L_1$-penalized logistic regression treating these candidate terms as predictors to arrive at a final set of selected terms $\hat{\mathcal M}_{j,\,\text{pen}}$, which is a subset of the power set of $[K]$. 
The tuning parameter of the $L_1$ penalty is chosen by 5-fold cross validation.
Based on this, the vector $\hat\qq_j$ can be determined. 
The following example illustrates the two-stage estimation procedure.

\begin{example}[Estimating a Multi-Parameter Model]\label{exp-weak}
We generate data with $(N,J,K)=(2400,1200,3)$ under the multi-parameter model with $f(\cdot)$ being the identity link (GDINA, \cite{dela2011}).
The true $\QQ$ has half of the row vectors loading on some single attribute, one fourth loading on two attributes, and the remaining one fourth loading on all three attributes; these are visualized in Figure \ref{fig-weak}.
Define $\mathcal K_j = \{k\in[K]: q_{j,k}=1\}$ to be the set of active attributes for variable $j$ and specify the $\mu$-parameters in \eqref{eq-alleff} as 
\begin{align*}
    \theta_{j,\zero_K} =&~ \mu_{j,\varnothing} = 0.2,~~
    \theta_{j,\one_K} = \sum_{S\subseteq \mt K_j}\mu_{j,S} = 0.8;~~
	\mu_{j,S} = \frac{\theta_{j,\one_K} - \theta_{j,\zero_K}}{2^{|\mt K_j|}-1} ~~\text{for any}~~  S\subseteq\mathcal K_j,~S\neq\varnothing,
\end{align*}
where $|\mathcal K_j|=1,2,3$. This setting of the item parameters are the same as the simulation settings in \cite{xu2018jasa}, that is, for each item all the main-effect and interaction-effect parameters are equal.
The results for $(N,J,K)=(2400,1200,3)$ are presented in the upper panel of Figure \ref{fig-weak}.
In this scenario, the first-stage $\hat\QQ^{1\text{\normalfont{st}}}$ differs from $\QQ^{\true}$ by 39 entries, out of the $J\times K=3600$ entries.
The first stage $\hat\AA$ exactly equals $\AA_0$.
Treating $\hat\AA$ as known and fixed in the second stage estimation leads to a second-stage estimator $\hat\QQ^{2\text{\normalfont{st}}}$ which exactly equals $\QQ^{\true}$.
In the bottom panel of Figure \ref{fig-weak}, we show the estimation results for a simulated dataset with $(N,J,K)=(3000,2000,10)$ and the findings are similar.

\begin{figure}[h!]
\centering
\includegraphics[width=0.7\linewidth]{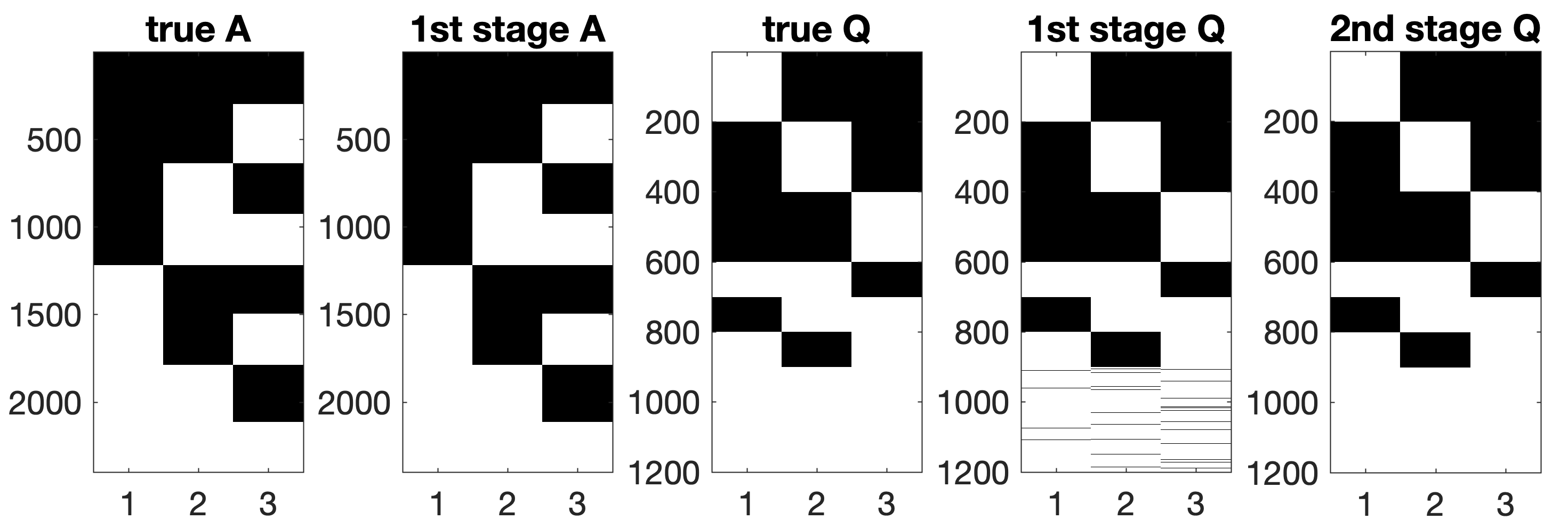}

\includegraphics[width=0.7\linewidth]{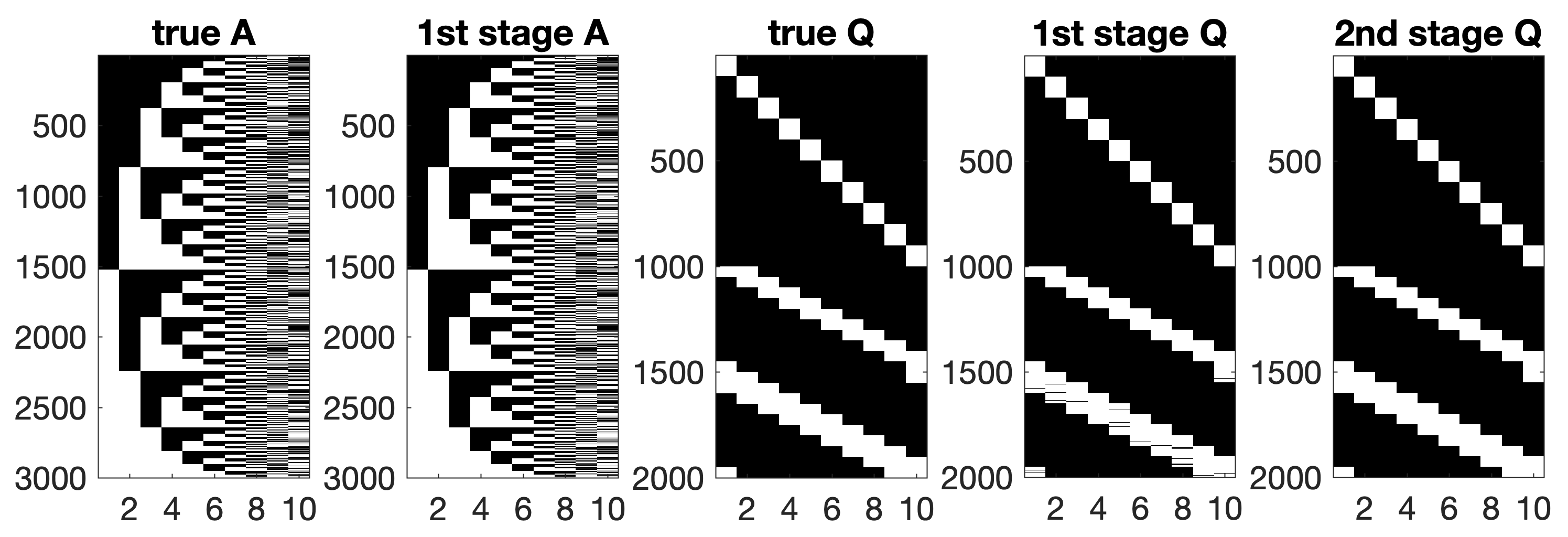}

\caption{
\textbf{Weaker} two-parameter signal (Assumption \ref{as-weak}) with $\mu_{j,\mathcal K_j}=(\theta_{j,\one_K} - \theta_{j,\zero_K})/(2^{|\mt K_j|}-1)$.
\textbf{Upper for $(N,J,K)=(2400,1200,3)$}: the first-stage $\hat\QQ^{1\text{st}}$ differs from $\QQ^{\true}$ by 39 entries, out of the $J\times K=3600$ entries.
\textbf{Bottom for $(N,J,K)=(3000,2000,10)$}:  the first-stage $\hat\QQ^{1\text{st}}$ differs from $\QQ^{\true}$ by 61 entries, out of the $J\times K=20000$ entries.}
 \label{fig-weak}
\end{figure}

\end{example}

\section{Simulation Studies}\label{sec-simu}

\paragraph{Simulations under the Two-Parameter Model.}
In this simulation study, we generate data under the two-parameter DINA model and examine Algorithm \ref{algo-screen}'s performance under  $(N,\,J)= (100,1000)$, $(1000,\,1000)$, or $(2000,\,2000)$, and $K=7, 10, 15$ ($2^7=128$, $2^{10}=1024$, $2^{15}=32768$).
In each simulation setting, the true $\QQ$ vertically stacks $J/(2K)$ copies of $I_K$, $J/(4K)$ copies of $\QQ^{(2)}_{\text{\normalfont{block}}}=(q^{(2)}_{j,k})$, and another $J/(4K)$ copies of  $\QQ^{(3)}_{\text{\normalfont{block}}}=(q^{(3)}_{j,k})$; here the entries of ``1''s are in the locations $q^{(2)}_{k,k}=1$ for $k\in[K]$, $q^{(2)}_{k,k+1}=1$ for $k\in\{1,\ldots,K-1\}$ and $q^{(2)}_{K,1}=1$; and $q^{(3)}_{k,k}=1$ for $k\in[K]$, $q^{(3)}_{k,k+1}=1$ for $k\in\{1,\ldots,K-1\}$ and $q^{(3)}_{K,1}=1$, and $q^{(3)}_{k,k+2}=1$ for $k\in\{1,\ldots,K-2\}$ and $q^{(3)}_{K-1,1}=q^{(3)}_{K,2}=1$. %

The true parameters are set to $1-\theta_j^+=\theta_j^-=0.2$ for each $j$. In each setting, 200 independent replications are carried out. The estimation accuracies are presented in Table \ref{tab-acc}. The column labeled as ``$\hat\AA=\AA^{\true}$'' records the number of replications out of 200 where the algorithm exactly recovers the entire matrix $\AA$; column ``$\hat\aa_i=\aa_i^{\true}$'' records the mean accuracy of recovering the $N$ row vectors of $\AA$ across the replications; column ``$\hat a_{i,k}=a_{i,k}^{\true}$'' records the mean accuracy of recovering the $NK$ individual entries of $\AA$.
The columns $\hat\QQ=\QQ^{\true}$, $\hat\qq_j=\qq_j^{\true}$, $\hat q_{j,k}=q_{j,k}^{\true}$ record similar measures for $\QQ$.
When $N=J=2000$, for all the considered $K$,  both matrix $\QQ$ and matrix $\AA$ are exactly recovered in each replication.

\begin{table}[h!]
  \centering
  \resizebox{\textwidth}{!}{%
  \begin{tabular}{lrrrccrcc}
    \toprule
    \multirow{2}{*}{\centering $2^K$}  & \multirow{2}{*}{\centering $J$} & \multirow{2}{*}{\centering $N$} & \multicolumn{3}{c}{$\hat\AA$} & \multicolumn{3}{c}{$\hat\QQ$} \\
    \cmidrule(lr){7-9}
    & & 
      & $\hat\AA=\AA^{\true}$ & $\hat\aa_i=\aa_i^{\true}$ & $\hat a_{i,k}=a_{i,k}^{\true}$
      & $\hat\QQ=\QQ^{\true}$ & $\hat\qq_j=\qq_j^{\true}$ & $\hat q_{j,k}=q_{j,k}^{\true}$ \\
    \midrule
 \multirow{3}{*}{\centering $2^{7}$} 
    &  $100$ &  $1000$
        &  0/200   & $0.910$ & $0.986$
        &  200/200 & $1.000$ & $1.000$
        \\
    &  $1000$ &  $1000$
        &  188/200 & $0.940$ & $0.970$
        &  188/200 & $0.941$ & $0.975$
        \\
    &  $2000$ &  $2000$
        &  200/200 & $1.000$ & $1.000$
        &  200/200 & $1.000$ & $1.000$
         \\
    \midrule
    \multirow{3}{*}{\centering $2^{10}$} 
    &  $100$ &  $1000$
        &  0/200   & $0.678$ & $0.949$
        &  185/200 & $0.965$ & $0.987$\\
    &  $1000$ &  $1000$
        &  189/200 & $0.955$ & $0.980$
        &  191/200 & $0.956$ & $0.983$
        \\
    &  $2000$ &  $2000$
        &  200/200 & $1.000$ & $1.000$
        &  200/200 & $1.000$ & $1.000$
         \\
    \midrule
    \multirow{3}{*}{\centering $2^{15}$} 
    &  $200$ &  $2000$
        &  0/200   & $0.709$ & $0.956$
        &  166/200 & $0.935$ & $0.980$\\
    &  $1000$ &  $1000$
        & 138/200 & $0.985$ & $0.993$
        & 194/200 & $0.985$ & $0.995$
        \\
    &  $2000$ &  $2000$
        & 200/200 & $1.000$ & $1.000$
        & 200/200 & $1.000$ & $1.000$
         \\
    \bottomrule
  \end{tabular}
 }
\caption{Two-parameter model (DINA) estimation results.}
\label{tab-acc}
\end{table}

\vspace{-5mm}
\paragraph{Simulations under the Multi-Parameter Model.}
We generate data under parameter settings similar to Example \ref{exp-weak} for various $N$, $J$, and $K$. For $K=7$ or $K=10$, we vary $(N,J)$  in $(100,1000)$, $(1000,1000)$ and $(2000,2000)$; for $K=15$, we vary $(N,J)$ in $(200,2000)$, $(1000,1000)$ and $(2000,2000)$.
In each of the considered scenarios, 200 simulation replications are carried out.
The estimation results are shown in Table \ref{tab-gdina-weak} and Table \ref{tab-gdina-strong}, respectively. 
The true parameter settings behind Table \ref{tab-gdina-weak} correspond to the weaker multi-parameter Assumption \ref{as-weak}, and they are the same as the simulation settings in \cite{xu2018jasa}.
Table \ref{tab-gdina-weak} shows that the first stage estimation yields very high accuracy of estimating rows in $\AA$ (perfect recovery in the considered scenarios), which provides a good basis for proceeding with the second stage of re-estimating rows in $\QQ$. Indeed, the second-stage estimator  $\hat\QQ^{(2)}$ based on the penalized regression approach introduced in Section \ref{sec-est-2p} shows desirable improvement over the first-stage estimator $\hat\QQ^{(1)}$. 
The true parameter settings behind Table \ref{tab-gdina-strong} correspond to the stronger multi-parameter Assumption \ref{as-strong}. As $(N,J)$ increase from $(1000,1000)$ to $(2000,2000)$, the {oversimplified} two-parameter MLE improves to almost perfect recovery of the discrete latent structures. This corroborates Theorem \ref{thm-mis-strong} that when the true parameters underlying a multi-parameter model are more similar to a two-parameter model, the {oversimplified} MLE from one-stage estimation can itself leads to consistency.

In practice, when fitting a SLAM to real data, if it is not clear whether a two-parameter model or a multi-parameter one is more suitable, we recommend performing two-stage estimation as described in Section \ref{sec-est-mult} to improve the estimation accuracy of the $\QQ$-matrix, as empirically shown in Table \ref{tab-gdina-weak}.
Then after performing the two-stage estimation procedure, one can apply some information criterion such as BIC to compare the first-stage estimator under the two-parameter model and second-stage estimator under the multi-parameter model in order to reach a final decision. 
In summary, our simulation studies show that across all the considered scenarios including the challenging case with $2^{15}=32768$, the proposed estimators have good accuracy of recovering the $\qq_j$'s and $\aa_i$'s.

\begin{table}[h!]
  \centering
  \resizebox{\textwidth}{!}{%
  \begin{tabular}{lcccccccc}
    \toprule
    \multirow{2}{*}{\centering $2^K$}  & \multirow{2}{*}{\centering $J$} & \multirow{2}{*}{\centering $N$} & \multicolumn{2}{c}{$\hat\AA$} & \multicolumn{2}{c}{first-stage $\hat\QQ^{(1)}$} 
        & \multicolumn{2}{c}{second-stage $\hat\QQ^{(2)}$} \\
    \cmidrule(lr){4-5}\cmidrule(lr){6-7}\cmidrule(lr){8-9}
    & &  & $\hat\aa_i=\aa_i^{\true}$ & $\hat a_{i,k}=a_{i,k}^{\true}$ 
         & $\hat\qq^{(1)}_j=\qq_j^{\true}$ & $\hat q^{(1)}_{j,k}=q_{j,k}^{\true}$
         & $\hat\qq^{(2)}_j=\qq_j^{\true}$ & $\hat q^{(2)}_{j,k}=q_{j,k}^{\true}$
       \\
    \midrule
    \multirow{3}{*}{\centering $2^{7}$} 
    &  $100$ &  $1000$
        & $0.858$ & $0.978$
        & $0.971$ & $0.996$ 
        & $0.991$ & $0.998$
        \\
    &  $1000$ &  $1000$
        & $1.000$ & $1.000$
        & $0.916$ & $0.988$ 
        & $0.989$ & $0.998$
        \\
    &  $2000$ &  $2000$
        & $1.000$ & $1.000$
        & $0.955$ & $0.994$  
        & $1.000$ & $1.000$
         \\
    \midrule
    \multirow{3}{*}{\centering $2^{10}$} 
        &  $100$ &  $1000$
        &  $0.560$ & $0.941$ 
        &  $0.943$ & $0.993$  
        &  $0.971$ & $0.995$
        \\
    &  $1000$ &  $1000$
        & $1.000$ &  $1.000$
        & $0.910$ & $0.991$  
        & $0.986$ & $0.998$
         \\
    &  $2000$ &  $2000$
        & $1.000$ & $1.000$
        & $0.950$ & $0.995$  
        & $1.000$ & $1.000$
         \\
    \midrule
    \multirow{3}{*}{\centering $2^{15}$} 
    &  $200$ &  $2000$
        & $0.546$ & $0.942$
        & $0.960$ & $0.996$ 
        & $1.000$ & $1.000$
        \\
    &  $1000$ &  $1000$
        & $1.000$ &  $1.000$
        & $0.904$ & $0.994$  
        & $0.983$ & $0.999$
         \\
    &  $2000$ &  $2000$
        & $1.000$ &  $1.000$
        & $0.941$ & $0.996$  
        & $1.000$ & $1.000$
         \\
    \bottomrule
  \end{tabular}
 }
 \caption{Two-stage estimation for multi-parameter model (GDINA) under the \textit{weaker} two-parameter signal with $\mu_{j,\mathcal K_j}=\left(\theta_{j,\one_K}-\theta_{j,\zero_K}\right)/(2^{|\mathcal K_j|}-1)$ where $|\mathcal K_j|=1,\,2,\,3$. 
  }
  \label{tab-gdina-weak}
\end{table}

\begin{table}[h!]
  \centering
  \resizebox{\textwidth}{!}{%
  \begin{tabular}{lrrrccrcc}
    \toprule
    \multirow{2}{*}{\centering $2^K$}  & \multirow{2}{*}{\centering $J$} & \multirow{2}{*}{\centering $N$} & \multicolumn{3}{c}{$\hat\AA$} & \multicolumn{3}{c}{$\hat\QQ$} \\
    \cmidrule(lr){4-6}\cmidrule(lr){7-9}
    & &
    & $\hat\AA=\AA^{\true}$ & $\hat\aa_i=\aa_i^{\true}$ & $\hat a_{i,k}=a_{i,k}^{\true}$ 
    & $\hat\QQ=\QQ^{\true}$ & $\hat\qq_j=\qq_j^{\true}$ & $\hat q_{j,k}=q_{j,k}^{\true}$
       \\
    \midrule
    \multirow{3}{*}{\centering $2^{7}$} 
    &  $100$ &  $1000$
        &  0/200   & $0.888$ & $0.983$
        &  192/200 & $1.000$ & $1.000$\\
    &  $1000$ &  $1000$
        &    200/200 & $1.000$ & $1.000$
        &    115/200 & $0.999$ & $1.000$
                \\
    &  $2000$ &  $2000$
        &  200/200 & $1.000$ & $1.000$
        &  173/200  & $1.000$ & $1.000$
         \\
    \midrule
    \multirow{3}{*}{\centering $2^{10}$} 
    &  $100$ &  $1000$
        &  0/200   & $0.638$ & $0.952$
        &  160/200 & $0.982$ & $0.997$\\
    &  $1000$ &  $1000$
        &  199/200 & $1.000$ & $1.000$
        &  114/200 & $1.000$ & $1.000$
         \\
    &  $2000$ &  $2000$
        & 200/200 & $1.000$ & $1.000$
        & 173/200  & $1.000$ & $1.000$
         \\
    \midrule
    \multirow{3}{*}{\centering $2^{15}$} 
    &  $200$ &  $2000$
        &  0/200  & $0.689$ & $0.968$
        &  171/200  & $0.973$ & $0.994$
        \\
    &  $1000$ &  $1000$
        &  148/200 & $1.000$ & $1.000$
        &  40/200  & $0.998$ & $1.000$\\
    &  $2000$ &  $2000$
        & 200/200  & $1.000$ & $1.000$
        & 170/200  & $1.000$ & $1.000$
         \\
    \bottomrule
  \end{tabular}
  }
 \caption{One-stage estimation for multi-parameter model (GDINA) under the \textit{stronger} two-parameter signal with $\mu_{j,\mathcal K_j}=\left(\theta_{j,\one_K}-\theta_{j,\zero_K}\right)/2$.
  }
  \label{tab-gdina-strong}
\end{table}

\section{Real Data Analysis}\label{sec-real}
We apply the proposed estimation method to real data from an educational assessment, the Trends in International Mathematics and Science Study (TIMSS). This dataset is a subset of the TIMSS 2011 Austrian data for analyzing students' abilities in mathematical sub-competences and is available in the \verb|R| package \verb|CDM|. 
It includes responses of $N=1010$ Austrian fourth grade students and $J=47$ items. Nine ($K=9$) attributes were specified in \cite{george2015cdm}:
(DA) \textit{Data and Applying}, 
(DK) \textit{Data and Knowing}, 
(DR) \textit{Data and Reasoning},
(GA) \textit{Geometry and Applying},
(GK) \textit{Geometry and Knowing}, 
(GR) \textit{Geometry and Reasoning},
(NA) \textit{Numbers and Applying}, 
(NK) \textit{Numbers and Knowing},
(NR) \textit{Numbers and Reasoning};
a provisional $\QQ$-matrix $\QQ^{\text{orig}}$ of size $47\times 9$ was also provided.

One structure specific to such large scale assessments is that only a subset of all items in the entire study is presented to each of students \citep{george2015cdm}. This results in many missing values in the $N\times J$ data matrix, and the considered dataset has a missing rate $51.73\%$. The joint MLE approach can be easily extended to handle the missing data under the ignorable missingness assumption.
Under such an assumption, it indeed suffices to replace the log-likelihood function over the $\{r_{i,j}:\,(i,j)\in[N]\times[J]\}$ by that over $\{r_{i,j}:\,(i,j)\in\Omega\}$, where $\Omega\subseteq[N]\times[J]$ is the set of indices in $\RR$ corresponding to those observed entries.  
In particular, the original log-likelihood function \eqref{eq-ll-2p} under the two-parameter model should be replaced by the following objective function,
\begin{align}
\notag
\ell^{\Omega,\,\text{two}}(\QQ,\,\AA,\,\TT\mid\RR)
=& \sum_{(i,j)\in\Omega}\Big[ r_{i,j}\Big(  \prod_{k}a_{i,k}^{q_{j,k}} \log \theta^+_j + (1-\prod_{k}a_{i,k}^{q_{j,k}}) \log \theta^-_j \Big) \Big]  \\ \notag
&  + (1-r_{i,j}) \Big(  \prod_{k}a_{i,k}^{q_{j,k}} \log(1- \theta^+_j) + (1-\prod_{k}a_{i,k}^{q_{j,k}}) \log(1- \theta^-_j) \Big)\Big].
\end{align}
With missing values in $\RR$, the previous ADG-EM Algorithm \ref{algo-screen} can be replaced by Algorithm \ref{algo-screen-miss} 
presented in the Supplementary Material.

The original $\QQ$-matrix provided in the TIMSS dataset $\QQ^{\text{orig}}$ has each item measuring only one attribute. This $\QQ^{\text{orig}}$ gives the interpretation of the attributes and encodes the domain knowledge about the test items. Therefore, we use $\QQ^{\text{orig}}$ to initialize the proposed algorithm. 
Moreover, we fix $J_{\text{anchor}}=K=9$ ``anchor'' items' row vectors in $\QQ^{\text{orig}}$ along the iterations of the algorithm. The anchor items are chosen such that  their corresponding row vectors form an identity submatrix $I_K$ of the $\QQ$-matrix. By this we hope to fix the interpretation of the $K$ columns as the $K$ provided attributes.
The two-parameter DINA model is often used to model and analyze data from educational assessments.
In the data analysis, we first perform estimation under the two-parameter DINA model and then also proceed with the second-stage estimation as described in Section \ref{sec-est-mult} to estimate $\QQ$-matrix under a multi-parameter GDINA model. But the two-parameter model gives a smaller BIC value and indicates a better fit. So next we only discuss the results given by the two-parameter model fitting.

In the resulting estimator $\QQ^{\text{est}}$, there are ten rows that have more than one nonzero entries, which are presented in Table \ref{tab-timss} together with their item number and item label.
First, for each of these ten items, the estimated $\qq$-vector always measures the attribute originally specified in $\QQ^{\text{orig}}$ (the dark orange entry of ``$\textcolor{orange!80!black}{\one}$'' in each of the ten rows in Table \ref{tab-timss}). 
This implies that the meaning of the original attributes are preserved in our estimation.
In addition, Table \ref{tab-timss} reveals extra information that some items depend on certain additional attributes besides the originally specified one (the dark blue entries of ``$\darkblue\one$'' in Table \ref{tab-timss}). 
For example, items M031379, M031380, M051001 originally are designed to measure attribute (NR) \textit{Number and Reasoning}, but the estimated $\QQ^{\text{est}}$ implies they also depend on the attribute (NA) \textit{Number and Applying}. 
In particular, the third item M051001 ``Soccer tournament'' asks:  
\textit{in a soccer tournament, teams get:
3 points for a win,
1 point for a tie,
0 points for a loss.
Zedland has 11 points.
What is the smallest number of games Zedland could have played?}
This is a difficult question for fourth graders and targets complicated skills in the content domain ``Number''; its difficulty is reflected in our estimation result that this item's estimated $\qq$-vector depends on all of the three attributes about ``Number'': (NA), (NK), and (NR). 
Table \ref{tab-timss} shows that items generally seem to have some clustered dependence on attributes falling in the same cognitive domain or the same content domain: attributes (NA), (NK), (NR) in the content domain ``Number'' are often measured together, and attributes (GA) and (NA) in the cognitive domain ``{Applying}'' are often measured together.

\begin{table}[h!]
  \centering
  \resizebox{\textwidth}{!}{%
  \begin{tabular}{llccccccccc}
  \toprule
  \multirow{2}{*}{\centering Item No.} &
  \multirow{2}{*}{\centering Item Label} & 
  \multicolumn{9}{c}{Attributes}\\
   \cmidrule(lr){3-11}
 & &
DA &
DK & 
DR &
GA &
GK & 
GR &
NA & 
NK &
NR\\
  \midrule
  M031379 & Trading sports cards & 0 & 0 & 0 & 0 & 0 & 0 & $\darkblue\one$ & 0 & $\textcolor{orange!80!black}{\one}$\\
  M031380 & Trading cartoon cards & 0 & 0 & 0 & 0 & 0 & 0 & $\darkblue\one$ & 0 & $\textcolor{orange!80!black}{\one}$\\
  M051001 & Soccer tournament & 0 & $\darkblue\one$ & 0 & 0 & 0 & 0 & $\darkblue\one$ & $\darkblue\one$ & $\textcolor{orange!80!black}{\one}$\\
  M051015 &  Complete Jay's shape & 0 & 0 & 0 & $\textcolor{orange!80!black}{\one}$ & 0 & 0 & $\darkblue\one$ & 0 & 0\\
  M051123 & Lines of symmetry complex figure &  0  &   0  &   0  &   $\darkblue\one$  &   $\textcolor{orange!80!black}{\one}$  &   0  &   0 &    $\darkblue\one$   &  0\\
  M041098 & How many cans must Sean buy & 0   &  0  &   0    &   $\darkblue\one$  &   0    & 0   &  $\textcolor{orange!80!black}{\one}$   &  0   & 0\\
  M041104 & Number between 5 and 6 & 0  &   0 &    0  &   $\darkblue\one$  &   $\darkblue\one$  &   0  &   0  &   $\textcolor{orange!80!black}{\one}$   &  0\\ 
  M041299 & Fraction of the cake eaten & 0  &   0  &   0  &   $\darkblue\one$   &  0 &    0  &  $\darkblue\one$  &     $\textcolor{orange!80!black}{\one}$   &  0\\ 
  M041143 & Identify shapes in the picture & 0  &   0    & 0 &    $\textcolor{orange!80!black}{\one}$  &   $\darkblue\one$   &  0    & 0 &    0  &   0\\
  M051006 & Cost of ice cream &  $\darkblue\one$  &  0    &  0  &    $\darkblue\one$  &   0   &  0  &   $\textcolor{orange!80!black}{\one}$  &   0   &  0\\ 
  \bottomrule
 \end{tabular}
 } 
 \caption{The ten multi-attribute rows in the estimated $\QQ^{\text{est}}$ for the TIMSS 2011 Austrian data. The entries ``$\darkblue\one$''s in dark blue are those that are estimated to be 1 but originally are 0 in $\QQ^{\text{orig}}$; the entries ``$\textcolor{orange!80!black}{\one}$''s in dark orange are those that are originally 1 in $\QQ^{\text{orig}}$.}
 \label{tab-timss}
\end{table}

\begin{figure}[h!]
	\centering
	\includegraphics[width=0.55\textwidth]{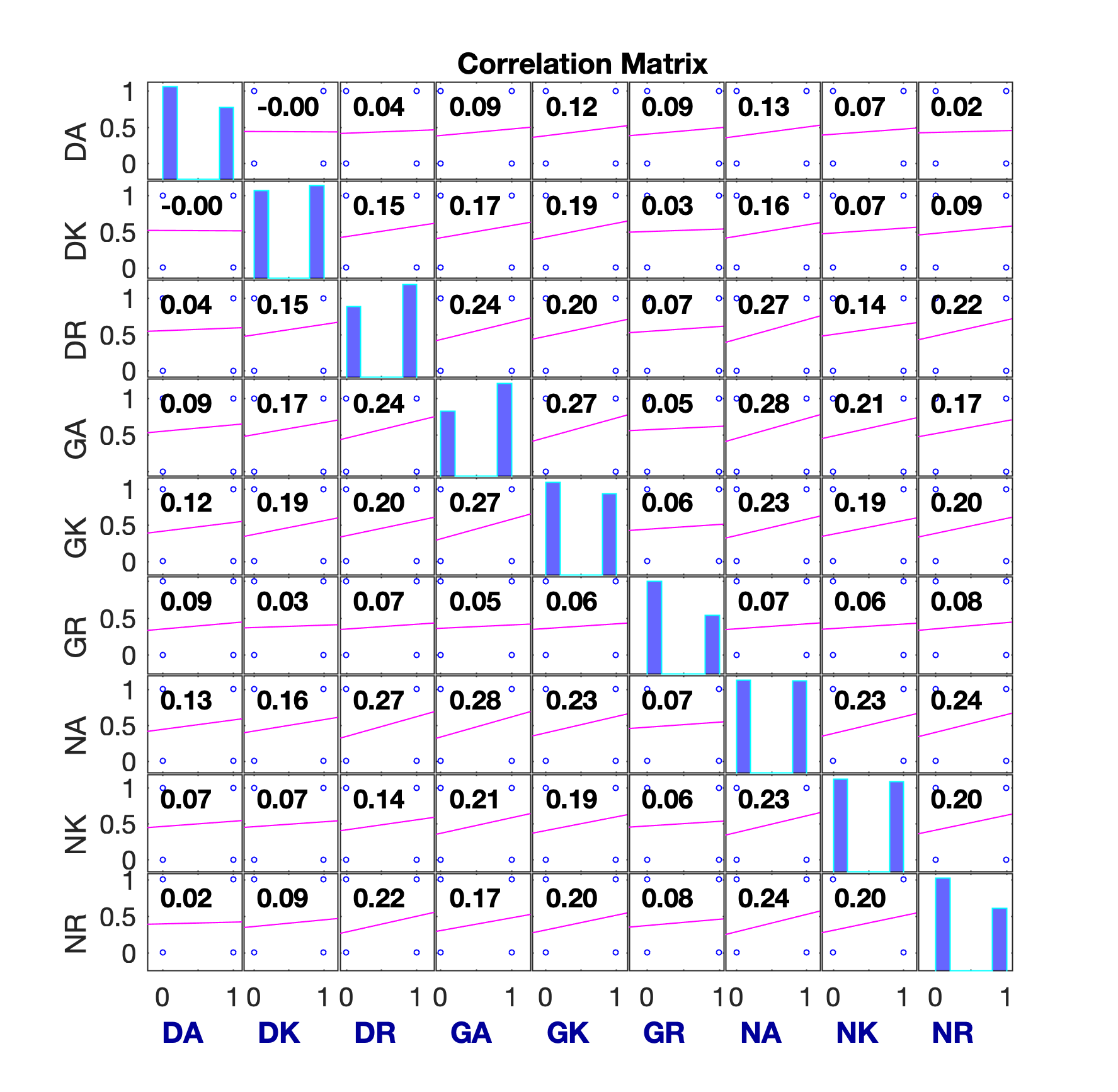}
	\caption{Correlation plots of the $K=9$ latent attributes based on the estimated $\AA^{\text{est}}$.}
	\label{fig-corr}
\end{figure}

We also examine the estimated $N\times K$ matrix $\AA^{\text{est}}$ for the $N=1010$ students. 
Based on $\AA^{\text{est}}$, marginally,  students master skills regarding \textit{Data} (average mastery 50.63$\%$) better than \textit{Geometry} (average mastery 48.12$\%$)  and \textit{Number} (average mastery  46.50$\%$); and they master skills regarding \textit{Applying} (average mastery  50.83$\%$) and \textit{Knowing} (average mastery  49.17$\%$) better than \textit{Reasoning} (average mastery  45.25$\%$).
The ``average mastery'' above is calculated as follows: for \textit{Data}, the average mastery is taken to be the average of the three columns of $\AA^{\text{est}}$ corresponding to DA, DK, DR.
Figure \ref{fig-corr} further shows the pairwise correlations between the nine attributes based on $\AA^{\text{est}}$. It can be seen that the attributes falling in the same content domain \textit{Number} do show relatively high correlations, where the three pairwise correlations between NA, NK, NR are 0.23, 0.24, 0.20. The correlation between GA and NA is 0.28, also  high. This aligns with our earlier observation that the estimated row vectors in $\QQ^{\text{est}}$ also tend to measure these attributes together.

\section{Discussion}\label{sec-disc}
This paper investigates the joint MLE approach to large-scale structured latent attribute analysis from both the theoretical and methodological perspectives. 
We provide theoretical guarantees for the estimability and consistency of the latent structures in the regime where all of the number of individuals, the number of observed variables, and the number of latent attributes can grow large. 
The obtained estimation error bounds not only guarantee asymptotic consistency of estimating both the variable loading vectors and subject latent profiles, but also offer insights into their estimation accuracies with finite samples. 
These consistency results also give practical implications for designing the $\QQ$-matrix in cognitive diagnostic applications.
For computation, we develop a scalable approximate algorithm to find the joint MLE of two-parameter SLAMs and also propose an effective two-stage estimation procedure for multi-parameter SLAMs.
Simulation studies and real data analysis demonstrate the usefulness of the proposed estimation approaches.

The developments in this work also open up several possibilities for future research. On the methodological side, based on the established results on consistency and finite sample error bounds for estimating latent structures, an interesting future task is performing statistical inference on SLAMs with a large number of test items and high-dimensional latent attributes. 
On the computational front, it would be interesting to relate or generalize the idea of the proposed estimation algorithm to other discrete optimization problems; it is also desirable to investigate the algorithm's theoretical properties in the future. 

This paper focuses on the discrete latent attribute modeling framework, and we include a particular study of misspecifying a multi-parameter SLAM to a two-parameter submodel motivated by computational needs and scientific practices. Besides such possible oversimplification, there could be other types of misspecifications, such as potentially misspecifying the continuous latent variables to be discrete.
As for this, we point out that this work does not intend to replace continuous latent factor modeling with the discrete counterpart, but rather to complement the former in suitable applications. 
In the future, it would be interesting to study consequences of the potential misspecification of continuous latent variables to discrete ones to elucidate their differences and connections.

On a final note, many specific models belonging to the SLAM family were initially proposed in the literature of cognitive diagnostic modeling. But this modeling framework's unique advantages of capturing fine-grained latent information and providing model-based clustering allow for applications far beyond this discipline. 
For example, similar modeling approaches have recently been employed in psychiatric evaluation \citep{dela2018}, disease epidemiology diagnosis \citep{o2019causes}, electronic health records \citep{ni2020double}, and precision medicine \citep{chen2020itr}. Just like the continuous latent factor analysis is nowadays widely used \citep{fan2021robust, chenlizhang2020, bing2020adaptive} beyond its initial application in psychometrics,
we believe the multidimensional discrete latent trait modeling also has great future promise in broader fields and warrants further statistical developments. By introducing and analyzing a principled joint MLE approach here, we hope this work contributes a step towards that farreaching goal.

\bigskip\noindent
\textbf{Supplementary Material.} 
The Supplementary Material contains all the technical proofs of the theoretical results and also includes additional discussion on computation.

\bigskip\noindent
\textbf{Acknowledgements.}  This research was supported by NSF CAREER SES-1846747, DMS-1712717, SES-1659328, and also by NIH NIEHS R01ES027498, R01ES028804. This research has also received funding from the European Research Council under the European Union's Horizon 2020 research and innovation program (grant agreement No 856506). The authors thank the Editor, Associate Editor, and reviewer for helpful and constructive comments.

\singlespacing
\bibliographystyle{apalike}
\bibliography{ref_jmle}

\spacingset{1.45} 

\newpage
\setcounter{section}{0}
\renewcommand{\thealgocf}{S.\arabic{algocf}}
\renewcommand{\thesection}{S.\arabic{section}}  
\renewcommand{\theequation}{S.\arabic{equation}}

\begin{center}
{\Large\textbf{Supplement to ``A Joint MLE Approach to Large-Scale Structured Latent Attribute Analysis''}}
\end{center}

\bigskip
\bigskip
\bigskip

This Supplementary Material is organized as follows.
Section \ref{sec-add} presents some additional simulation examples.
Section \ref{sec-pfs} gives the proofs of Theorems \ref{thm-joint-both}--\ref{thm-mis-strong} presented in the main text together with the proofs of several technical lemmas. 
Section \ref{sec-addalgo} presents additional algorithms, including one for estimation in the missing data scenario.

\section{Additional Simulation Studies}\label{sec-add}
In Subsection \ref{subsec-conv}, we present two specific simulation examples illustrating the convergence behavior and accuracy of the proposed ADG-EM algorithm, Algorithm \ref{algo-screen}. Then  Subsection \ref{subsec-conv2} further examines the algorithm's convergence behavior through a replicated simulation study.
In Subsection \ref{subsec-1stage}, we provide a simulation example illustrating the performance of the one-stage approximate estimator for multi-parameter SLAMs.

\subsection{Convergence behavior of the ADG-EM algorithm}\label{subsec-conv}

\normalfont{
We present two specific examples to visualize the intermediate results of Algorithm \ref{algo-screen}.
These two examples are both in the setting $(N,J,K)=(1000,1000,7)$ with parameters  $30\%=\theta_j^-=1-\theta_j^+$ for all $j\in[J]$, and all the latent attribute patterns are evenly distributed.
The data-generating $\QQ$-matrix vertically stacks $J/(2K)$ copies of submatrix $I_K$ and an appropriate number of another $K\times K$ submatrix $\QQ^2_{\text{\normalfont{block}}}=(q^{(2)}_{j,k})$, where $q^{(2)}_{k,k}=1$ for $k\in[K]$, $q^{(2)}_{k,k+1}=1$ for $k\in\{1,\ldots,K-1\}$ and $q^{(2)}_{K,1}=1$.

In the first example, we use ``randomly perturbed initialization'' for $(\QQ,\AA)$. 
Figure \ref{fig-bifac-perturb} shows the results of Algorithm \ref{algo-screen} together with its intermediate results along the first 4 iterations of the stochastic EM steps. The 6 plots in the first row of Figure \ref{fig-bifac-perturb} show the reconstruction of the data matrix $\RR$, and the 6 plots in the second row of Figure \ref{fig-bifac-perturb} show the estimation of  $\QQ$.
 Specifically, after the $t$-th iteration, based on the $\hat \QQ^{\text{\normalfont{iter.} }t}$, the $\hat \RR^{\text{\normalfont{iter.} }t}$ is reconstructed with the $(i,j)$th entry defined as
\begin{equation}
	\label{eq-recon}
	I\left(\hat\theta_j^+\cdot\xi_{\hat\qq_j,\hat{\bo a}_i} 
	+ \hat\theta_j^- \cdot(1-\xi_{\hat\qq_j,\hat{\bo a}_i})
	> \frac{1}{2} \right),\quad (i,j)\in[N]\times [J].
\end{equation}
which is the integer (0 or 1) nearest to the posterior mean of $(i,j)$th entry of $\RR$.
The ground truth for $\RR$ is just the $N\times J$ ideal response matrix in the noiseless case $\RR^{\ideal} = (r^{\ideal}_{i,j})$, where $r^{\ideal}_{i,j} = \xi(\qq_j,\bo a_i) = \prod_{k=1}^K a_{i,k}^{q_{j,k}}$.
Along the first 3 stochastic EM iterations, the matrix $\QQ$ change 2246, 275, 11 entries, respectively. Then  from the 4th iteration until the stopping criterion is reached, we observe that all the entries of $\QQ$ remain the same during the sampling in the E step. In the last several iterations the item parameters $(\ttt^+,\ttt^-)$ continued to change slightly and converge. 
Let $(r^{\text{\normalfont{observe}}}_{i,j})$ and $(r^{\text{\normalfont{recons}}}_{i,j})$ denote the observed noisy data matrix and the reconstructed data matrix in the end of the algorithm, respectively. Corresponding to the trial in Figure \ref{fig-bifac-perturb}, there is
 \begin{align*}
 	\frac{1}{NJ}\sum_{(i,j)\in[N]\times[J]} 
 	I(r^{\ideal}_{i,j}\neq r^{\text{\normalfont{observe}}}_{i,j}) = 0.2995,\quad
 	\frac{1}{NJ}\sum_{(i,j)\in[N]\times[J]} 
 	I(r^{\ideal}_{i,j}\neq r^{\text{\normalfont{recons}}}_{i,j}) = 5.21\times 10^{-5}.
 \end{align*}
 In the above display, the $0.2995$ reflects the noise rate in the observed data matrix corresponding to $1-\theta_j^+=\theta_j^-=0.3$ for each $j\in[J]$; and the $5.21\times 10^{-5}$ represents the error rate of reconstructing the $N\times J$ ideal response matrix, which is far smaller than the initial noise rate by several magnitudes. Indeed, there is no discernible difference between $\RR_{\text{iter. }4}$ and $\RR^{\ideal}$ based on the two rightmost plots in the first row of Figure \ref{fig-bifac-perturb}.
}

\begin{figure}[h!]
	\centering

    \begin{minipage}[b]{\textwidth}
	\centering
	    \includegraphics[width=\textwidth]{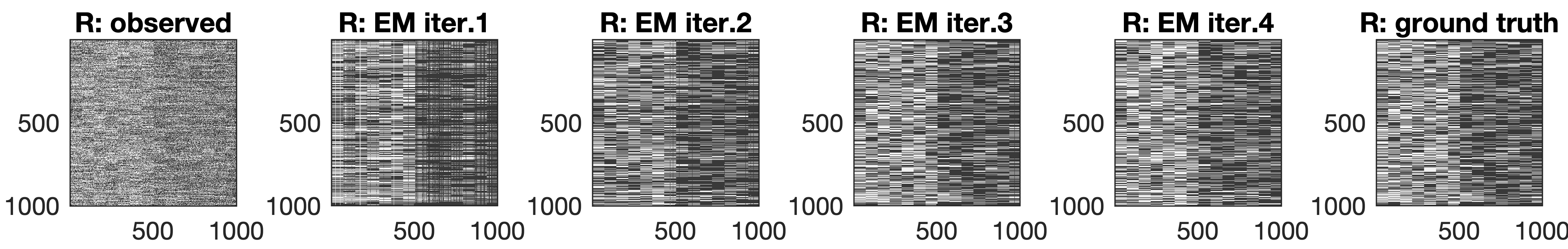}
	\end{minipage}

	\begin{minipage}[b]{1\textwidth}
	    \centering
	    \includegraphics[width=\textwidth]{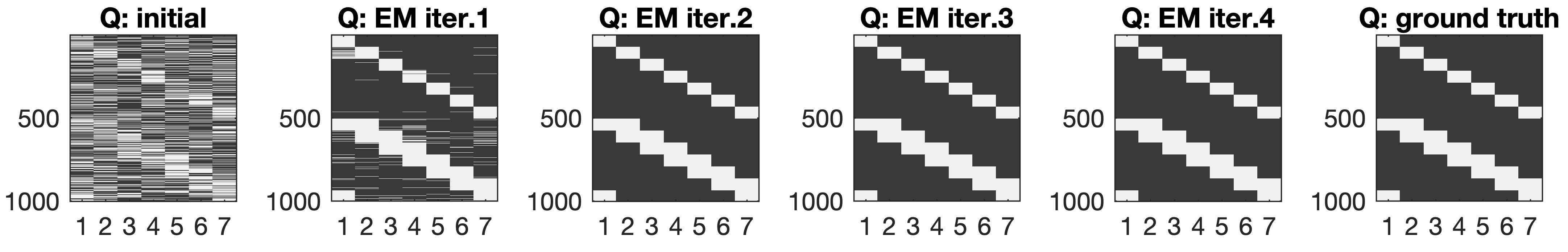}
	\end{minipage}
	
	\vspace{-3mm}
	\begin{tikzpicture}
	\centering
	    \node (hold1) at (0, 0) {};

		\node (Qini) at (1.2, 0) {};
		
		\node (Q1) at (4, 0) {};
		
		\node (Q2) at (6.8, 0) {};

        \node (Q3) at (9.6, 0) {};
        
        \node (Q4) at (12.4, 0) {};
        
        \node (Qtrue) at (15.2, 0) {};
        
        \node (hold2) at (18, 0) {};

        \def\myshift#1{\raisebox{-2.5ex}}
		\draw [->,thick,postaction={decorate,decoration={text along path,text align=center,text={|\tiny\myshift|2246 entries change}}}] (Qini) to [bend right=45]  (Q1);
				
		\draw [->,thick,postaction={decorate,decoration={text along path,text align=center,text={|\tiny\myshift|275 entries change}}}] (Q1) to [bend right=45]  (Q2);
		
		\draw [->,thick,postaction={decorate,decoration={text along path,text align=center,text={|\tiny\myshift|11 entries change}}}] (Q2) to [bend right=45]  (Q3);
		
		\draw [->,thick,postaction={decorate,decoration={text along path,text align=center,text={|\tiny\myshift|0 entry change}}}] (Q3) to [bend right=45]  (Q4);
		
		\draw [<->,dashed,thick,postaction={decorate,decoration={text along path,text align=center,text={|\tiny\myshift|exactly recovered!}}}] (Q4) to [bend right=45]  (Qtrue);
	\end{tikzpicture}
\caption{Estimation with \textit{randomly perturbed initialization}. Color {white} represents value ``1'' and color {black} represents value ``0''. Only 3 stochastic EM iterations suffice for perfect estimation of the structural matrix $\QQ$.
}
\label{fig-bifac-perturb}
\end{figure}

\begin{figure}[h!]
	\centering

    \begin{minipage}[b]{\textwidth}
	\centering
	    \includegraphics[width=\textwidth]{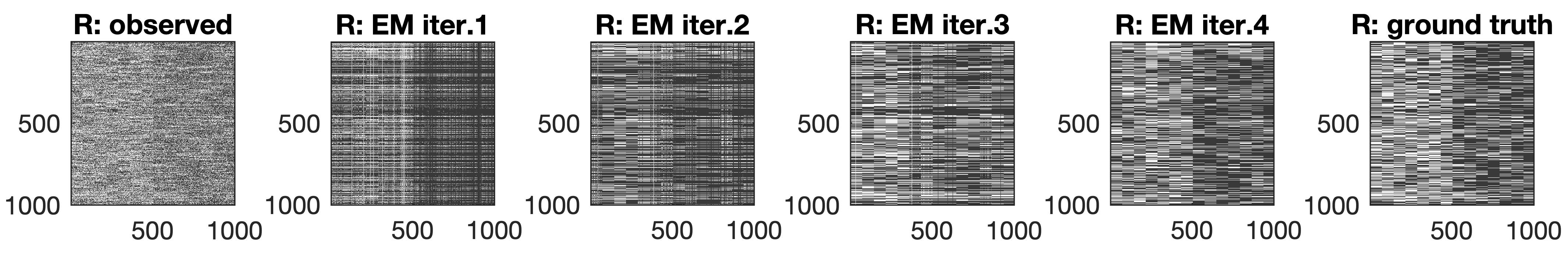}
	\end{minipage}

	\begin{minipage}[b]{1\textwidth}
	    \centering
	    \includegraphics[width=\textwidth]{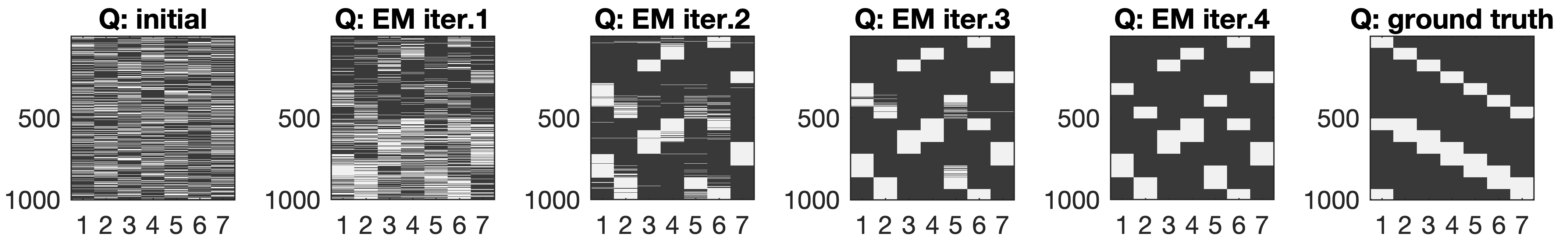}
	\end{minipage}
	
	\vspace{-3mm}
	\begin{tikzpicture}
	\centering
	    \node (hold1) at (0, 0) {};

		\node (Qini) at (1.2, 0) {};
		
		\node (Q1) at (4, 0) {};
		
		\node (Q2) at (6.8, 0) {};

        \node (Q3) at (9.6, 0) {};
        
        \node (Q4) at (12.4, 0) {};
        
        \node (Qtrue) at (15.2, 0) {};
        
        \node (hold2) at (18, 0) {};

        \def\myshift#1{\raisebox{-2.5ex}}
		\draw [->,thick,postaction={decorate,decoration={text along path,text align=center,text={|\tiny\myshift|2312 entries change}}}] (Qini) to [bend right=45]  (Q1);
				
		\draw [->,thick,postaction={decorate,decoration={text along path,text align=center,text={|\tiny\myshift|1746 entries change}}}] (Q1) to [bend right=45]  (Q2);
		
		\draw [->,thick,postaction={decorate,decoration={text along path,text align=center,text={|\tiny\myshift|400 entries change}}}] (Q2) to [bend right=45]  (Q3);
		
		\draw [->,thick,postaction={decorate,decoration={text along path,text align=center,text={|\tiny\myshift|141 entries change}}}] (Q3) to [bend right=45]  (Q4);
		
		\draw [<->,dashed,thick,postaction={decorate,decoration={text along path,text align=center,text={|\tiny\myshift|exactly recovered!}}}] (Q4) to [bend right=45]  (Qtrue);
	\end{tikzpicture}
\caption{Estimation with \textit{entirely random initialization}. Color {white} represents value ``1'' and color {black} represents value ``0''. 
Only 4 stochastic EM iterations of the proposed ADG-EM Algorithm \ref{algo-screen} suffice for almost perfect decomposition and reconstruction. The stochastic $\QQ$ after 4 iterations is identical to the true $\QQ$ after a column permutation. }
\label{fig-bifac-K7}
\end{figure}

In the second visualization example, we use ``entirely random initialization'' to obtain the  $(\QQ_{\text{ini}},\AA_{\text{ini}})$ as input to Algorithm \ref{algo-screen}. 
Figure \ref{fig-bifac-K7} shows the results of Algorithm \ref{algo-screen} together with its intermediate results along the first 4 iterations of the stochastic EM steps.
Along the first 4 stochastic EM iterations, the matrix $\QQ$ changed 2312, 1746, 400, 141 entries, respectively. Then  from the 5th iteration until the stopping criterion is reached, all the entries of $\QQ$ remain the same. 
With this entirely random initialization mechanism, the finally obtained $\hat\QQ$ only differs from $\QQ_{\text{true}}$ by a column permutation. This  permutation of the latent attributes is the inevitable and trivial ambiguity associated with estimating a $\QQ$-matrix  \citep{chen2015statistical}.
The proposed ADG-EM algorithm also succeeds in this scenario.
For Figure \ref{fig-bifac-K7}, the reconstruction result for the data matrix $\RR$ with noise rate $30\%$ is $7.20\times 10^{-5}$.
This high reconstruction accuracy shows that estimating $\QQ$ up to a column permutation does not compromise reconstructing $\RR$ at all.

\subsection{Simulations Examining the Convergence of Algorithm \ref{algo-screen}}
\label{subsec-conv2}
In this simulation study, we further examine the convergence behavior of Algorithm \ref{algo-screen}, following up the previous Subsection \ref{subsec-conv}.
We still simulate data in the scenario $(N,J,K)=(1000,1000,7)$ considered in Subsection \ref{subsec-conv}, with $1-\theta_j^+=\theta_j^-=30\%$ for all $j\in[J]$ and the $2^7=128$ latent profiles are approximately evenly distributed. 
The ground truth $1000\times 7$ matrix $\QQ$ is visualized in the bottom-right plot in Figure \ref{fig-bifac-perturb}, with color white representing value ``1'' and color black representing value ``0''. 
For each of 200 simulated datasets, we apply our ADG-EM Algorithm \ref{algo-screen} alone to estimate $\QQ$ and reconstruct the ideal case $\RR$ using expression \eqref{eq-recon}.
The initializations $\{\QQ_{\text{ini}}\}$'s are obtained from randomly perturbing about one third entries in the true $\QQ$ in each run. Instead of specifying a stopping criterion based on the convergence of the objective function, in the current experiment we just run exactly  10 stochastic EM iterations in Algorithm \ref{algo-screen}; we record the number of entry-differences between the estimated $\QQ$ and the true $\QQ_{\text{true}}$ along each EM iteration, and present the corresponding boxplot in Figure \ref{fig-box}(b). 
In addition, we record the number of entry-differences between $\QQ_{\text{true}}$ and the initial value $\QQ_{\text{ini}}$, which is given as input to the algorithm, and present the boxplot based on 200 runs in Figure \ref{fig-box}(a). 

The two boxplots in Figure \ref{fig-box} show the  convergence performance and estimation accuracy of the proposed ADG-EM algorithm. 
Out of the $1000\times 7=7000$ entries in $\QQ$, although the initialization of $\QQ$ differs from the true one by more than 2000 entries on average, after just one stochastic EM iteration, the number of entry-differences between $\QQ_{\text{iter.}~1}$ and $\QQ_{\text{true}}$  decreases to less than $300$ entries in most cases. After just 3 stochastic EM iterations, for a vast majority of the 200 datasets, the $\QQ_{\text{true}}$ is perfectly recovered and remains unchanged in further iterations of the algorithm. Indeed, after 10 iterations, for each of the 200 simulated datasets, the $\QQ_{\text{true}}$ is exactly recovered. 

\begin{figure}[h!]
\centering
	\begin{minipage}[c]{0.35\textwidth}
	\centering
	  \includegraphics[width=\textwidth]{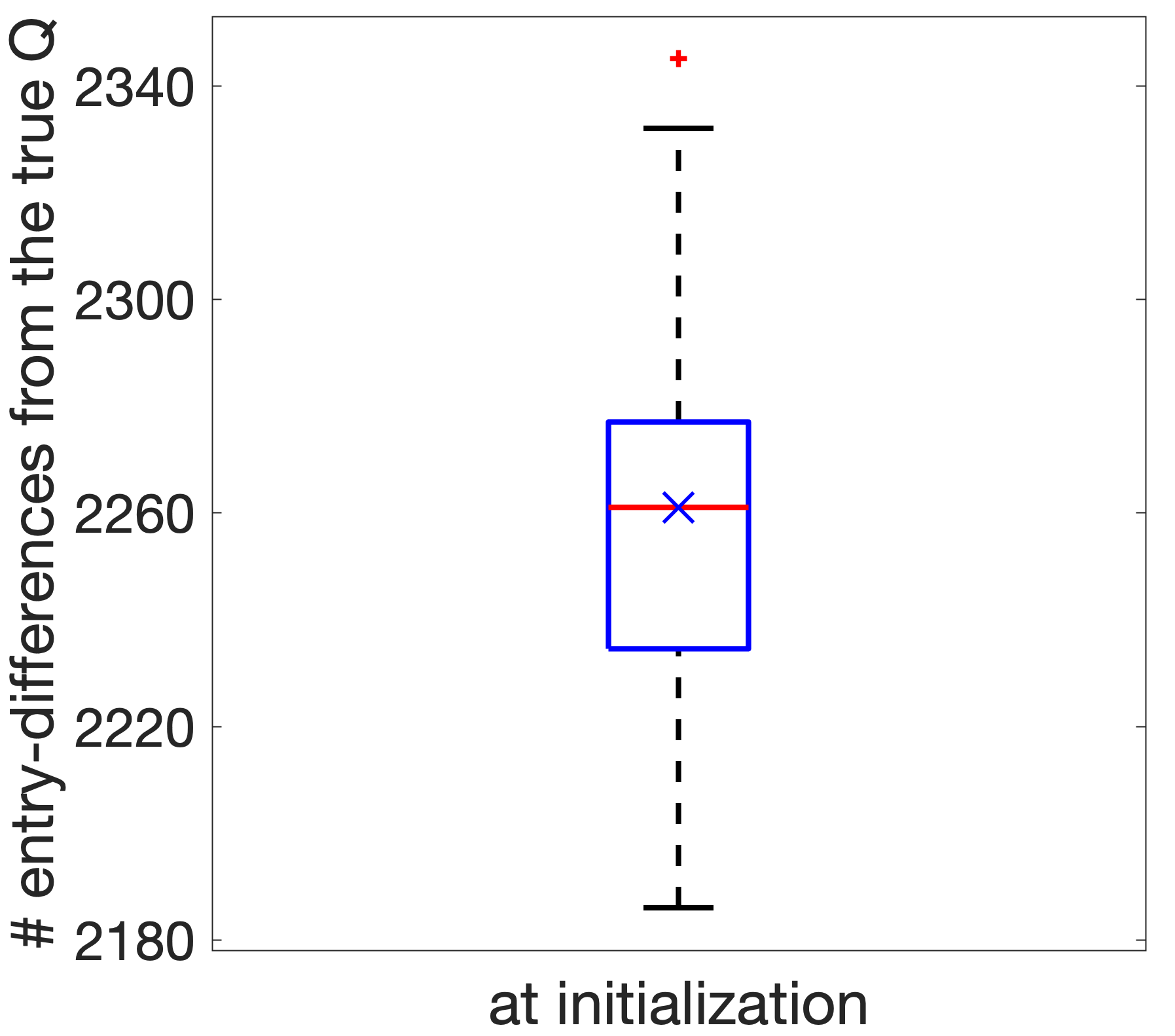}
	\end{minipage}
	\hfill
	\begin{minipage}[c]{0.62\textwidth}
	\centering
\includegraphics[width=\textwidth]{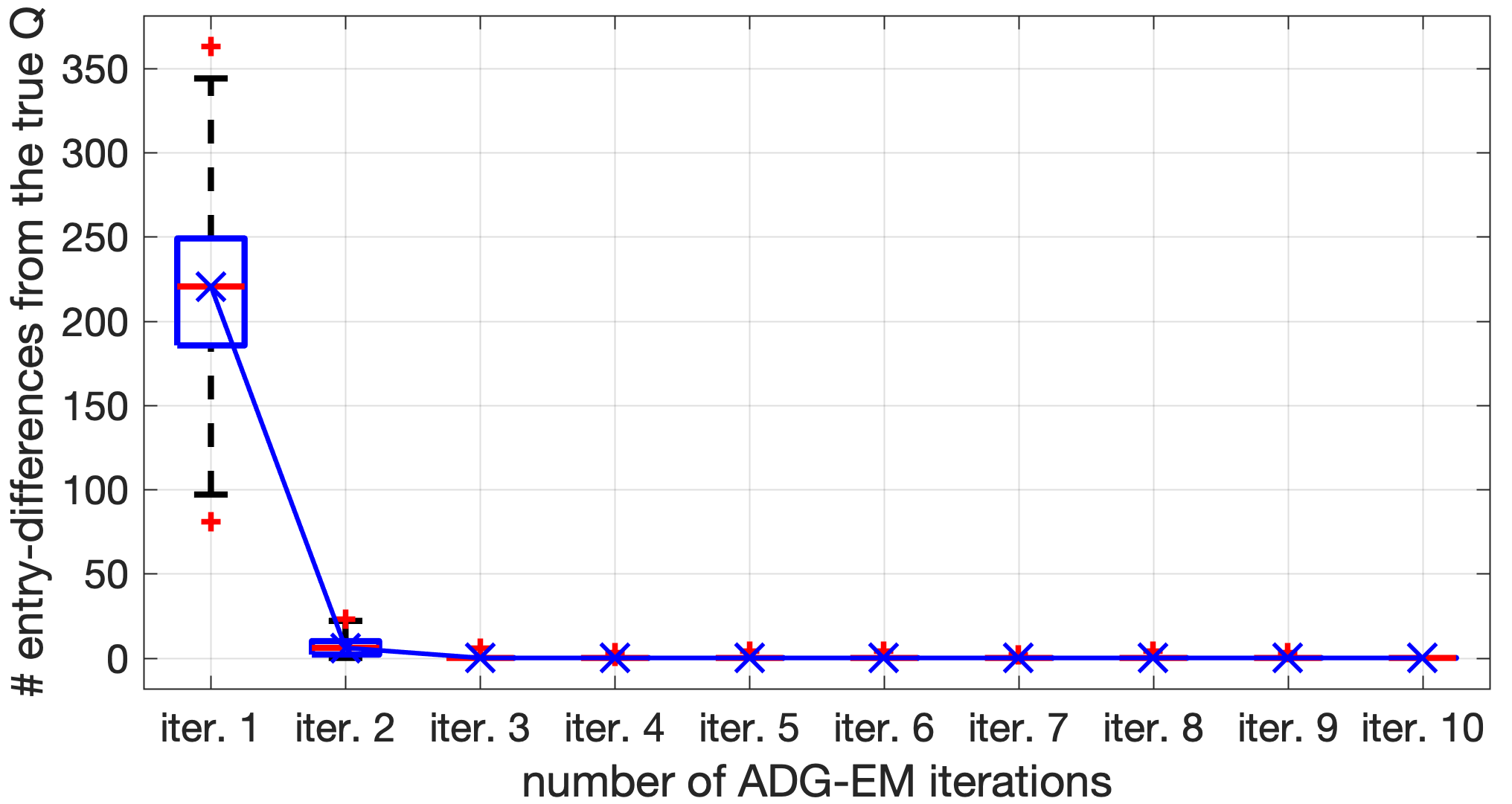}
	\end{minipage}

	\begin{minipage}[b]{0.33\textwidth}
		\centering
		(a) $\#$ entry-differences between $\QQ_{\text{ini}}$ and $\QQ_{\true}$
	\end{minipage}\hfill
	\begin{minipage}[b]{0.63\textwidth}
		\centering
		(b) $\#$ entry-differences between $\QQ_{\text{iter}}$ and $\QQ_{\true}$
	\end{minipage}

\caption{Algorithm \ref{algo-screen}'s convergence behavior. (a): boxplot of entry-differences between the initialization $\QQ_{\text{ini}}$ and the true $\QQ_{\text{true}}$, with size $1000\times 7$; (b) entry-differences between $\QQ_{\text{iter}}$  and $\QQ_{\true}$ along the first ten iterations. Results are based on 200 simulations. }
	\label{fig-box}
\end{figure}


\subsection{One-stage estimation corresponding to Theorem \ref{thm-mis-strong}}\label{subsec-1stage}
When the data-generating multi-parameter model satisfies the stronger Assumption \ref{as-strong}, Theorem \ref{thm-mis-strong} indicates that directly maximizing the misspecificed two-parameter likelihood suffices for estimating $(\QQ,\AA)$ consistently.
We present the following example to illustrate the behavior of the proposed method in this scenario of the existence of ``stronger two-parameter signal''.
Similar to Example \ref{exp-weak} in the main text, we still generate data with $(N,J,K)=(2400,1200,3)$ under the multi-parameter GDINA model proposed in \cite{dela2011}.
The true $\QQ$ still takes the same form as that in the previous Subsection \ref{subsec-conv}. The difference is on the specification of item parameters. 
Here we set $\theta_{j,\zero_K}=0.2$ and $\theta_{j,\one_K}=0.8$, and set the $\mu_{j,\mathcal K_j}$ corresponding to the highest order of interaction among the required attributes to be $\frac{1}{2}(\mu_{j,\one_K} - \mu_{j,\zero_K})$. And we set all the remaining interaction-effect and main-effect parameters to be equal. That is, the $\mu$-parameters in \eqref{eq-alleff} are
\begin{align*}
    \theta_{j,\zero_K} =&~ \mu_{j,\varnothing} = 0.2,\qquad
    \theta_{j,\one_K} = \sum_{S\subseteq \mt K_j}\mu_{j,S} = 0.8;\\
    \mu_{j,\mt K_j} =&~ \frac{\theta_{j,\one_K} - \mu_{j,\zero_K}}{2},
    \quad
	\mu_{j,S} = \frac{\theta_{j,\one_K} - \theta_{j,\zero_K}}{2} \cdot \frac{1}{2^{|\mt K_j|}-2} ~~\text{for any}~~  S\subsetneqq\mathcal K_j,~S\neq\varnothing.
\end{align*}
Figure \ref{fig-strong} presents the estimation results. It shows that in this example, the misspecified MLE has perfect performance on recovering both $\QQ^0$ and $\AA^0$ exactly.

\begin{figure}[h!]
\centering
\includegraphics[width=0.85\linewidth]{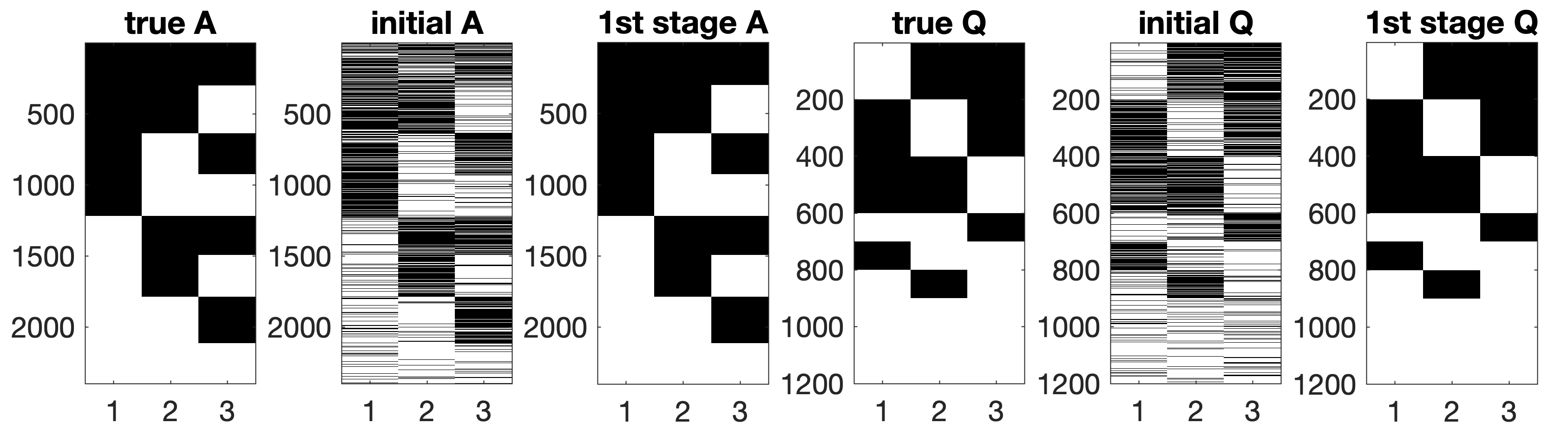}

\includegraphics[width=0.85\linewidth]{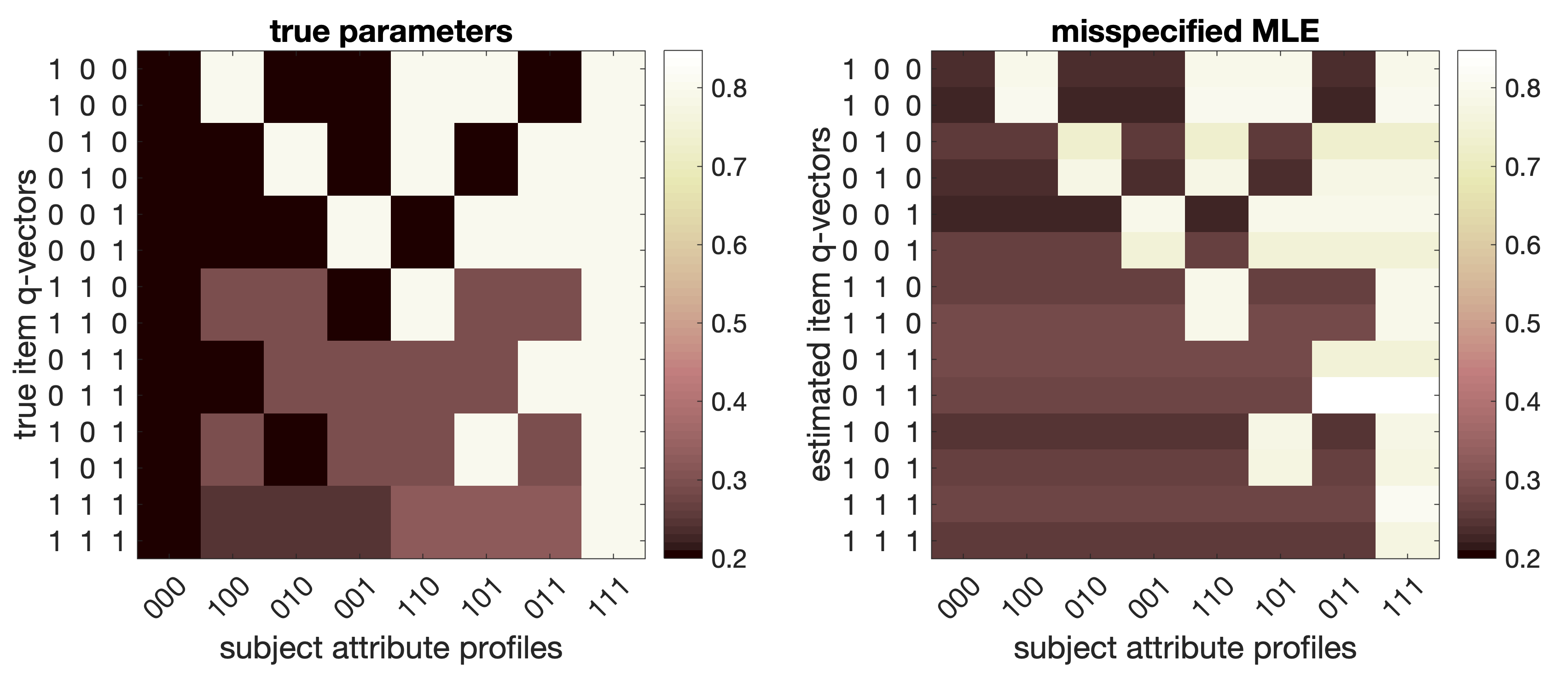}
\caption{\textit{Stronger} two-parameter signal (Assumption \ref{as-strong}) with $\mu_{j,\mathcal K_j}=\left(\theta_{j,\one_K}-\theta_{j,\zero_K}\right)/2$. 
\textbf{Upper row}: $\QQ^0,\QQ^{1\text{st}}_{\text{\normalfont{ini}}},\hat\QQ$ and $\AA^0,\AA_{\text{\normalfont{ini}}},\hat\AA$, where white denotes ``1'' and black denotes ``0''.
 \textbf{Bottom left}: true item parameters $\TT$ for 14 items out of the $J=1200$ items used to generate data under the GDINA model. \textbf{Bottom right}: estimated item parameters $\hat\TT$ corresponding to the same 14 items from the misspecified two-parameter MLE.}
 \label{fig-strong}
\end{figure}

\section{Technical Proofs}\label{sec-pfs}
We introduce some useful notation to facilitate the proofs.
Given a specific modeling assumption such as the two-parameter model or the multi-parameter model, the two binary matrices $\QQ$ and $\AA$ define the ideal response structure $\phi(\qq_j,\aa_i)$ under the two-parameter SLAM or $\xi(\qq_j,\aa_i)$ under the multi-parameter SLAM, as introduced in the main text.
If viewing $\qq_j$ as fixed and varying $\aa_i\in\{0,1\}^K$, item $j$ induces a ``local'' latent class model which categorizes the $2^K$ latent attribute patterns into several classes based on the ideal responses. In particular, under a two-parameter SLAM such as the DINA model, the number of local latent classes induced by each item is always 2, while that under a multi-parameter SLAM such as the GDINA model is $L_j:=2^{K_j}$, where $K_j=\sum_{k=1}^K q_{j,k}$ denotes the number of attributes measured by item $j$.
Therefore, we use a general notation $\ZZ=(z_{i,j})$ to refer to the collection of the latent class structures across all the items $j\in\{1,\ldots,J\}$, where $z_{i,j}$ denotes the latent class membership of individual $i$ for item $j$. 
Then under a SLAM with $K$ latent attributes, the index $i$ for $z_{i,j}$ can vary across all the $2^K$ latent classes; we also write $L=2^K$ for brevity.
Using this notation, we can denote by $\theta_{j,z_{i,j}}$ the item parameter of item $j$ corresponding to the local latent class that individual $i$ belongs to. 
For notational simplicity, we sometimes slightly abuse the notation and write $\theta_{j,z_{i,j}}$ simply as $\theta_{j,z_{i}}$. 
Denote the true parameters that generate the data by $(\TT^0,\,\QQ^0,\,\AA^0) = (\TT^{\true}, \QQ^{\true}, \AA^{\true})$ for notational simplicity.
Define
\begin{align}\label{eq-def-pij}
	P_{i,j}=&~\mathbb P(r_{i,j}=1)=\theta^0_{j,z_i^0}.
\end{align}
Denote
\begin{align*}
	\ell(\mathbf Z, \TT\mid \mathbf R) 
	= \sum_{i=1}^N \sum_{j=1}^J \Big\{ r_{i,j}\log(\theta_{j,z_i}) + (1-r_{i,j})\log(1-\theta_{j,z_i}) \Big\}.
\end{align*}
Denote the expectation of the above $\ell(\mathbf Z, \, \TT\mid \mathbf R)$ by 
\begin{align}\label{eq-ell}
	\bar\ell(\mathbf Z, \TT) 
	= \ME [\ell(\mathbf Z, \, \TT\mid \mathbf R)]
	=&~ \sum_{i=1}^N \sum_{j=1}^J \Big\{ P_{i,j}\log(\theta_{j,z_i}) + (1-P_{i,j})\log(1-\theta_{j,z_i}) \Big\}.
\end{align}
Then there is $\bar\ell(\mathbf Z, \TT) = \mathbb E[ \ell(\RR;\,\mathbf Z, \, \TT)]$, where the expectation is taken with respect to the distribution of $\RR$.

Given arbitrary $(\QQ,\,\AA)$, denote 
\begin{align}\label{eq_proll}
\ell(\RR;\, \QQ,\,\AA) =&~ \sup_{\TT} \ell(\RR;\,\mathbf Z, \, \TT) = \ell(\RR;\, \QQ,\,\AA, \hat\TT^{(\QQ,\,\AA)}),\\ \notag
\quad
\bar\ell(\QQ,\,\AA) =&~ \sup_{\TT} \bar\ell(\mathbf Z, \, \TT) = \bar\ell(\QQ,\,\AA, \bar\TT^{(\QQ,\,\AA)}),
\end{align}
where $\hat\TT^{(\QQ,\,\AA)} = \arg\max_{\TT} \ell(\RR;\,\mathbf Z, \, \TT)$ and $\bar \TT^{(\QQ,\,\AA)} = \arg\max_{\TT}\bar\ell(\mathbf Z,\TT)$. Then under any realization of $\mathbf Z$, the following holds for any latent class $a\in\{1,\ldots,2^K\}$,
\begin{equation}\label{eq-zmle}
	\hat\theta^{(z)}_{j,a} = \frac{\sum_{i}Z_{i,a} r_{i,j}}{\sum_{i}Z_{i,a}},\quad
	\bar\theta^{(z)}_{j,a} = \frac{\sum_{i}Z_{i,a} P_{i,j}}{\sum_{i}Z_{i,a}}.
\end{equation}
Note that $(\hat\QQ,\,\hat\AA) = \argmax_{\QQ,\,\AA} \ell(\RR;\,\QQ,\,\AA,\hat\TT^{(\QQ,\,\AA)})= \argmax_{\QQ,\,\AA} \ell(\RR;\,\QQ,\,\AA)$, where $\hat\TT^{(\QQ,\,\AA)}$ maximizes the profile likelihood $\ell(\RR;\,\mathbf Z, \, \TT)$ given a particular realization $(\QQ,\,\AA)$.
Denote by $I(\cdot)$ the binary indicator function which equals one if the argument inside is true and equals zero otherwise.
In the following, Section \ref{sec-pf1} and Section \ref{sec-pf2} contain the proofs of Theorem \ref{thm-joint-both}, respectively. 
Section \ref{sec-pf34} includes the proofs of Theorems \ref{thm-mis-weak}-\ref{thm-mis-strong}. Section \ref{sec-lemma} gives the proofs of some technical lemmas used in the main proofs.

\subsection{Proof of Theorem \ref{thm-joint-both} for Two-Parameter Models}
\label{sec-pf1}
 We first outline the main steps of the proof as follows and then proceed one by one.
 
\noindent
\textbf{Outline of the proof of part (a).}

\noindent\textbf{Step 1}: Express $\ell(\RR;\, \ZZ) - \bar \ell(\ZZ)$ in terms of $\sum_{j}\sum_{a}n_{j,a}D(\hat \theta_{j,a}\| \bar \theta_{j,a}) + X - \mathbb E(X)$, where $X$ is a random variable depending on $\RR$ and $\bar\TT^{(\ZZ)}$ under $\ZZ$, and $n_{j,a} = \sum_{i=1}^N I(\text{subject $i$ is in class $a$})$.

\noindent\textbf{Step 2}: Bound the first term  $\sum_{j}\sum_{a}n_{j,a}D(\hat \theta_{j,a}\| \bar \theta_{j,a})$ in the above display uniformly over all possible $\ZZ$.

\noindent\textbf{Step 3}: Bound the second term $X-\ME(X)$. Combine this and Step 2  to obtain a bound for $\sup_{\ZZ} |\ell(\RR;\, \ZZ) - \bar \ell(\ZZ)|$.

\noindent\textbf{Step 4}: (Denote the true latent class memberships by $\ZZ^0$ and the those maximizing the likelihood by $\hat \ZZ$.) Establish $\bar \ell(\ZZ^0) \geq \bar\ell(\ZZ)$ for all $\ZZ$. Use triangle inequality to upper-bound the non-negative quantity $\bar \ell(\ZZ^0) - \bar \ell(\hat \ZZ)$.
\begin{align*}
	0\leq \bar \ell(\ZZ^0) - \bar \ell(\hat \ZZ)
     \leq [\bar \ell(\ZZ^0) - \ell(\RR;\,\ZZ^0)] + 
     [\ell(\RR;\,\ZZ^0) - \ell(\RR;\,\hat\ZZ)]
     + [\ell(\RR;\,\hat\ZZ) - \bar \ell(\hat \ZZ)]
\end{align*}
Since in the above display the middle group of terms $[\ell(\RR;\,\ZZ^0) - \ell(\RR;\,\hat\ZZ)]\leq 0$, we have $0\leq \bar \ell(\ZZ^0) - \bar \ell(\hat \ZZ) \leq 2 \sup_{\ZZ} |\ell(\RR;\, \ZZ) - \bar \ell(\ZZ)|$.

\bigskip
\noindent
\textbf{Outline of the proof of part (b).}

\noindent\textbf{Step 5}: Based on the result obtained in Step 4, obtain the consistency of estimating part of the single-attribute row vectors in $\mathbf Q$ and part of row vectors in $\mathbf A$ under Assumption \ref{cond-id} using an identifiability argument.

\noindent\textbf{Step 6}: Obtain  the consistency of estimating all the row vectors of $\mathbf A$.

\noindent\textbf{Step 7}: Obtain  the consistency of estimating all the row vectors of $\mathbf Q$.

\bigskip
\noindent
\textbf{Proof of Part (a) of Theorem \ref{thm-joint-both}:}
The proof techniques of this part are similar in spirit to the maximum profile likelihood technique in \cite{choi2012sbm} for stochastic block models.
We next proceed step by step as outlined before.

\bigskip

\noindent\textbf{Step 1.} 
Recall $D(p \| q) = p\log(p/q) + (1-p)\log((1-p)/(1-q))$ denotes the Kullback-Leibler divergence of a Bernoulli distribution with parameter $p$ from that with parameter $q$. 
In this step we prove a lemma as follows. The proofs of all the technical lemmas are deferred to Section \ref{sec-lemma}.
 \begin{lemma}
	\label{lem-express}
	Let $(r_{i,j}; \, 1\leq i\leq N, 1\leq j\leq J)$ denote independent Bernoulli trials with parameters $(P_{i,j}; \, 1\leq i\leq N, 1\leq j\leq J)$. Under a general latent class model, given an arbitrary $\ZZ$, there is 
	\begin{align}\label{eq-lemma}
		&~\sup_{\TT} \ell(\RR;\,\mathbf Z, \, \TT) - \sup_{\TT} \mathbb E[\ell(\RR;\,\mathbf Z, \, \TT)] \\ \notag
	=&~  \sum_{j=1}^J \sum_{a=1}^{L_j} n_{j,a}  D(\hat\theta_{j,a} \| \bar \theta_{j,a})  
    + \sum_{i=1}^N\sum_{j=1}^J (r_{i,j} - P_{i,j})\log\Big( \frac{\bar\theta_{j,z_i}}{1-\bar\theta_{j,z_i}} \Big) \\ \notag
    = &~  \sum_{j=1}^J \sum_{a=1}^{L_j}  n_{j,a}  D(\hat\theta_{j,a} \| \bar \theta_{j,a})  
    + X-\ME X,
	\end{align}
where 
$$
X= \sum_{i=1}^N\sum_{j=1}^J r_{i,j}\log\left( \frac{\bar\theta_{j,z_i}}{1-\bar\theta_{j,z_i}} \right)
$$ 
is a random variable depending on $\ZZ$, and $L_j$ denotes the number of ``local'' distinct latent classes induced by  $\qq_j$ for item $j$.
\end{lemma}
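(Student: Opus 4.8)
The plan is to establish the identity by an add-and-subtract decomposition of the target difference around the ``population-optimal'' parameter $\bar\TT^{(\ZZ)}$, exploiting that both suprema are attained at the closed-form maximizers recorded in \eqref{eq-zmle}. First I would note that, because the objective separates across pairs $(j,a)$ and is concave in each $\theta_{j,a}$, we have $\sup_{\TT}\ell(\RR;\ZZ,\TT)=\ell(\RR;\ZZ,\hat\TT^{(\ZZ)})$ with $\hat\theta_{j,a}$ the within-class empirical positive-response rate, and $\sup_{\TT}\bar\ell(\ZZ,\TT)=\bar\ell(\ZZ,\bar\TT^{(\ZZ)})$ with $\bar\theta_{j,a}$ the $P$-weighted within-class average, as in \eqref{eq-zmle}. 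I would then insert the intermediate quantity $\ell(\RR;\ZZ,\bar\TT^{(\ZZ)})$, the \emph{data} log-likelihood evaluated at the population parameter, and write
\[ \sup_{\TT}\ell(\RR;\ZZ,\TT) - \sup_{\TT}\bar\ell(\ZZ,\TT) = \big[\ell(\RR;\ZZ,\hat\TT^{(\ZZ)}) - \ell(\RR;\ZZ,\bar\TT^{(\ZZ)})\big] + \big[\ell(\RR;\ZZ,\bar\TT^{(\ZZ)}) - \bar\ell(\ZZ,\bar\TT^{(\ZZ)})\big]. \]

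For the first bracket I would group the double sum over $i$ by the local class label $a=z_{i,j}$, use $\sum_{i:\,z_{i,j}=a} r_{i,j}=n_{j,a}\hat\theta_{j,a}$ together with the complementary identity for $1-r_{i,j}$, and recognize the resulting per-class expression $n_{j,a}[\hat\theta_{j,a}\log(\hat\theta_{j,a}/\bar\theta_{j,a})+(1-\hat\theta_{j,a})\log((1-\hat\theta_{j,a})/(1-\bar\theta_{j,a}))]$ as exactly $n_{j,a}D(\hat\theta_{j,a}\|\bar\theta_{j,a})$; summing over $j$ and over $a\in\{1,\dots,L_j\}$ produces the first term of \eqref{eq-lemma}. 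For the second bracket, since $\bar\TT^{(\ZZ)}$ is a deterministic function of $\ZZ$ and the fixed $P_{i,j}$, only the Bernoulli responses $r_{i,j}$ carry randomness; subtracting the two log-likelihoods cancels the entropy-type terms and leaves $\sum_{i,j}(r_{i,j}-P_{i,j})\log(\bar\theta_{j,z_i}/(1-\bar\theta_{j,z_i}))$. Because $\ME[r_{i,j}]=P_{i,j}$ and the logit factor is nonrandom, this is precisely $X-\ME X$ for the stated $X$, completing the identity.

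Since this is a clean algebraic identity rather than a concentration estimate, I expect no genuine analytic obstacle; the only point requiring care is bookkeeping. The decomposition and the collapse to Kullback--Leibler divergences must be carried out at the level of the $L_j$ \emph{local} latent classes induced by $\qq_j$ (with $L_j=2$ under a two-parameter model and $L_j=2^{K_j}$ under a multi-parameter model), and the derivation must remain valid for an \emph{arbitrary} $\ZZ$, not just the data-generating $\ZZ^0$, since Steps~2--4 of the outline apply the identity uniformly over all candidate $\ZZ$. Keeping the argument at this generic latent-class level is exactly what lets the same lemma serve both the two-parameter proof here and the multi-parameter proof in Section \ref{sec-pf2}.
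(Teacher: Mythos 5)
Your proposal is correct and is essentially the paper's own argument: inserting the intermediate quantity $\ell(\RR;\ZZ,\bar\TT^{(\ZZ)})$ is exactly the paper's add-and-subtract of the cross term $\sum_{j,a} n_{j,a}[\hat\theta_{j,a}\log\bar\theta_{j,a}+(1-\hat\theta_{j,a})\log(1-\bar\theta_{j,a})]$, with the first bracket collapsing to the KL sum and the second to $X-\ME X$. No gaps.
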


\begin{proof}
Please see Page \pageref{pf-lem1}.
\end{proof}

\bigskip
\noindent\textbf{Step 2.} In this step we prove the following lemma.
\begin{lemma}
	\label{lem-kl-2p}
	Under a two-parameter SLAM,
	the following event happens with probability at least $1-\delta$,
\begin{equation*}
\max_{\ZZ}\left\{\sum_{j=1}^J\sum_{a=0,1} n_{j,a} D(\hat\theta^{\,\ZZ}_{j,a}  \| \bar \theta^{\,\ZZ}_{j,a}) \right\}
<
 N\log(2^K) + 2J\log\Big(\frac{N}{2} + 1\Big) - \log\delta.
\end{equation*}
\end{lemma}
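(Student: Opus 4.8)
The plan is to control, for each fixed configuration $\ZZ$, the quantity $Y_{\ZZ} := \sum_{j=1}^J\sum_{a=0,1} n_{j,a} D(\hat\theta^{\,\ZZ}_{j,a}\|\bar\theta^{\,\ZZ}_{j,a})$ through a moment generating function (MGF) bound, and then pass to the maximum over $\ZZ$ via Markov's inequality together with a union bound. The key structural observation is that the response subsets $\{r_{i,j}: z_{i,j}=a\}$ entering different pairs $(j,a)$ are mutually independent: for a fixed item the two local classes $a=0,1$ involve disjoint subjects, while distinct items involve distinct columns of $\RR$. Since each count $n_{j,a}$ is deterministic given $\ZZ$, this independence yields the factorization
\[
\mathbb E\big[\exp(Y_\ZZ)\big] = \prod_{j=1}^J\prod_{a=0,1} \mathbb E\Big[\exp\big(n_{j,a} D(\hat\theta^{\,\ZZ}_{j,a}\|\bar\theta^{\,\ZZ}_{j,a})\big)\Big],
\]
so the whole argument reduces to a single per-group estimate.

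The crux is the per-group inequality $\mathbb E[\exp(n D(\hat p\|\bar p))]\le n+1$, where $\hat p = \tfrac1n\sum_{i} r_i$ averages $n$ independent Bernoulli responses with (generally unequal) true means $P_i$ and $\bar p=\tfrac1n\sum_i P_i$. I would expand $\mathbb E[\exp(n D(\hat p\|\bar p))]=\sum_{s=0}^n \mathbb P(S=s)\exp(n D(s/n\|\bar p))$ with $S=\sum_i r_i$, and show each of the $n+1$ summands is at most $1$, which gives the claim. The one-sided tail estimate $\mathbb P(S=s)\le \exp(-n D(s/n\|\bar p))$ for \emph{heterogeneous} Bernoullis is the place needing care: for $s/n\ge\bar p$ I would run the Chernoff argument $\mathbb P(S\ge s)\le e^{-\lambda s}\prod_i(1+P_i(e^\lambda-1))$ and bound the product by $(1+\bar p(e^\lambda-1))^n$ using the AM--GM inequality, after which optimizing over $\lambda>0$ recovers exactly $\exp(-nD(s/n\|\bar p))$; the lower tail $s/n\le\bar p$ is symmetric, and the boundary cases $s=0,n$ follow directly from $\prod_i(1-P_i)\le(1-\bar p)^n$ and its analogue. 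Finiteness of all KL terms is guaranteed by Assumption~\ref{assume-pij}, which keeps $\bar p$ inside $[J^{-d},1-J^{-d}]$.

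With the per-group bound in hand I would assemble the pieces. Using $n_{j,0}+n_{j,1}=N$ and AM--GM, $\prod_{a=0,1}(n_{j,a}+1)\le\big((N+2)/2\big)^2=(N/2+1)^2$, so $\mathbb E[\exp(Y_\ZZ)]\le (N/2+1)^{2J}$ \emph{uniformly} in $\ZZ$. Markov's inequality then gives $\mathbb P(Y_\ZZ\ge t)\le e^{-t}(N/2+1)^{2J}$, and a union bound over the latent-class membership configurations—of which there are at most $(2^K)^N=2^{NK}$, since each of the $N$ subjects is assigned one of $2^K$ attribute patterns—yields
\[
\mathbb P\Big(\max_\ZZ Y_\ZZ\ge t\Big)\le 2^{NK}\, e^{-t}\,(N/2+1)^{2J}.
\]
Setting the right-hand side equal to $\delta$ and solving for $t$ produces the threshold $N\log(2^K)+2J\log(N/2+1)-\log\delta$, exactly as stated.

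I expect the main obstacle to be the heterogeneous-parameter MGF bound: unlike the classical method-of-types statement, the responses within a candidate local class are Bernoulli with differing true means $P_{i,j}$ (because membership is measured against an arbitrary $\ZZ$ rather than the truth), so the clean exponent $\exp(-nD(s/n\|\bar p))$ must be salvaged through the convexity/AM--GM step in the Chernoff calculation rather than quoted off the shelf. A secondary bookkeeping point is the combinatorial count of admissible $\ZZ$: since the per-configuration MGF bound $(N/2+1)^{2J}$ holds \emph{for every} $\ZZ$, only the union bound need scale with the number of configurations, and I would keep that count at the level of subject-to-class assignments so the exponent matches the stated $N\log(2^K)$ term.
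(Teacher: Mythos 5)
Your proof is correct and is essentially the paper's argument in moment-generating-function clothing: the per-group bound $\mathbb{E}[\exp(n\,D(\hat p\,\|\,\bar p))]\le n+1$ followed by Markov's inequality is computationally identical to the paper's step of bounding $\mathbb{P}(\hat\theta_{j,a}=\vartheta)\le e^{-n_{j,a}D(\vartheta\|\bar\theta_{j,a})}$ and summing over the at most $\prod_{j}\prod_{a}(n_{j,a}+1)\le(N/2+1)^{2J}$ possible values of the empirical parameter matrix, capped by the same union bound over the $(2^K)^N$ membership assignments $\mathbf Z$. Your explicit Chernoff--plus--AM--GM handling of the heterogeneous Bernoulli means $P_{i,j}$ inside a candidate local class (needed because $\mathbf Z$ is arbitrary rather than the truth) is correct and in fact makes rigorous a point that the paper's bare citation of the Chernoff--Hoeffding theorem leaves implicit.
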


\begin{proof}
Please see Page \pageref{pf-lem2}.
\end{proof}

\bigskip
\noindent\textbf{Step 3.} In this step we bound $|X-\mathbb E[X]|$, with $X$ defined in \eqref{eq-defx}. 
Introduce notation $X_{i,j}=r_{i,j}\log (\bar\theta_{j,z_i}/(1-\bar\theta_{j,z_i}))$, then $X = \sum_{i=1}^N \sum_{j=1}^J X_{i,j}$. Under Assumption \ref{assume-pij}, 
there is $|X_{i,j}|\leq d\log J$. 
Then we have 
{$\sum_{i=1}^N \sum_{j=1}^J \ME[X_{i,j}^2] = \sum_{i=1}^N \sum_{j=1}^J \MP(r_{i,j}=1) X^2_{i,j} = \sum_{i}\sum_{j} P_{i,j} X_{i,j}^2 \leq MNJ (d\log J)^2$.}
Applying the Bernstein's inequality to the sum of independent bounded random variables, we have the following holds for any fixed $\ZZ$,
\begin{align*}
	\mathbb P (|X-\mathbb E[X]| \geq \epsilon)
\leq &~2\exp\left\{ -\frac{(1/2)\epsilon^2}{\sum_i\sum_j \ME[X_{i,j}^2]+ (2/3)\epsilon\log J } \right\}\\
\leq &~ 2\exp\left\{ -\frac{(1/2)\epsilon^2}{d^2 MNJ(\log J)^2 + (2/3)\epsilon\log J} \right\}.
\end{align*}

We next prove the following proposition.
\begin{proposition}
	\label{prop-scale1}
	Under the following scaling for some small positive constant $c>0$,
\begin{equation}\label{eq-scale}
\sqrt{J}\cdot 2^K = o (N^{1-c}),
\end{equation}
we have $$\frac{1}{NJ}\max_{\ZZ} |\ell(\RR;\,\ZZ) - \ME \ell(\RR;\,\ZZ)| 
	= o_P\left(\frac{\sqrt{M\log 2^K}}{\sqrt{J}} (\log J)^{1+\epsilon}\right).
	$$
\end{proposition}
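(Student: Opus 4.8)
The plan is to combine the exact decomposition of Lemma~\ref{lem-express}, the uniform Kullback--Leibler bound of Lemma~\ref{lem-kl-2p}, and the fixed-$\ZZ$ Bernstein tail bound from Step~3, and then to promote the latter to a bound that holds uniformly over all latent configurations by a union bound whose cardinality is calibrated against the scaling \eqref{eq-scale}. Writing, via \eqref{eq-lemma}, $\ell(\RR;\ZZ) - \bar\ell(\ZZ) = \sum_{j}\sum_{a} n_{j,a} D(\hat\theta_{j,a}\,\|\,\bar\theta_{j,a}) + (X - \ME X)$, it suffices to control the (nonnegative) KL group and the stochastic group $X-\ME X$ separately, each uniformly in $\ZZ$, and then divide by $NJ$.

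For the KL group I would simply invoke Lemma~\ref{lem-kl-2p}: with probability at least $1-\delta$ its maximum over $\ZZ$ is below $N\log(2^K) + 2J\log(N/2+1) - \log\delta$. Dividing by $NJ$ produces $\tfrac{K\log 2}{J} + O(\tfrac{\log N}{N}) - \tfrac{\log\delta}{NJ}$, and I would check each piece is $o(\gamma_J)$: the first because $\sqrt{K}/\sqrt{MJ} = o((\log J)^{1+\epsilon})$ under $K=o(MJ\log J)$, and the second because $\sqrt{J}\,\log N / N = o(N^{-c}\log N)\to 0$ relative to $\sqrt{MK}(\log J)^{1+\epsilon}$ once $\sqrt{J}=o(N^{1-c})$. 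Since $\delta>0$ is arbitrary, this gives the required $o_P(\gamma_J)$ statement for the KL group.

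For the stochastic group the central task is to upgrade the fixed-$\ZZ$ Bernstein bound to one uniform over all admissible $\ZZ$. Each $\ZZ$ is induced by a pair $(\QQ,\AA)$; with $B^{\two}=\infty$ there are at most $2^{KN}$ choices of $\AA$ and $2^{KJ}$ of $\QQ$, hence at most $2^{K(N+J)}$ distinct configurations, so a union bound inflates the Bernstein exponent by $K(N+J)\log 2$. Using $\sum_{i,j}\ME X_{i,j}^2 \le MNJ(d\log J)^2$ and $|X_{i,j}|\le d\log J$ from Assumption~\ref{assume-pij}, I would take the threshold to be a constant multiple of $NJ\gamma_J = \eta\, N\sqrt{J}\,\sqrt{M\log 2^K}\,(\log J)^{1+\epsilon}$ and verify that the resulting Bernstein exponent beats $K(N+J)\log 2$ under the scaling $\sqrt{J}\,2^K = o(N^{1-c})$; this is exactly where the scaling hypothesis is consumed. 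The union bound then forces $\MP(\max_\ZZ |X-\ME X| \ge \eta\, NJ\gamma_J)\to 0$ for every $\eta>0$, i.e. $\tfrac1{NJ}\max_\ZZ |X-\ME X| = o_P(\gamma_J)$. Adding the two uniform bounds and dividing by $NJ$ yields the claim.

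The main obstacle is the uniform control of $X-\ME X$ over the exponentially many configurations while keeping the threshold at the target order $NJ\gamma_J$. The tension is between the union-bound log-cardinality ($\approx K(N+J)\log 2$) and the Bernstein concentration, and the delicate bookkeeping lies in identifying which term of the Bernstein denominator is active (the variance term $MNJ(d\log J)^2$ versus the boundedness term $(d\log J)\,\epsilon_0$) and confirming that in the active regime the scaling $\sqrt{J}\,2^K=o(N^{1-c})$ makes the exponent dominate the log-cardinality. This regime analysis, rather than any single inequality, is the crux that the full proof must carry out carefully.
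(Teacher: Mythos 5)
Your overall architecture coincides with the paper's: the same decomposition from Lemma~\ref{lem-express} into a Kullback--Leibler group plus $X-\ME X$, the same counting/Chernoff--Hoeffding bound of Lemma~\ref{lem-kl-2p} for the former, Bernstein plus a union bound for the latter, and essentially the same threshold $\delta_{NJ}=N\sqrt{MJ\log 2^K}\,(\log J)^{1+\epsilon}$. The treatment of the KL group is fine and matches the paper's.

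The gap is in where you place the combinatorial cost and whether the scaling can absorb it there. You run the union bound for the Bernstein term over all $2^{K(N+J)}$ pairs $(\QQ,\AA)$ and assert that $\sqrt{J}\,2^K=o(N^{1-c})$ makes the Bernstein exponent dominate $K(N+J)\log 2$; this is the step you defer as the ``crux,'' and as set up it does not close. At threshold $\eta NJ\gamma_{J}=\eta N\sqrt{JM\log 2^K}\,(\log J)^{1+\epsilon}$, the variance branch of the Bernstein exponent is of order $\eta^2NK(\log J)^{2\epsilon}$ (the factors of $J$ and $M$ cancel against $\sum_{i,j}\ME X_{i,j}^2\asymp MNJ(\log J)^2$) and the boundedness branch is of order $\eta N\sqrt{JMK}\,(\log J)^{\epsilon}$; the operative exponent is the minimum. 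The variance branch beats $KN\log 2$ only because $(\log J)^{2\epsilon}\to\infty$, but it beats $KJ\log 2$ only if $J=o\bigl(N(\log J)^{2\epsilon}\bigr)$. The hypotheses allow $J$ up to order $N^{2-2c}$, so for $c<1/2$ the $KJ$ part of your log-cardinality swamps the exponent and the union bound fails. The scaling hypothesis does not rescue this; it controls $\sqrt{J}\,2^K$ against $N^{1-c}$, not $J$ against $N$.

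The paper distributes the combinatorics differently. Its union bound attached to the Bernstein term carries only $N\log 2^K$, which the $(\log J)^{2\epsilon}$ factor defeats with no help from the scaling; the $J$-dependent count enters instead deterministically, through the cardinality of the range of $\hat\TT$ inside the Chernoff--Hoeffding argument, contributing $J2^K\log(N/2^K+1)$ to the first exponential. The scaling is consumed precisely in forcing $\epsilon\delta_{NJ}$ to dominate that term, i.e. $N\sqrt{MJK}\,(\log J)^{1+\epsilon}\gg J2^K\log N$, equivalently $\sqrt{J}\,2^K=o\bigl(N\sqrt{MK}\,(\log J)^{1+\epsilon}/\log N\bigr)$ --- not in the Bernstein step, where you locate it. To repair your write-up you should either adopt this allocation, or else supply an argument that the effective number of distinct configurations of $\{\bar\theta_{j,z_{i,j}}\}$ entering $X$ can be taken much smaller than $2^{K(N+J)}$; without one of these, the step you identify as delicate is the step that breaks.
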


\begin{proof}[Proof of Proposition \ref{prop-scale1}]
Combining the results of Step 2 and Step 3, since that there are $(2^K)^N$ possible assignments of $\ZZ$, we apply the union bound to obtain
\begin{align}\label{eq-deltanj}
	&~\MP (\max_{\ZZ} |\ell(\RR;\,\ZZ) - \ME \ell(\RR;\,\ZZ)| \geq 2\epsilon \delta_{NJ})\\ \notag
\leq &~ (2^K)^N \MP \left[
\left\{\sum_{j=1}^J\sum_{a=0,1} n_{j,a} D(\hat\theta_{j,a} \| \bar \theta_{j,a})\geq \epsilon \delta_{NJ}\right\}
\cup
\left\{|X - \ME[X]| \geq \epsilon \delta_{NJ}\right\}
\right]\\ \notag
\leq &~  \exp\Big\{N\log(2^K) + J2^K\log\Big(\frac{N}{2^K} + 1\Big) - \epsilon \delta_{NJ}\Big\} \\ \notag
&~  + 2\exp\Big\{N\log 2^K -\frac{\epsilon^2 \delta_{NJ}}{2d^2(MNJ/\delta_{NJ})(\log J)^2 + (4/3)\epsilon\log J) } \Big\}.
\end{align}
In order for the second term on the right hand side of the above display to go to zero, the following of $\delta_{NJ}$ would suffice,
\begin{equation}
	\label{eq-scale}
	\delta_{NJ} \succsim N\sqrt{MJ\log 2^K} \log J.
\end{equation}
We take $\delta_{NJ} = N\sqrt{MJ\log 2^K} (\log J)^{1+\epsilon}$ for a small positive constant $\epsilon$. Further, under this $\delta_{NJ}$, in order for the first term on the right hand side of \eqref{eq-deltanj} to go to zero, 
Then the right hand side of \eqref{eq-deltanj} goes to zero as $N, J$ go large. 
Then the scaling $\sqrt{J} = O(\sqrt{M} N^{1-c})$ and $K=o(MJ\log J)$ described in the theorem
 yields $\MP (\max_{\ZZ} |\ell(\RR;\,\ZZ) - \ME \ell(\RR;\,\ZZ)| \geq 2\epsilon \delta_{NJ}) = o(1)$, which implies
\begin{equation}
	\label{eq-llrate}
	\frac{1}{NJ}\max_{\ZZ} |\ell(\RR;\,\ZZ) - \ME \ell(\RR;\,\ZZ)| 
	=o_P\left(\frac{\sqrt{M\log 2^K}}{\sqrt{J}} (\log J)^{1+\epsilon}\right).
\end{equation}
This proves Proposition \ref{prop-scale1}.
\end{proof}

\bigskip
\noindent\textbf{Step 4.}
Denote the true class assignments by $\ZZ^0$. We first establish 
\begin{equation}
	\label{eq-tr}
	\bar \ell(\ZZ^0) \geq \bar\ell(\ZZ), \quad\text{for all }\ZZ.
\end{equation}
First note that $\theta^0_{j,z_i^0} = P_{i,j}$, and 
$$
\bar\theta_{j,z_i^0} = \frac{\sum_{m=1}^N Z_{m, z_i^0}^0 P_{m,j}}{\sum_{m=1}^N Z^0_{m, z_i^0}} 
= \frac{\sum_{m=1}^N Z_{m, z_i^0}^0 P_{i,j}}{\sum_{m=1}^N Z^0_{m, z_i^0}} = P_{i,j}.
$$
The difference $\bar \ell(\ZZ^0) - \bar\ell(\ZZ)
$ can be written as
\begin{align*}
	\bar \ell(\ZZ^0) - \bar\ell(\ZZ)
=&~ \sum_{i=1}^N \sum_{j=1}^J \left[ P_{i,j} \log\left(\frac{\bar\theta^0_{j,z_i^0}}{\bar\theta^{\ZZ}_{j,z_i}}\right) + (1-P_{i,j})\log\left(\frac{1-\bar\theta^0_{j,z_i^0} }{1- \bar\theta^{\ZZ}_{j,z_i}}\right)\right] \\
=&~ \sum_{i=1}^N \sum_{j=1}^J \left[ P_{i,j} \log\left(\frac{P_{i,j}}{\bar\theta^{\ZZ}_{j,z_i}}\right) + (1-P_{i,j})\log\left(\frac{1-P_{i,j} }{1- \bar\theta^{\ZZ}_{j,z_i}}\right)\right]\\
=&~ \sum_{i=1}^N \sum_{j=1}^J D(P_{i,j} \| \bar\theta^{\ZZ}_{j,z_i})\geq 0,
\end{align*}
therefore establishing \eqref{eq-tr}. Since the above holds for every $\ZZ$, it also holds for the maximum likelihood estimator $\hat\ZZ$. We further upper bound $\bar \ell(\ZZ^0) - \bar\ell(\ZZ)$ from above as follows,
\begin{align*}
	0\leq \bar \ell(\ZZ^0) - \bar \ell(\hat \ZZ)
     \leq [\bar \ell(\ZZ^0) - \ell(\RR;\,\ZZ^0)] + 
     \underbrace{[\ell(\RR;\,\ZZ^0) - \ell(\RR;\,\hat\ZZ)]}_{\leq 0}
     + [\ell(\RR;\,\hat\ZZ) - \bar \ell(\hat \ZZ)],
\end{align*}
where $[\ell(\RR;\,\ZZ^0) - \ell(\RR;\,\hat\ZZ)] \leq 0$ results from the definition of $\hat\ZZ$ as the MLE, that is $\ZZ$ maximizes the $\ell(\RR;\,\ZZ,\hat\TT^{\ZZ})$. Therefore
\begin{align*}
	0\leq \bar \ell(\ZZ^0) - \bar \ell(\hat \ZZ)
	 \leq &~ [\bar \ell(\ZZ^0) - \ell(\RR;\,\ZZ^0)] +  [\ell(\RR;\,\hat\ZZ) - \bar \ell(\hat \ZZ)]
	 \\
	 \leq &~ 2\sup_{\ZZ} |\bar \ell(\ZZ) - \ell(\RR;\,\ZZ)|
	 = 2 \sup | \ell(\RR;\,\ZZ) - \ME  \ell(\RR;\,\ZZ)| \\
	 =&~ N\sqrt{JM\log 2^K} (\log J)^{1+\epsilon}.
\end{align*}
So we obtained $\bar \ell(\ZZ^0) - \bar \ell(\hat \ZZ) = o_P(NJ\cdot \gamma_{J})$.

In the following, we sometimes 
denote $\xi(\qq_j,\aa_i)$ by $\xi_{i,j}$ 
for notational convenience. 
Under a two-parameter SLAM,
\begin{align}\label{eq-lower}
	o_P(NJ\cdot \gamma_{J}) = &~ \bar \ell(\ZZ^0) - \bar \ell(\hat \ZZ) \\ \notag
    =&~\sum_{i=1}^N\sum_{j=1}^J \left[P_{i,j}\log\left(\frac{\theta^{\ZZ^0}_{j,z_i^0}}{\bar\theta^{\hat\ZZ}_{j,\hat z_i}}\right)
    + (1-P_{i,j})\log\left(\frac{1-\theta^{\ZZ^0}_{j,z_i^0}}{1-\bar\theta^{\hat\ZZ}_{j,\hat z_i}}\right) \right]\\ \notag
= &~ \sum_{i=1}^N\sum_{j=1}^J D(P_{i,j}\| \bar\theta^{\hat\ZZ}_{j,\hat z_i}) \geq
\sum_{i=1}^N\sum_{j=1}^J 
2\left(P_{i,j} -\bar\theta^{\hat\ZZ}_{j,\hat z_i}\right)^2\\ \notag
= &~ \sum_{i=1}^N\sum_{j=1}^J 
2\left( \MP(r_{i,j}=1\mid \QQ^{\true},\,\AA^{\true},\,\TT^{\true}) - \MP(r_{i,j}=1\mid \hat\QQ,\,\hat\AA,\,\TT^{\true}) \right)^2.
\end{align}
Thus far, part (a) of Theorem \ref{thm-joint-both} for two-parameter SLAMs is proved.

\bigskip
\noindent
\textbf{Proof of Part (b) of Theorem \ref{thm-joint-both} for two-parameter models:}

\noindent
\textbf{Step 5.}
Under a two-parameter model, given $\hat\ZZ$-induced $\{\hat\xi_{i,j}\}$, define
\begin{align*}
	N^j_{ab} = \sum_{i=1}^N I(\xi^0_{i,j}=a)I(\hat\xi_{i,j}=b),\quad (a,b)\in\{0,1\}^2,
\end{align*}
then there is
\begin{align}\notag
	\bar\theta_{j,+}^{\hat\ZZ} 
	=&~
	\frac{\sum_{i=1}^N I(\hat\xi_{i,j}=1)P_{i,j}}{\sum_{i=1}^N I(\hat\xi_{i,j}=1)} \\
	=&~
	\frac{\sum_{i=1}^N I(\hat\xi_{i,j}=1)\cdot[I(\xi^0_{i,j}=1)\theta^0_{j,+}+I(\xi^0_{i,j}=0)\theta^0_{j,-}]}{\sum_{i=1}^N I(\hat\xi_{i,j}=1)}
	=
	\frac{N^j_{11}\theta^0_{j,+}+N^j_{01}\theta^0_{j,-}}{N^j_{11}+N^j_{01}}\\
	\notag
\bar\theta_{j,-}^{\hat\ZZ} 
	=&~
	\frac{\sum_{i=1}^N I(\hat\xi_{i,j}=0)\cdot[I(\xi^0_{i,j}=1)\theta^0_{j,+}+I(\xi^0_{i,j}=0)\theta^0_{j,-}]}{\sum_{i=1}^N I(\hat\xi_{i,j}=0)}
	=
	\frac{N^j_{10}\theta^0_{j,+}+N^j_{00}\theta^0_{j,-}}{N^j_{10}+N^j_{00}}.
\end{align}
Under $\hat\ZZ$, we impose a natural constraint $\bar\theta_{j,+}^{\hat\ZZ} > \bar\theta_{j,-}^{\hat\ZZ}$ for identifiability purpose, then the above representation together with $\theta^0_{j,+}>\theta^0_{j,-}$ yields
\begin{align}\label{eq-estid}
    \frac{N^j_{11}\theta^0_{j,+}+N^j_{01}\theta^0_{j,-}}{N^j_{11}+N^j_{01}}
    >&~
	\frac{N^j_{10}\theta^0_{j,+}+N^j_{00}\theta^0_{j,-}}{N^j_{10}+N^j_{00}},
	\quad \Longrightarrow \quad
	N_{11}^j N_{00}^j  > N_{10}^j N_{01}^j.
\end{align}
Under Assumption \ref{cond-id},  the Pinsker's inequality \citep{csiszar2011info} between Kullback-Leibler divergence and total variation distance gives
\begin{align*}
	D(\theta_{j,+} \| \theta_{j,-}) 
=&~ \theta_{j,+}\log(\theta_{j,+}/\theta_{j,-}) + (1-\theta_{j,+})\log((1-\theta_{j,+})/(1-\theta_{j,-})) \\
\geq &~ \frac{1}{2} (|\theta_{j,+}-\theta_{j,+}| + |(1-\theta_{j,+})-(1-\theta_{j,-})|)^2 {
= 2(\theta_{j,+}-\theta_{j,-})^2
\geq 2\beta_J.
}
\end{align*}

Therefore,
\begin{align*}
& D(\theta^0_{j,-} \| \bar\theta^{\hat\ZZ}_{j,-})
\geq 
2\left[\frac{N^j_{10} \cdot (\theta_{j,+}^0 - \theta_{j,-}^0)}{N^j_{10}+N^j_{00}}
\right]^2,
\quad
	D(\theta^0_{j,+} \| \bar\theta^{\hat\ZZ}_{j,-})
\geq 
2\left[\frac{N^j_{00} \cdot (\theta_{j,+}^0 - \theta_{j,-}^0)}{N^j_{10}+N^j_{00}}
\right]^2,
\\
& D(\theta^0_{j,+} \| \bar\theta^{\hat\ZZ}_{j,+})
\geq 
2\left[\frac{N^j_{01} \cdot (\theta_{j,+}^0 - \theta_{j,-}^0)}{N^j_{11}+N^j_{01}}\right]^2,
\quad
D(\theta^0_{j,-} \| \bar\theta^{\hat\ZZ}_{j,+})
\geq 
2\left[\frac{N^j_{11} \cdot (\theta_{j,+}^0 - \theta_{j,-}^0)}{N^j_{11}+N^j_{01}}\right]^2.
\end{align*}
We also have the following representation
\begin{align}\notag
	\theta^{0}_{j,z_i^0} 
	=&~ \xi^0_{i,j}\theta^{0}_{j,+} + (1-\xi^0_{i,j})\theta^{0}_{j,-}, 
\\ \label{eq-theta-xi}
\bar\theta^{\hat\ZZ}_{j,\hat z_i} 
=&~
		\hat\xi_{i,j} \bar\theta_{j,+}^{\hat\ZZ} 
		+
		 (1-\hat\xi_{i,j})\bar\theta_{j,-}^{\hat\ZZ}
=
		\hat\xi_{i,j}\frac{N^j_{11}\theta^0_{j,+}+N^j_{01}\theta^0_{j,-}}{N^j_{11}+N^j_{01}} 
		+
		 (1-\hat\xi_{i,j})\frac{N^j_{10}\theta^0_{j,+}+N^j_{00}\theta^0_{j,-}}{N^j_{10}+N^j_{00}}. 
\color{black}
\end{align}
Therefore,
\begin{align}\notag
&~ \bar\ell(\ZZ^0) - \bar\ell(\hat\ZZ) 
=\sum_{i=1}^N\sum_{j=1}^J D(P_{i,j}\| \bar\theta^{\hat\ZZ}_{j,\hat z_i})  \\ \notag
= &~ \sum_{j=1}^J
\Big[
N^j_{00}\cdot D(\theta^0_{j,-} \| \bar\theta^{\hat\ZZ}_{j,-})
+
N^j_{10}\cdot D(\theta^0_{j,+} \| \bar\theta^{\hat\ZZ}_{j,-}) +
N^j_{01}\cdot D(\theta^0_{j,-} \| \bar\theta^{\hat\ZZ}_{j,+})
+
N^j_{11}\cdot D(\theta^0_{j,+} \| \bar\theta^{\hat\ZZ}_{j,+})
\Big]
\\ \notag
\geq &~
\sum_{j=1}^J \left[
\left\{\frac{N^j_{00} (N^j_{10})^2 + N^j_{10} (N^j_{00})^2 }{(N^j_{10}+N^j_{00})^2}
+ \frac{N^j_{01} (N^j_{11})^2 + N^j_{11} (N^j_{01})^2 }{(N^j_{10}+N^j_{00})^2}
\right\}
\cdot 2(\theta^0_{j,+}-\theta^0_{j,-})^2
\right]\\ \notag
\geq &~
\sum_{j=1}^J \left[
\left\{
\frac{N^j_{10}N^j_{00}}{N^j_{10}+N^j_{00}}
+ \frac{N^j_{11}N^j_{01}}{N^j_{11}+N^j_{01}}\right\}
\cdot 2(\theta^0_{j,+}-\theta^0_{j,-})^2
\right]\\ \notag
\geq &~
2\beta_{J} \cdot \sum_{j=1}^J
\left(\frac{N^j_{10}N^j_{00}}{N^j_{10}+N^j_{00}}
+ \frac{N^j_{11}N^j_{01}}{N^j_{11}+N^j_{01}}
\right) \\ \notag
\geq &~
\beta_{J} \cdot \sum_{j=1}^J (\min\{N^j_{10},N^j_{00}\} + \min\{N^j_{11},N^j_{01}\})
\\ \label{eq-jall}
= &~
\beta_{J} \cdot 
\biggr[\sum_{j:\, N^j_{10}<N^j_{00},\atop N^j_{01}<N^j_{11}} (N^j_{10} + N^j_{01})
+
\sum_{j:\, N^j_{10}<N^j_{00},\atop N^j_{01}>N^j_{11}} (N^j_{10} + N^j_{11})
+
\sum_{j:\, N^j_{10}>N^j_{00},\atop N^j_{01}<N^j_{11}} (N^j_{00} + N^j_{01})\biggr],
\end{align}
where the last equality holds because for each $j\in[J]$, the two events $N^j_{10}>N^j_{00}$ and $N^j_{01}>N^j_{11}$ can not happen simultaneously due to the previously established $N_{11}^j N_{00}^j  > N_{10}^j N_{01}^j$. 
We need the following lemma.
\begin{lemma}\label{lem-j012}
Recall $\min_{\aaa\in\{0,1\}^K} \frac{1}{N}\sum_{i=1}^N I(\aa^0_i=\aaa) \geq p_{N} \geq \epsilon/2^K$ in Assumption \ref{cond-id}. 	Define the following random sets depending on the maximum likelihood estimator $\hat\ZZ$,
\begin{align*}
	&\mathcal J_0 = \{j\in[J]:\, N^j_{10}<N^j_{00},\, N^j_{01}<N^j_{11}\}; \\
	&\mathcal J_1 = \{j\in[J]:\, N^j_{10}<N^j_{00},\, N^j_{01}>N^j_{11}\}; \\
    &\mathcal J_2 = \{j\in[J]:\, N^j_{10}>N^j_{00},\, N^j_{01}<N^j_{11}\},
\end{align*}
then $|\mathcal J_1| = o_P(J\cdot\gamma_{J}/\beta_J)$, $|\mathcal J_2| = o_P(J\cdot\gamma_{J}/\beta_J)$, and hence $1 - |\mathcal J_0|/J =o_P(\gamma_{J}/\beta_J).$
\end{lemma}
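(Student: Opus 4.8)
The plan is to feed the upper bound $\bar\ell(\ZZ^0)-\bar\ell(\hat\ZZ)=o_P(NJ\gamma_J)$ established in Step~4 into the lower bound already derived in \eqref{eq-jall}, and then convert a statement about \emph{weighted} misclassification counts into one about the \emph{cardinalities} $|\mathcal J_1|,|\mathcal J_2|$. The crux is the elementary identity that, for each item $j$,
\begin{align*}
N^j_{10}+N^j_{11}=\sum_{i=1}^N I(\xi^0_{i,j}=1)=\sum_{i=1}^N I(\aa^0_i\succeq\qq_j),
\qquad
N^j_{00}+N^j_{01}=\sum_{i=1}^N I(\aa^0_i\nsucceq\qq_j),
\end{align*}
i.e.\ the totals of subjects whose true ideal response to item $j$ is one, respectively zero. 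So the first step is to lower bound both of these class totals by a constant fraction of $N$.

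The key estimate is that $\min\bigl\{\sum_i I(\aa^0_i\succeq\qq_j),\ \sum_i I(\aa^0_i\nsucceq\qq_j)\bigr\}\geq c\,N$ for a universal constant $c>0$, and this is exactly where the bounded-degree part of Assumption~\ref{cond-id}, namely $\sum_k q^{\true}_{j,k}\leq K_0$, is essential. Writing $K_j=\sum_k q^{\true}_{j,k}\leq K_0$, precisely $2^{K-K_j}$ patterns $\aaa$ satisfy $\aaa\succeq\qq_j$, and since each item loads on at least one attribute ($\qq_j\neq\zero_K$), at least $2^{K-1}$ patterns satisfy $\aaa\nsucceq\qq_j$. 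Because every pattern occurs at least $Np_N\geq N\epsilon/2^K$ times, summing the pattern counts gives $\sum_i I(\aa^0_i\succeq\qq_j)\geq 2^{K-K_0}\cdot N\epsilon/2^K=N\epsilon\,2^{-K_0}$ and $\sum_i I(\aa^0_i\nsucceq\qq_j)\geq 2^{K-1}\cdot N\epsilon/2^K=N\epsilon/2$, so one may take $c=\epsilon\,2^{-K_0}$, a genuine constant because $K_0$ is fixed.

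With this in hand, for any $j\in\mathcal J_1$ the corresponding summand in \eqref{eq-jall} equals $N^j_{10}+N^j_{11}\geq cN$, so
\begin{align*}
\beta_J\,cN\,|\mathcal J_1|
\ \leq\ \beta_J\sum_{j\in\mathcal J_1}(N^j_{10}+N^j_{11})
\ \leq\ \bar\ell(\ZZ^0)-\bar\ell(\hat\ZZ)
\ =\ o_P(NJ\gamma_J),
\end{align*}
which rearranges to $|\mathcal J_1|=o_P(J\gamma_J/\beta_J)$; the same argument with the summand $N^j_{00}+N^j_{01}\geq cN$ gives $|\mathcal J_2|=o_P(J\gamma_J/\beta_J)$. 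For the final claim I would bound $|[J]\setminus\mathcal J_0|$ directly rather than partition $[J]$. Using the strict identifiability inequality $N^j_{11}N^j_{00}>N^j_{10}N^j_{01}$ from \eqref{eq-estid} (which forces $N^j_{00}>0$ and $N^j_{11}>0$ and excludes the case $N^j_{10}>N^j_{00},\,N^j_{01}>N^j_{11}$), one checks that for every $j\notin\mathcal J_0$ the per-item contribution $\min\{N^j_{10},N^j_{00}\}+\min\{N^j_{01},N^j_{11}\}$ collapses to either the truly-one total $N^j_{10}+N^j_{11}$ or the truly-zero total $N^j_{00}+N^j_{01}$, hence is again $\geq cN$. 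Summing over $j\notin\mathcal J_0$ and reusing \eqref{eq-jall} and Step~4 yields $|[J]\setminus\mathcal J_0|=o_P(J\gamma_J/\beta_J)$, i.e.\ $1-|\mathcal J_0|/J=o_P(\gamma_J/\beta_J)$; since $\mathcal J_1,\mathcal J_2\subseteq[J]\setminus\mathcal J_0$ this also re-derives the first two bounds in one stroke.

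I expect the main obstacle to be the bookkeeping around the boundary (equality) cases $N^j_{10}=N^j_{00}$ or $N^j_{01}=N^j_{11}$, which prevent $\mathcal J_0,\mathcal J_1,\mathcal J_2$ from literally partitioning $[J]$; this is precisely why I would work with the robust $\min$-form of the lower bound and with $[J]\setminus\mathcal J_0$ rather than the three-way split displayed in \eqref{eq-jall}. The only genuinely model-specific input is the constant lower bound $cN$ on both class totals: without the degree cap $K_j\leq K_0$ one would only obtain the weaker bound $Np_N$, inflating every rate by a factor of order $1/(2^{K_0}p_N)$ and destroying the clean $o_P(J\gamma_J/\beta_J)$ conclusion, so the step requiring the most care is confirming that $K_0$ (and hence $c$) is treated as an absolute constant.
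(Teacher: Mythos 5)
Your proposal is correct and follows essentially the same route as the paper's proof: both lower-bound the per-item contribution in \eqref{eq-jall} by a constant multiple of $N$ (via $\sum_i I(\xi^0_{i,j}=1)\geq 2^{K-K_j}Np_N\geq \epsilon 2^{-K_0}N$, using the degree cap $K_j\leq K_0$ and $p_N\geq\epsilon/2^K$) and then divide the global bound $\bar\ell(\ZZ^0)-\bar\ell(\hat\ZZ)=o_P(NJ\gamma_J)$ from Step~4. Your treatment of the equality cases $N^j_{10}=N^j_{00}$ or $N^j_{01}=N^j_{11}$ via the $\min$-form and $[J]\setminus\mathcal J_0$ is slightly more careful than the paper's, which simply asserts $\mathcal J_0=[J]\setminus(\mathcal J_1\cup\mathcal J_2)$, but this is a cosmetic refinement rather than a different argument.
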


\begin{proof}
Please see Page \pageref{pf-lem3}.
\end{proof}

Thanks to Lemma \ref{lem-j012}, under Assumption \ref{cond-gap-2p} we now have
\begin{align*}
	o_P(\gamma_{J})\geq 
	 \frac{1}{NJ} \left( \bar\ell(\ZZ^0) - \bar \ell(\hat \ZZ) \right)
	\geq
\sum_{j\in\mathcal J_0} \frac{N^j_{10} + N^j_{01}}{NJ}
=\beta_{J} \cdot \frac{1}{NJ}
\sum_{j\in\mathcal J_0}\sum_{i=1}^N I(\hat\xi_{i,j}\neq \xi_{i,j}^0).
\end{align*}

\bigskip
The following lemma would be useful.
\begin{lemma}\label{lem-qa}
Let $\mathbf Q$ be a $K\times K$ matrix of binary entries, and $\aa^1,\aa^2,\ldots,\aa^{2^K}$ be $K$-dimensional vectors of binary entries. Let $\xi(\qq_j,\aa_i) = \xi^{A}(\qq_j,\aa_i)$ or $\xi(\qq_j,\aa_i) =\xi^{O}(\qq_j,\aa_i)$.
\begin{itemize}
\item[(a)] If $\xi(I_K, \{0,1\}^K) = \xi(\mathbf Q, \{\aa^1,\aa^2,\ldots,\aa^{2^K}\})$, then 
	\begin{equation}\label{eq-equiv}
		\mathbf Q \sim I_K, \quad
		\{\aa^1,\aa^2,\ldots,\aa^{2^K}\} = \{0,1\}^K.
	\end{equation}
\item[(b)] If $\xi(\qq_j, \{0,1\}^K) = \xi(\tilde\qq_j, \{0,1\}^K)$, then 
	\begin{equation}\label{eq-lemqa-2}
		\qq_j\sim\tilde \qq_j.
	\end{equation}	
\end{itemize}
\end{lemma}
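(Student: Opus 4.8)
The plan is to reduce everything to the DINA ideal response $\xi^A(\qq,\aaa)=\prod_{k=1}^K a_k^{q_k}=I(\aaa\succeq\qq)$, where $\aaa\succeq\qq$ means $a_k\ge q_k$ for every $k$, and then to transfer to the DINO case through the duality $\xi^O(\qq,\aaa)=1-\xi^A(\qq,\one_K-\aaa)$ established in \cite{chen2015statistical} (indeed $\xi^A(\qq,\one_K-\aaa)=I(\aaa\circ\qq=\zero_K)$, whose complement is $\xi^O$). Since $\aaa\mapsto\one_K-\aaa$ and $b\mapsto 1-b$ are bijections and $\{0,1\}^K$ is closed under complementation, the DINO hypotheses in both (a) and (b) translate verbatim into DINA hypotheses with the attribute profiles replaced by their complements; so it suffices to prove the lemma for $\xi^A$ and invoke this correspondence.

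For part (b), I would first note that the response function $\aaa\mapsto\xi^A(\qq_j,\aaa)$ determines, and is determined by, the principal up-set $U(\qq_j)=\{\aaa:\aaa\succeq\qq_j\}$. Thus $\xi^A(\qq_j,\cdot)=\xi^A(\tilde\qq_j,\cdot)$ forces $U(\qq_j)=U(\tilde\qq_j)$, and since $\qq_j$ is the unique coordinatewise minimum of $U(\qq_j)$, taking minima yields $\qq_j=\tilde\qq_j$, which in particular gives $\qq_j\sim\tilde\qq_j$.

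For part (a), write $\Phi(\aaa)=(\xi^A(\qq_1,\aaa),\ldots,\xi^A(\qq_K,\aaa))$ for the map sending a profile to its column of ideal responses, and observe that for $I_K$ the analogous map is the identity $\aaa\mapsto\aaa$, so the left-hand table has the $2^K$ distinct columns $\{0,1\}^K$, each exactly once. Equality of the two response tables then forces the $\mathbf Q$-table to have $2^K$ distinct columns as well; distinct columns require distinct profiles, so $\{\aa^1,\ldots,\aa^{2^K}\}=\{0,1\}^K$ and $\Phi$ is a bijection of $\{0,1\}^K$ onto itself. The key structural observation is that $\Phi$ is monotone for the coordinatewise order (if $\aaa\succeq\aaa'$ and $\aaa'\succeq\qq_j$ then $\aaa\succeq\qq_j$). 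I would then use the elementary fact that a monotone bijection of a finite poset is automatically an order automorphism: the injection $(x,y)\mapsto(\Phi(x),\Phi(y))$ on the finite set of strict comparabilities is onto, which forces $\Phi^{-1}$ to be monotone too. Since the only order automorphisms of the Boolean lattice $(\{0,1\}^K,\succeq)$ are the coordinate permutations (atoms map to atoms, and every element is the join of the atoms below it), there is a permutation $\sigma$ with $\Phi(\aaa)_j=I(\aaa\succeq\ee_{\sigma^{-1}(j)})$ for all $\aaa$. Comparing with $\Phi(\aaa)_j=I(\aaa\succeq\qq_j)$ and applying part (b) row by row gives $\qq_j=\ee_{\sigma^{-1}(j)}$, i.e.\ $\mathbf Q$ is a permutation matrix and $\mathbf Q\sim I_K$.

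The step I expect to be the main obstacle is the passage from ``$\Phi$ is a monotone bijection'' to ``$\Phi$ is a coordinate permutation'': one must argue that monotonicity together with bijectivity upgrades to a genuine lattice automorphism before the basis vectors can be tracked, and then keep careful bookkeeping of the row and attribute relabelings so that the conclusion is stated correctly up to the unavoidable permutation $\sigma$. The remaining ingredients — the duality reduction, the up-set argument for (b), and the column-counting that pins down the profile set — should be routine once this monotone-bijection fact is in hand.
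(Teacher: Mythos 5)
Your proposal is correct, and for part (b) and for the ``$\{\aa^1,\ldots,\aa^{2^K}\}=\{0,1\}^K$'' half of part (a) it matches the paper's proof: the paper also deduces $\qq_j=\tilde\qq_j$ by exhibiting the witness $\aaa=\qq_j$ (i.e.\ the minimum of the up-set, exactly your argument), and also pins down the profile set by noting that repeated profiles would force repeated columns while the $I_K$-table has $2^K$ distinct columns. Where you genuinely diverge is the ``$\QQ\sim I_K$'' half of part (a). The paper's route is a short collision argument: if $\QQ\nsim I_K$ then, since $\QQ$ is $K\times K$, some $\ee_k$ is missing from its rows, and then the two patterns $\zero_K$ and $\ee_k$ produce identical ideal-response columns under $\QQ$ (they differ only on items with $\qq$-vector exactly $\ee_k$), contradicting the $2^K$ distinct columns on the left-hand side. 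You instead show the column map $\Phi$ is a monotone bijection of $\{0,1\}^K$, upgrade it to an order automorphism via the finite-poset counting argument, invoke that automorphisms of the Boolean lattice are coordinate permutations, and then feed the resulting identity $I(\aaa\succeq\qq_j)=I(\aaa\succeq\ee_{\sigma^{-1}(j)})$ back into part (b). Both arguments are valid; the paper's is more elementary and two lines long, while yours is heavier but more structural and would generalize to settings where one wants to classify all of $\Phi$ rather than just detect a missing basis vector. One point in your favor: you make the DINO case explicit via the duality $\xi^{O}(\qq,\aaa)=1-\xi^{A}(\qq,\one_K-\aaa)$ and the closure of $\{0,1\}^K$ under complementation, whereas the paper's written proof silently argues only in the DINA form even though the lemma is stated for both.
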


\begin{proof}
Please see Page \pageref{pf-lem4}.
\end{proof}

We continue the proof of Step 5. Based on the conclusion of Step 4, we have
\begin{align}\notag
	&~o_P\left(\frac{\gamma_{J}}{\beta_J}\right)
    \geq
\frac{1}{NJ}\sum_{j\in\mathcal J_0}\sum_{i=1}^N I(\hat\xi_{i,j}\neq \xi_{i,j}^0)  = 
\frac{1}{NJ}\sum_{j\in\mathcal J_0}\sum_{i=1}^N I\left( \xi(\qq^0_j,\aa^0_i) \neq \xi(\hat\qq_j,\hat\aa_i)\right).
\end{align}
Next we focus on obtaining a lower bound of the above right hand side under \eqref{eq-infq} in Assumption \ref{cond-id}. We need to introduce some notation.
Consider the set of items $j\in \mathcal J_0$. For each latent attribute $k$, denote by $j_{k}^1$ the smallest integer $j$ such that item $j$ has a $\qq$-vector $\ee_k$, and denote by $j_{k}^2$ the second smallest integer $j$ such that $\qq_j=\ee_k$, etc. 
For each positive integer $m$, denote
\begin{equation}
	\mathcal B^{m} = \{j_{1}^m, j_{2}^m, \ldots, j_K^m\}.
\end{equation}
For each $k\in\{1,\ldots,K\}$, denote
\begin{align}\label{eq-jmin}
	J_{\min} = \min_{1\leq k\leq K}|\{\mathcal J_0:\, \qq^0_j=\ee_k\}|
\end{align}
Then we have that 
\begin{align*}
	\mathcal B^{m} \cap \mathcal B^{l} = \varnothing\text{~~ for any~~}m\neq l,\quad
	\mathcal B^{J,\ee}=
	\bigcup_{m=1}^{J_{\min}} \mathcal B^m \subseteq \mathcal J_0.
\end{align*}
Then the item set $\mathcal B^{J,\ee}$ is a set of some single-attribute items in $\mathcal J_0$. We also denote the remaining items before item $J$ by $\mathcal B^{J,\text{mult}} = [J] \setminus \mathcal B^{J,\ee}$.

Now consider the set of subjects $i\in\{1,2,\ldots N\}$.
For each possible latent attribute pattern $\aaa\in\{0,1\}^K$, denote by $i^1_{\aaa}$ the smallest integer $i$ such that the $i$th subject's latent attribute profile equals $\aaa$, and denote by $i^s_{\aaa}$ the $s$th smallest integer $i$ such that the $i$th subject's latent attribute profile equals $\aaa$. For each positive integer $s$, denote 
\begin{equation}
	\mathcal N^s = \{i^s_{\aaa}:\,\aaa\in\{0,1\}^K\}.
\end{equation}
For each $\aaa\in\{0,1\}^K$, denote
\begin{align}\label{eq-nmin}
	N_{\min} = \min_{\aaa\in\{0,1\}^K}
	     |\{1\leq i\leq N:\, \aa^0_i=\aaa\}|.
\end{align}
Then 
\begin{align*}
	\mathcal N^{s} \cap \mathcal N^{t} = \varnothing\text{~~ for any~~}s\neq t,\quad
	\bigcup_{s=1}^{N_{\min}} \mathcal N^s \subseteq \{1,2,\ldots, N\}.
\end{align*}
Since there is $\mathcal B^m=\{j_1^m,\ldots,j_K^m\}$ and $\mathcal N^s = \{i^s_{\aaa}:\,\aaa\in\{0,1\}^K\}$, 
for notational convenience, we denote
\begin{align}\label{eq-abbre}
    \{\hat \qq_{j_1^m},\ldots,\hat \qq_{j_K^m}\}:=\hat\qq_{\mathcal B^m},\quad
	\{\hat\aa_{i^s_{\aaa}}: \aaa\in\{0,1\}^K\}
	:=
	\hat\aa_{\mathcal N^s}.
\end{align}
So $\hat\qq_{\mathcal B^m}$ represents the set of estimated $\qq$-vectors corresponding to items in the set $\mathcal B^m$, and $\hat\aa_{\mathcal N^s}$ represents the set of estimated attribute patterns corresponding to subjects in the set $\mathcal N^s$. Similarly, we define $\qq^0_{\mathcal B^m}$ and $\aa^0_{\mathcal N^s}$. By definition, there is $\qq^0_{\mathcal B^m}=I_K$ and $\aa^0_{\mathcal N^s}=\{0,1\}^K$.
Now we have
\begin{align*}
	&~ \sum_{i=1}^N \sum_{j\in\mathcal J_0} I\left( \xi(\qq^0_j,\aa^0_i) \neq \xi(\hat\qq_j,\hat\aa_i)\right)
	\geq \sum_{i=1}^N  \sum_{m=1}^{J_{\min}} \sum_{j\in\mathcal B^m}
	        I\left( \xi(\qq^0_j,\aa^0_i) \neq \xi(\hat\qq_j,\hat\aa_i)\right) \\
	\geq &~ \sum_{i=1}^N  
	\sum_{m=1}^{J_{\min}}
	        \sum_{k=1}^K I\left( \xi(\ee_k,\aa^0_i) \neq \xi(\hat\qq_{j_k^m},\hat\aa_i)\right) \quad(\text{since~}\qq_{j^m_k}^0=\ee_k \text{~by definition})\\
	\geq &~  \sum_{s=1}^{N_{\min}} 
	         \sum_{i\in \mathcal N_{\aaa}}     
	         \sum_{m=1}^{J_{\min}}
	        \sum_{k=1}^K 
	        I\left( \xi(\ee_k,\aa_i^0) \neq \xi(\hat\qq_{j_k^m},\hat\aa_i)\right) \\
	\geq &~  \sum_{s=1}^{N_{\min}} 
	         \sum_{\aaa\in\{0,1\}^K}     
	         \sum_{m=1}^{J_{\min}}
	        \sum_{k=1}^K 
	        I\left( \xi(\ee_k,\aaa) \neq \xi(\hat\qq_{j_k^m},\hat\aa_{i^s_{\aaa}})\right) 
	        \quad(\text{since~}\aa_{i^s_{\aaa}}^0=\aaa \text{~by definition})\\
	\geq &~ \sum_{s=1}^{N_{\min}}\sum_{m=1}^{J_{\min}}  I\Big( 
	\xi(\{\ee_k:k\in[K]\}, ~\{\aaa:\aaa\in\{0,1\}^K\}) \neq \\ 
	&\qquad\qquad~~~
	\xi(\{\hat\qq_{j_k^m}:k\in[K]\}, ~\{\hat\aa_{i^s_{\aaa}}: \aaa\in\{0,1\}^K\})
	\Big)\\
	\geq &~ \sum_{s=1}^{N_{\min}}\sum_{m=1}^{J_{\min}}  \frac{1}{2}\left(I(\{\hat\qq_{j_k^m}:k\in[K]\} \nsim I_K)
	 +
	    I(\{\hat\aa_{i^s_{\aaa}}: \aaa\in\{0,1\}^K\} \neq\{0,1\}^K)\right)\\
	    &~ (\text{by Lemma \ref{lem-qa}})\\
   = &~  \frac{N_{\min}}{2}\sum_{m=1}^{J_{\min}} I(\hat \qq_{\mathcal B^m} \nsim \qq^0_{\mathcal B^m})
       +
       \frac{J_{\min}}{2}\sum_{s=1}^{N_{\min}} I(\aaa_{\mathcal N^s} \neq \aaa^0_{\mathcal N^s}),
\end{align*}
so with the $\gamma_{J}$ defined in Theorem \ref{thm-joint-both} we have 
\begin{align}
	\notag
	o_P\left(\frac{\gamma_{J}}{\beta_J}\right)
	&\geq \frac{N_{\min}}{N}\cdot \frac{\sum_{m=1}^{J_{\min}}  I\left(\hat \qq_{\mathcal B^m} \nsim \qq^0_{\mathcal B^m}\right)}{J},\\
	\notag
	o_P\left(\frac{\gamma_{J}}{\beta_J}\right)
	&\geq \frac{J_{\min}}{J}\cdot \frac{\sum_{s=1}^{N_{\min}} I(\hat \aaa_{\mathcal N^s}  \neq \aaa^0_{\mathcal N^s} )}{N}.
\end{align}
Under Assumption \ref{cond-id} and according to Lemma \ref{lem-j012}, there is 
$$\frac{J_{\min}}{J} 
\geq \frac{1}{J}\min_{k\in[K]}\sum_{j=1}^J I(\qq^0_j=\ee_k)\geq
\delta_J$$ 
for large enough $J$, and  $N_{\min}/N\geq p_N$ for large enough $N$.
So we have
\begin{align}
\label{eq-qsingle}
\frac{\sum_{m=1}^{J_{\min} }\sum_{j\in\mathcal B^m} I(\hat\qq_j\nsim \qq_j^0)}{J}
    =&~
	\frac{\sum_{m=1}^{J_{\min}}  I\left(\hat \qq_{\mathcal B^m} \nsim \qq^0_{\mathcal B^m}\right)}{J}
	= 
	o_P\left(\frac{\gamma_{J}}{\beta_J}\cdot  \frac{1}{p_N}\right),\\
	\label{eq-rate-a}
	&~
	\frac{\sum_{s=1}^{N_{\min}} I(\hat\aaa_{\mathcal N^s}  \neq \aaa^0_{\mathcal N^s})}{N}= 
    o_P\left(\frac{\gamma_{J}}{\beta_J}\cdot\frac{1}{\delta_J}\right).
\end{align}

\bigskip
\noindent\textbf{Step 6.}
Since the previous \eqref{eq-rate-a} regards those subjects indices $i\in\mathcal N^1,\ldots,\mathcal N^{N_{\min}}$, which is a subset of $[N]$, 
we next further obtain a bound involving all the subject indices $i\in[N]$ using \eqref{eq-pf-qrate}. Denote the set of these remaining subject indices by  $\mathcal N^{\text{rest}} = [N]\setminus\left(\cup_{s=1}^{N_{\min}}\mathcal N^s\right)$. We have
\begin{align*}
	\sum_{i=1}^N \sum_{j=1}^J I\left(\xi(\qq^0_j,\aa^0_i) \neq \xi(\hat\qq_j,\hat\aa_i)\right) 
	\geq &~ \sum_{i\in \mathcal N^{\text{rest}} }\sum_{m=1}^{J_{\min}}\sum_{j\in\mathcal B^m}  I\left( \xi(\qq^0_j,\aa^0_i) \neq \xi(\hat\qq_j,\hat\aa_i)\right) \\
	\geq &~ \sum_{i\in \mathcal N^{\text{rest}} }\sum_{m=1}^{J_{\min}}\sum_{j\in\mathcal B^m} I(\qq^0_{\mathcal B^m} = \hat\qq_{\mathcal B^m} ) I(\xi(I_K,\aa^0_i)\neq \xi(I_K, \hat\aa_i))\\
	= &~ \sum_{i\in \mathcal N^{\text{rest}} }\sum_{m=1}^{J_{\min}}\sum_{j\in\mathcal B^m} I(\qq^0_{\mathcal B^m} = \hat\qq_{\mathcal B^m} ) I(\aa^0_i\nsim \hat\aa_i)\\
	= &~ \sum_{i\in \mathcal N^{\text{rest}} }I(\aa^0_i\nsim \hat\aa_i)\sum_{m=1}^{J_{\min}}\sum_{j\in\mathcal B^m} I(\qq^0_{\mathcal B^m} = \hat\qq_{\mathcal B^m} ),
\end{align*}
which implies
\begin{align}\notag
	\sum_{i\in \mathcal N^{\text{rest}} } I(\aa^0_i\nsim \hat\aa_i)
\leq &~ \frac{\sum_{i=1}^N \sum_{j=1}^J I\left(\xi(\qq^0_j,\aa^0_i) \neq \xi(\hat\qq_j,\hat\aa_i)\right)}{\sum_{m=1}^{J_{\min}}\sum_{j\in\mathcal B^m} I(\qq^0_{\mathcal B^m}= \hat\qq_{\mathcal B^m})},\\ \notag
\frac{1}{N} \sum_{i\in \mathcal N^{\text{rest}} }  I(\aa^0_i\nsim \hat\aa_i)
\leq &~ \frac{\frac{1}{NJ}\sum_{i=1}^N \sum_{j=1}^J I\left(\xi(\qq^0_j,\aa^0_i) \neq \xi(\hat\qq_j,\hat\aa_i)\right)}{1-\frac{1}{J}\sum_{m=1}^{J_{\min}}\sum_{j\in\mathcal B^m} I(\qq^0_{\mathcal B^m}\neq \hat\qq_{\mathcal B^m})} \\ 
\label{eq-arate-final}
=&~ \frac{o_P(\gamma_{J}/\beta_J)}{1-o_P(\gamma_{J}/(\beta_J\cdot p_N))} = o_P\left(\frac{\gamma_{J}}{\beta_J}\right).
\end{align}
Now summing up \eqref{eq-arate-final} and \eqref{eq-rate-a} gives 
\begin{align}\label{eq-arate-2p}
\frac{1}{N}\sum_{i=1}^N I(\aa^0_i\nsim \hat\aa_i)
=o_P\left(\frac{\gamma_{J}}{\beta_J \cdot  \delta_J}\right).	
\end{align}

\noindent\textbf{Step 7.} In this step we further establish 
\begin{equation*}
	\frac{1}{J}\sum_{1\leq j\leq J,\atop j\in \mathcal B^{J,\mult}} I(\hat\qq_j\nsim \qq_j^0) = o_P \left(\frac{\gamma_{J}}{\beta_J}\right).
\end{equation*}
The following inequalities hold,
\begin{align*}
	&~\sum_{i=1}^N \sum_{j=1}^J I\left( \xi(\qq^0_j,\aa^0_i) \neq \xi(\hat\qq_j,\hat\aa_i)\right) 
	\geq \sum_{i=1}^N \sum_{j\in\mathcal B^{J,\mult}} I\left( \xi(\qq^0_j,\aa^0_i) \neq \xi(\hat\qq_j,\hat\aa_i)\right) \\
	= &~ \sum_{j\in\mathcal B^{J,\mult}} \sum_{s=1}^{N_{\min}}\sum_{i\in \mathcal N^s} I\left( \xi(\qq^0_j,\aa^0_i) \neq \xi(\hat\qq_j,\hat\aa_i)\right)\\
	\geq &~ \sum_{j\in\mathcal B^{J,\mult}} \sum_{s=1}^{N_{\min}} I(\hat\aa_{\mathcal N^s}=\aa^0_{\mathcal N^s})
	    I\left( \xi(\qq^0_j,\{0,1\}^K) \neq \xi(\hat\qq_j,\hat\aa_{\mathcal N^s})\right) \\
	\geq &~ \sum_{j\in\mathcal B^{J,\mult}} \sum_{s=1}^{N_{\min}} I(\hat\aa_{\mathcal N^s}=\aa^0_{\mathcal N^s})
	    I( \qq_j^0 \nsim \hat\qq_j) \quad(\text{by \eqref{eq-lemqa-2} in Lemma \ref{lem-qa}})\\
	= &~ \sum_{s=1}^{N_{\min}} I(\hat\aa_{\mathcal N^s}=\aa^0_{\mathcal N^s}) \sum_{j\in\mathcal B^{J,\mult}} I( \qq_j^0 \nsim \hat\qq_j).
\end{align*}
So we further have
\begin{align*}
	\sum_{j\in\mathcal B^{J,\mult}} I( \qq_j^0 \nsim \hat\qq_j)
\leq &~ \frac{\sum_{i=1}^N \sum_{j=1}^J I\left( \xi(\qq^0_j,\aa^0_i) \neq \xi(\hat\qq_j,\hat\aa_i)\right)}{\sum_{s=1}^{N_{\min}} I(\hat\aa_{\mathcal N^s}=\aa^0_{\mathcal N^s}) },\\
	\frac{1}{J}\sum_{j\in\mathcal B^{J,\mult}} I( \qq_j^0 \nsim \hat\qq_j)
\leq &~ \frac{\frac{1}{NJ}\sum_{i=1}^N \sum_{j=1}^J I\left( \xi(\qq^0_j,\aa^0_i) \neq \xi(\hat\qq_j,\hat\aa_i)\right)}{1-\frac{1}{N}\sum_{s=1}^{N_{\min}} I(\hat\aa_{\mathcal N^s}\neq\aa^0_{\mathcal N^s}) }.
\end{align*}
The numerator of the above display is $o_P(\gamma_{J})$, and in the denominator, the previous Step 5 guarantees that $\frac{1}{N}\sum_{s=1}^{N_{\min}} I(\hat\aa_{\mathcal N^s}\neq\aa^0_{\mathcal N^s})=o_P(\gamma_{J})$, so we further obtain
\begin{equation}\label{eq-qmult}
	\frac{1}{J}\sum_{j\in\mathcal B^{J,\mult}} I( \qq_j^0 \nsim \hat\qq_j) \leq \frac{o_P(\gamma_{J}/\beta_J)}{1-o_P(\gamma_{J}/\delta_J)} = o_P \left(\frac{\gamma_{J}}{\beta_J}\right).
\end{equation}
Summing up \eqref{eq-qsingle} and \eqref{eq-qmult} gives
\begin{align}\label{eq-pf-qrate}
	\frac{1}{J}\sum_{j\in\mathcal J_0} I(\hat\qq_j^0 \nsim \hat\qq_j) = o_P\left(\frac{\gamma_{J}}{\beta_J\cdot p_N}\right).
\end{align}
The proof of the theorem is now complete.

\bigskip

\color{black}
\subsection{Proof of Theorem \ref{thm-joint-both} for Multi-Parameter Models}
\label{sec-pf2}

The proof is similar in spirit to that in Section \ref{sec-pf1} for two parameter-models and it can also be divided into Steps 1-7. 
We next will focus on discussing the differences. 

\noindent\textbf{Step 1.} This step is the same as Step 1 in the proof of Theorem \ref{thm-joint-both} and hence is omitted.

\noindent\textbf{Step 2.} When bounding the sum of KL-divergences under a multi-parameter SLAM, we need to introduce the following lemma.
\begin{lemma}
	\label{lem-kl}
	Under a multi-parameter SLAM,
	the following event happens with probability at least $1-\delta$,
\begin{equation*}
\sum_{j=1}^J\sum_{a=1}^{L_j} n_{j,a} D(\hat\theta_{j,a}  \| \bar \theta_{j,a}) < \epsilon = N\log (2^K) + J2^{K_0} \log\left(\frac{N}{2^{K_0}} + 1\right) - \log\delta.
\end{equation*}
\end{lemma}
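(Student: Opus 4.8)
The plan is to mirror the proof of the two-parameter analog, Lemma \ref{lem-kl-2p}, the only genuinely new feature being that each item now induces up to $L_j = 2^{K_j} \le 2^{K_0}$ local latent classes instead of exactly two. The backbone is a single exponential-moment (Chernoff) bound, aggregated across classes and items by independence, and then inflated by a union bound over all admissible structures $\ZZ$.

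First I would establish the single-class moment inequality. Fix an item $j$ and a local class $a$, and recall from \eqref{eq-zmle} that $\hat\theta_{j,a}$ is the empirical mean and $\bar\theta_{j,a}$ the true mean of the $n_{j,a}$ independent Bernoulli variables $\{r_{i,j}: z_{i,j}=a\}$. The claim is $\mathbb E[\exp\{n_{j,a} D(\hat\theta_{j,a}\|\bar\theta_{j,a})\}] \le n_{j,a}+1$. Writing $S=\sum_{i:\,z_{i,j}=a} r_{i,j}$, the left side equals $\sum_{k=0}^{n_{j,a}} \mathbb P(S=k)\,e^{n_{j,a}D(k/n_{j,a}\,\|\,\bar\theta_{j,a})}$, and the relative-entropy Chernoff bound for sums of independent (not necessarily identical) Bernoulli variables gives $\mathbb P(S=k)\le e^{-n_{j,a}D(k/n_{j,a}\,\|\,\bar\theta_{j,a})}$ on each tail, so every summand is at most $1$ and there are $n_{j,a}+1$ of them. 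This is the only place where the non-identical nature of the $P_{i,j}$ pooled inside a candidate class matters, and it is exactly the ingredient already invoked for Lemma \ref{lem-kl-2p}.

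Next I would aggregate. Since the local classes partition the subjects, the variables across distinct classes of a fixed item are independent, whence $\mathbb E[\exp\{\sum_{a=1}^{L_j} n_{j,a}D(\hat\theta_{j,a}\|\bar\theta_{j,a})\}] \le \prod_{a=1}^{L_j}(n_{j,a}+1)$. Subject to $\sum_a n_{j,a}=N$, AM--GM yields $\prod_a(n_{j,a}+1)\le (N/L_j+1)^{L_j}$; and because $L\mapsto (N/L+1)^{L}$ is increasing (its log-derivative $\log(1+N/L)-\tfrac{N/L}{1+N/L}$ is nonnegative) while $L_j\le 2^{K_0}$ by Assumption \ref{cond-id}, this is at most $(N/2^{K_0}+1)^{2^{K_0}}$ uniformly in $j$. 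This replaces the crude factor $(N/2+1)^2$ of the two-parameter case and is where the constant $K_0$ enters. Independence across items then multiplies these: $\mathbb E[\exp\{\sum_{j=1}^J\sum_{a} n_{j,a}D(\hat\theta_{j,a}\|\bar\theta_{j,a})\}] \le (N/2^{K_0}+1)^{J2^{K_0}}$ for each fixed $\ZZ$.

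Finally I would apply Markov's inequality to this exponential moment together with a union bound over candidate structures. Each of the $N$ subjects is assigned one of $2^K$ global patterns, giving at most $(2^K)^N$ such $\ZZ$ (the choice of $\QQ$ contributes only a lower-order factor absorbed into the leading terms, consistent with the accounting for Lemma \ref{lem-kl-2p}), so $\mathbb P(\max_\ZZ \sum_{j,a} n_{j,a}D(\hat\theta_{j,a}\|\bar\theta_{j,a}) \ge t) \le (2^K)^N (N/2^{K_0}+1)^{J2^{K_0}} e^{-t}$. Setting the right side equal to $\delta$ and solving for $t$ returns $t = N\log(2^K) + J2^{K_0}\log(N/2^{K_0}+1) - \log\delta = \epsilon$, the stated threshold. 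The main obstacle is purely bookkeeping in the aggregation step — extracting the sharp $2^{K_0}$ exponent from the monotonicity of $(N/L+1)^L$ and confirming that enlarging the local classes from $2$ to $2^{K_0}$ leaves the union-bound count otherwise intact; the probabilistic content is the same method-of-types/Chernoff estimate already established for the two-parameter model.
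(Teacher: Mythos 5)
Your proposal is correct and follows essentially the same route as the paper's proof: the pointwise Chernoff--Hoeffding bound $\mathbb P(\hat\theta_{j,a}=\vartheta)\le e^{-n_{j,a}D(\vartheta\|\bar\theta_{j,a})}$, the product $\prod_{j,a}(n_{j,a}+1)\le (N/2^{K_0}+1)^{J2^{K_0}}$ via AM--GM under $\sum_a n_{j,a}=N$ and $L_j\le 2^{K_0}$, a union bound over the $(2^K)^N$ assignments, and solving for $\epsilon$. Your packaging via the exponential moment plus Markov is numerically identical to the paper's "count the bad set $\hat\Theta_\epsilon$ times the uniform point-probability bound," and your explicit remarks on the non-identical Bernoullis within a pooled class and on the monotonicity of $L\mapsto(N/L+1)^L$ only make explicit steps the paper leaves implicit.
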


\begin{proof}
Please see Page \pageref{pf-lem5}.
\end{proof}

Similar to the Step 2 in the proof of Theorem \ref{thm-joint-both}, we obtain
\begin{align*}
	\mathbb P (|X-\mathbb E[X]| \geq \epsilon)
\leq &~2\exp\left\{ -\frac{(1/2)\epsilon^2}{\sum_{i=1}^N \sum_{j=1}^N \ME[X_{i,j}^2]+ (2/3)\epsilon\log J } \right\}\\
\leq &~ 2\exp\left\{ -\frac{(1/2)\epsilon^2}{2d^2 MNJ(\log J)^2 + (2/3)\epsilon\log J} \right\}.
\end{align*}

\bigskip
\noindent\textbf{Step 3-4.}
In this step, we prove the following lemma.
\begin{lemma}
	\label{prop-scale}
	Under the following scaling for some small positive constant $c>0$,
\begin{equation}\label{eq-scale}
\sqrt{J}\cdot 2^K = o (N^{1-c}),
\end{equation}
we have $$\frac{1}{NJ}\max_{\ZZ} |\ell(\RR;\,\ZZ) - \ME \ell(\RR;\,\ZZ)| 
	= o_P\left(\frac{\sqrt{M\log(2^K)}}{\sqrt{J}} (\log J)^{1+\epsilon}\right).
	$$
\end{lemma}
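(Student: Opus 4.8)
The plan is to follow the same three-part structure as the proof of Proposition \ref{prop-scale1}, replacing each two-parameter ingredient by its multi-parameter analogue. First I would invoke the decomposition from Step 1 (Lemma \ref{lem-express}), which was proved for a \emph{general} latent class model and therefore applies verbatim: for any fixed $\ZZ$,
\begin{align*}
\ell(\RR;\,\ZZ) - \ME\,\ell(\RR;\,\ZZ)
= \sum_{j=1}^J \sum_{a=1}^{L_j} n_{j,a}\,D(\hat\theta_{j,a}\|\bar\theta_{j,a}) + \big(X - \ME X\big),
\end{align*}
where now $L_j = 2^{K_j}$ with $K_j=\sum_k q_{j,k}$, rather than $L_j=2$ as under the two-parameter model. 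I would then control the two summands separately: the first by the multi-parameter KL bound of Step 2 (Lemma \ref{lem-kl}), and the second by the Bernstein inequality established at the start of this subsection.

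Next I would assemble these two bounds with a union bound over all $(2^K)^N$ possible class assignments $\ZZ$, exactly as in \eqref{eq-deltanj}. Writing $\delta_{NJ}$ for the target deviation scale, this gives
\begin{align*}
&\MP\Big(\max_{\ZZ}\big|\ell(\RR;\,\ZZ) - \ME\,\ell(\RR;\,\ZZ)\big| \geq 2\epsilon\,\delta_{NJ}\Big)\\
&\quad\leq \exp\Big\{N\log(2^K) + J2^{K_0}\log\big(\tfrac{N}{2^{K_0}}+1\big) - \epsilon\,\delta_{NJ}\Big\}
+ 2\exp\Big\{N\log(2^K) - \tfrac{\epsilon^2\delta_{NJ}}{2d^2(MNJ/\delta_{NJ})(\log J)^2 + (4/3)\epsilon\log J}\Big\}.
\end{align*}
The only change from the two-parameter derivation is that the combinatorial complexity term carries a factor $2^{K_0}$ in place of the constant $2$, because under the constraint $\sum_k q_{j,k}^{\true}\leq K_0$ each item induces at most $2^{K_0}$ local latent classes. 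I would then take $\delta_{NJ} = N\sqrt{MJ\log(2^K)}\,(\log J)^{1+\epsilon}$, the same choice as before.

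Finally I would verify that both exponents tend to $-\infty$ under the stated scaling $\sqrt{J}\cdot 2^K = o(N^{1-c})$, together with the growth condition on $K$ inherited from Theorem \ref{thm-joint-both}; dividing by $NJ$ and recalling the definition \eqref{eq-defgamma} of $\gamma_J$ then yields the claimed $o_P(\gamma_J)$ rate. The step I expect to require the most care is controlling the first exponential, i.e. checking that $\epsilon\,\delta_{NJ}$ dominates the enlarged combinatorial term $N\log(2^K) + J2^{K_0}\log(N/2^{K_0}+1)$. The key observation making this routine is that $K_0$ is a \emph{fixed} constant by Assumption \ref{cond-id}, so the factor $2^{K_0}$ is an absolute constant that does not interact with the asymptotics; consequently the same scaling that sufficed for the two-parameter case continues to suffice here, the Bernstein exponent being handled identically to Step 3 of the two-parameter proof. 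The main obstacle is therefore careful bookkeeping rather than a genuinely new probabilistic argument.
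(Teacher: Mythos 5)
Your proposal follows essentially the same route as the paper's proof: the decomposition from Lemma \ref{lem-express}, the multi-parameter KL bound of Lemma \ref{lem-kl}, Bernstein's inequality for the centered term, a union bound over the $(2^K)^N$ assignments, and the same choice $\delta_{NJ}=N\sqrt{MJ\log(2^K)}\,(\log J)^{1+\epsilon}$, with the observation that the only change from the two-parameter case is the constant factor $2^{K_0}$ in the combinatorial term. This matches the paper's argument; the bookkeeping you flag (checking that $\epsilon\,\delta_{NJ}$ dominates $N\log(2^K)+J2^{K_0}\log(N/2^{K_0}+1)$ under the stated scaling) is exactly what the paper does.
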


\begin{proof}
Please see Page \pageref{pf-lem6}.
\end{proof}

Under the considered multi-parameter SLAM, with the scaling of $N,J, K$ in part (b) of Theorem \ref{thm-joint-both}, there is
\begin{align}\label{eq-lower}
	o_P(NJ\cdot \gamma_{J}) = &~ \bar \ell(\ZZ^0) - \bar \ell(\hat \ZZ) \\ \notag
    =&~\sum_{i=1}^N\sum_{j=1}^J \left[P_{i,j}\log\left(\frac{\theta^{\ZZ^0}_{j,z_i^0}}{\bar\theta^{\hat\ZZ}_{j,\hat z_i}}\right)
    + (1-P_{i,j})\log\left(\frac{1-\theta^{\ZZ^0}_{j,z_i^0}}{1-\bar\theta^{\hat\ZZ}_{j,\hat z_i}}\right) \right]\\ \notag
= &~ \sum_{i=1}^N\sum_{j=1}^J D(P_{i,j}\| \bar\theta^{\hat\ZZ}_{j,\hat z_i}) 
\geq
\sum_{i=1}^N\sum_{j=1}^J 
2\left(P_{i,j} -\bar\theta^{\hat\ZZ}_{j,\hat z_i}\right)^2.
\end{align}
A similar argument establishes the conclusion of part (a) of the Theorem \ref{thm-joint-both} for multi-parameter models.

\bigskip
\noindent\textbf{Steps 5-6.}
Note that when $K_j = \sum_{k=1} q^0_{j,k}= 1$, we have $ \sum_{k=1} \hat q_{j,k}= 1$, so the above constraints reduce to $N_{11}^j N_{00}^j  > N_{10}^j N_{01}^j$ in this case.
Indeed, under Assumption \ref{cond-id}, there exist many blocks of identity submatrix $I_K$'s in the matrix $\QQ^0$, and for each single-attribute $\qq$-vector, the all-effect model behaves exactly like a two-parameter model and there is
	$\bar \ell(\ZZ^0) - \bar \ell(\hat \ZZ) \geq 
	\sum_{i=1}^N \sum_{j\in\mathcal B^{J,\ee}} 2(P_{i,j} -\bar\theta^{\hat\ZZ}_{j,\hat z_i})^2$.
Therefore, for these items, we can just impose the natural constraint \eqref{eq-estid} and proceed as the Step 5 and Step 6 of Theorem \ref{thm-joint-both} to obtain the conclusion similar to the previous \eqref{eq-qsingle} and \eqref{eq-arate-2p}, 
\begin{align}\label{eq-multqa}
	\frac{1}{J}\sum_{1\leq j\leq J\atop j\in\mathcal B^{J,\ee}} I(\hat\qq_j\nsim \qq_j^0)
	=o_P\left(\frac{\gamma_{J}}{\beta_J\cdot p_N}\right),\quad
	\frac{1}{N}\sum_{i=1}^N I(\aa^0_i\nsim \hat\aa_i)
=o_P\left(\frac{\gamma_{J}}{\beta_J \cdot  \delta_J}\right).
\end{align}
Under an all-effect SLAM, for an item $j\in[J]$ with $\sum_{k=1}^K q^0_{j,k} = K_j$, there are $L_j = 2^{K_j}$ potentially distinct item parameters associated with it. 
Under $\hat\ZZ$, similar to those under a two-parameter SLAM, we impose the following natural constraints to prevent label swapping, 
\begin{align}
\label{eq-natcons2}
&\bar\theta_{j,a} < \bar\theta_{j,b}~~\text{if}~~\theta_{j,a} < \theta_{j,b}.
\end{align} 
Without loss of generality, assume the index $L_j = 2^{K_j}$ corresponds to the latent class to which the all-one latent attribute pattern $\aaa=\one_K$ belongs to, both under the true $(\QQ^0, \AA^0)$ and the estimated $(\hat\QQ, \hat\AA)$. 
Note that when $K_j = \sum_{k=1} q^0_{j,k} = 1$, we have $ \sum_{k=1} \hat q_{j,k}= 1$ by the definition of the joint MLE, so the above constraints reduce to $N_{11}^j N_{00}^j  > N_{10}^j N_{01}^j$ in this case.
Given $\hat\ZZ$-induced $\{\hat\phi_{i,j}\}$, define
\begin{align*}
	N^j_{a,b} = \sum_{i=1}^N I(\phi^0_{i,j} = a)I(\hat\phi_{i,j} = b),\quad a,~b\in\{1,\ldots,L_j\},
\end{align*}
then there is
\begin{align}\notag
	\bar\theta_{j,b}^{\hat\ZZ} 
	=&~
	\frac{\sum_{i=1}^N I(\hat\phi_{i,j}=b)P_{i,j}}{\sum_{i=1}^N I(\hat\phi_{i,j}=b)} \\
	=&~
	\frac{\sum_{i=1}^N I(\hat\phi_{i,j}=b)\cdot [\sum_{a=1}^{L_j} I(\phi_{i,j}=a)\cdot\theta^0_{j,a}]}{\sum_{i=1}^N I(\hat\phi_{i,j}=b)}
	=
	\frac{\sum_{a=1}^{L_j} N_{a,b}^j\theta_{j,a}^0}{\sum_{a=1}^{L_j} N_{a,b}^j}.
\end{align}
Now we have
%
\begin{align}\notag
\bar\ell(\ZZ^0) - \bar\ell(\hat\ZZ) 
=\sum_{i=1}^N\sum_{j=1}^J D(P_{i,j}\| \bar\theta^{\,\hat\ZZ}_{j,\hat z_i})  
\geq &~ \sum_{i=1}^N \sum_{j\not\in \mathcal B^{J,\ee}}
 I(\hat \aa_i=\aa_i^0)  
        D(P_{i,j}\| \bar\theta^{\,\hat\ZZ}_{j,\hat z_i}).
\end{align}
For each item $j$, introduce notation 
\begin{align}\label{eq-deffj}
  F_j = \sum_{i=1}^N  I(\hat \aa_i=\aa_i^0)  D(P_{i,j}\| \bar\theta^{\,\hat\ZZ}_{j,\hat z_i}),
\end{align}
then the previous inequality gives $\bar\ell(\ZZ^0) - \bar\ell(\hat\ZZ) \geq \sum_{j=1}^J F_j$.
We next show that for any item $j\not\in \mathcal B^{J,\ee}$,  if $\hat\qq_j \neq \qq_j^0$, then 
$F_j \geq \Omega(N)\cdot \beta_J\cdot(1-o_P(1))$.
To this end, we consider two different cases separately: case (a) $\hat\qq_j\preceq \qq_j^0$ and $\hat\qq_j\neq \qq_j^0$; or case (b)  $\hat\qq_j\npreceq \qq_j^0$.

\medskip
\noindent\textbf{Case (a)}, $\hat\qq_j\preceq \qq_j^0$. This case can be viewed as an ``underfitting'' case, because multiple original patterns are grouped into a larger and coarser group under $\hat\qq_j \preceq \qq_j^0$.
For each item $j$, define
\begin{align*}
    N_{\succeq\qq_j^0,\;\succeq\hat\qq_j} 
    =&~ 
     \sum_{i=1}^N I (\aa_i^0 \succeq \qq_j^0,\; \aa_i^0 \succeq \hat\qq_j)
    =
    \sum_{i=1}^N I (\aa_i^0 \succeq \qq_j^0),
    \\
    N_{\nsucceq\qq_j^0,\; \succeq\hat\qq_j} 
    =&~  \sum_{i=1}^N I (\aa_i^0 \nsucceq \qq_j^0,\; \aa_i^0 \succeq \hat\qq_j).
\end{align*}
%
Then
\begin{align}\notag
    F_j 
    \geq &~
     \sum_{i=1}^N I(\hat\aa_i = \aa_i^0)
     I(\hat\aa_i \succeq \qq_j^0) D\Big(\theta_{j,\one} \Big{\|} \frac{ N_{\succeq\qq_j^0,\;\succeq\hat\qq_j}  \theta_{j,\one} +  N_{\nsucceq\qq_j^0,\; \succeq\hat\qq_j} \widetilde{\theta}_{j,\aaa} }{ N_{\succeq\qq_j^0,\;\succeq\hat\qq_j}   +  N_{\nsucceq\qq_j^0,\; \succeq\hat\qq_j}} \Big)
     \\ \label{eq-fjrhs}
     \geq &~
     \underbrace{D\Big(\theta_{j,\one} \Big{\|} \frac{ N_{\succeq\qq_j^0,\;\succeq\hat\qq_j}  \theta_{j,\one} +  N_{\nsucceq\qq_j^0,\; \succeq\hat\qq_j} \widetilde{\theta}_{j,\aaa} }{ N_{\succeq\qq_j^0,\;\succeq\hat\qq_j}   +  N_{\nsucceq\qq_j^0,\; \succeq\hat\qq_j}} \Big)}_{\text{(I)}}
     \underbrace{\sum_{i=1}^N I(\hat\aa_i = \aa_i^0)
     I(\aa_i^0 \succeq \qq_j^0)}_{\text{(II)}}.
\end{align}
The $\widetilde{\theta}_{j,\aaa}$ in the above display is
\begin{align*}
    \widetilde{\theta}_{j,\aaa}
    = \frac{\sum_{i=1}^N I(\aa_i^0 \nsucceq\qq_j^0,\; \aa_i^0 \succeq\hat\qq_j) P_{ij}}{N_{\nsucceq\qq_j^0,\; \succeq\hat\qq_j}},
\end{align*}
where each $P_{ij}$ in the summation in the numerator of $\widetilde{\theta}_{j,\aaa}$ satisfies $P_{ij} < \theta_{j,\one}$ because $\aa_i^0 \nsucceq\qq_j^0$.
First consider the first factor (I) in the right hand side (RHS) of \eqref{eq-fjrhs},
\begin{align*}
\text{(I) in \eqref{eq-fjrhs}}
= &~
    D\Big(\theta_{j,\one} \Big{\|} \frac{ N_{\succeq\qq_j^0,\;\succeq\hat\qq_j}  \theta_{j,\one} +  N_{\nsucceq\qq_j^0,\; \succeq\hat\qq_j} \widetilde{\theta}_{j,\aaa} }{ N_{\succeq\qq_j^0,\;\succeq\hat\qq_j}  +  N_{\nsucceq\qq_j^0,\; \succeq\hat\qq_j}} \Big)
    \\
    \geq &~
    2 \left(\theta_{j,\one} - \frac{ N_{\succeq\qq_j^0,\;\succeq\hat\qq_j}  \theta_{j,\one} +  N_{\nsucceq\qq_j^0,\; \succeq\hat\qq_j} \widetilde{\theta}_{j,\aaa} }{ N_{\succeq\qq_j^0,\;\succeq\hat\qq_j}  +  N_{\nsucceq\qq_j^0,\; \succeq\hat\qq_j}} \right)^2
    \\
    \geq &~
    2 \frac{(N_{\nsucceq\qq_j^0,\; \succeq\hat\qq_j})^2 (\theta_{j,\one} - \widetilde{\theta}_{j,\aaa})^2}{\left(N_{\succeq\qq_j^0,\;\succeq\hat\qq_j}  +  N_{\nsucceq\qq_j^0,\; \succeq\hat\qq_j}\right)^2}
    \\
    \geq &~
    2\beta_J\cdot \left(\frac{N2^{-K_0}\cdot\epsilon}{N}\right)^2
    = 2^{1-2K_0}\epsilon^2\beta_J
\end{align*}
where the last inequality above holds because $\hat\qq_j \preceq \qq_j^0$ and hence $N_{\nsucceq\qq_j^0,\; \succeq\hat\qq_j} 
\geq 
2^{K-K_0} \cdot N\cdot p_N \geq N2^{-K_0}\cdot\epsilon$ under Assumption \ref{cond-id} and that $N_{\succeq\qq_j^0,\;\succeq\hat\qq_j}  +  N_{\nsucceq\qq_j^0,\; \succeq\hat\qq_j}\leq N$.
Next consider the second factor (II) in the RHS of \eqref{eq-fjrhs},
\begin{align*}
\text{(II) in \eqref{eq-fjrhs}}
=&~
    \sum_{i=1}^N I(\hat\aa_i = \aa_i^0) I(\aa_i^0 \succeq \qq_j^0)  \\
=&~ \sum_{i=1}^N  I(\aa_i^0 \succeq \qq_j^0) -\sum_{i=1}^N I(\aa_i^0 \succeq \qq_j^0,\; \hat\aa_i \neq \aa_i^0)
    \\
    \geq &~
    \sum_{i=1}^N  I(\aa_i^0 \succeq \qq_j^0) -\sum_{i=1}^N I( \hat\aa_i \neq \aa_i^0)
    \\
    \geq &~
    \frac{N\cdot p_N}{2^{K-K_0}} - o_P\left( \frac{N\cdot\gamma_J}{\beta_J\cdot \delta_J} \right)
    \\
    \geq &~ 
    N \epsilon 2^{K_0} - o_P\left( \frac{N\cdot\gamma_J}{\beta_J\cdot \delta_J} \right) \quad(\text{due to Assumption }\ref{cond-id}\text{ on }p_N).
\end{align*}
Therefore $F_j$ in \eqref{eq-fjrhs} can be lower bounded as follows,
\begin{align*}
    F_j
    \geq  &~
    2^{1-2K_0}\epsilon^2\beta_J
    \cdot 
    \left[N \epsilon 2^{K_0} - o_P\left( \frac{N\cdot\gamma_J}{\beta_J\cdot \delta_J} \right)\right]\\
    =&~
    N\beta_J\left[2^{1-K_0}\epsilon^3
    -
    o_P\left( \frac{\gamma_J}{\beta_J\cdot \delta_J} \right)\right]
    =\Omega(N)\cdot\beta_J\cdot(1-o_P(1)).
\end{align*}

\medskip
\noindent\textbf{Case (b)}, $\hat\qq_j\npreceq \qq_j^0$.
Recalling the constraint that $\sum_{k=1}^K \hat q_{j,k}\leq \sum_{k=1}^K q^0_{j,k}$ in the definition of the joint MLE, in this case there must also be $\hat\qq_j\nsucceq \qq_j^0$.
We call this case the ``misfitting'' scenario as the fitted $\hat q_{j,k}$ measures some additional attributes not measured by $\qq_j^0$.
%
Note that since $\hat\qq_j\npreceq \qq_j^0$ and $\hat\qq_j\nsucceq \qq_j^0$, the vector $\tilde\qq_{j}$ contains strictly fewer entries of ``1'' than either $\hat\qq_{j}$ or $\qq_j^0$.
In the definition \eqref{eq-deffj} of $F_j$, the summation involves all the subjects indexed from $1$ to $N$. We next also only consider those subjects $i$ satisfying $\aa_i^0 \succeq \qq_j^0$.

Define $K_{j}^{\text{dif}} = \sum_{k=1}^K I(\hat q_{j,k}=1, q_{j,k}^0=0)$. 
Without loss of generality, suppose 
\begin{align}\label{eq-qq}
    \qq_j^0 &= 
    (\underbrace{1,\ldots,1,}_{K_j^0\text{ entries}}~
    \underbrace{1,\ldots,1,}_{K_j^{\text{com}}\text{ entries}}~
    \underbrace{0,\ldots,0,}_{K_j^{\text{dif}}\text{ entries}}~
    \underbrace{0,\ldots,0}_{K_j^{\text{rem}}\text{ entries}})\\ \notag
    \hat\qq_j &=  (\underbrace{0,\ldots,0,}_{K_j^0\text{ entries}}~
    \underbrace{1,\ldots,1,}_{K_j^{\text{com}}\text{ entries}}~
    \underbrace{1,\ldots,1,}_{K_j^{\text{dif}}\text{ entries}}~
    \underbrace{0,\ldots,0}_{K_j^{\text{rem}}\text{ entries}});
\end{align}
That is, $K_j^0$ is the number of attributes measured by $\qq_j^0$ but not by $\hat\qq_j$, $K_j^{\text{com}}$ is the number of attributes commonly measured by both $\qq_j^0$ and $\hat\qq_j$, $K_j^{\text{dif}}$ is the number of attributes measured by $\hat\qq_j$ but not by $\qq_j^0$, and $K_j^{\text{rem}}$ is the number of remaining entries not measured by either $\hat\qq_j$ or $\qq_j^0$.
In the current case (b) with $\hat\qq_j\npreceq \qq_j^0$ and $\hat\qq_j\nsucceq \qq_j^0$, there must be $K_j^{\text{dif}}>0$ and $K_j^0 > 0$. Also define $K_{j}^{\max} = K_j^0 + K_j^{\text{com}} + K_j^{\text{dif}}$, and by the definition of the joint MLE there is $K_{j}^{\max}\leq 2K_0$, where $K_0$ is the upper bound of the number of attributes measured by either $\qq_j^0$ or $\hat\qq_j$.
Based on the structures of $\qq_j^0$ and $\hat\qq_j$, we define $\ell_j := 2^{K_j^{\text{dif}}}$ attribute patterns $ \aaa_{j,1},\aaa_{j,2},\ldots,\aaa_{j,\ell_j}$ which coincide in the first $K_j^0$ attributes and the last $K_j^{\text{dif}} + K_j^{\text{rem}}$ attributes as follows,
\begin{align}\label{eq-ajm}
    \aaa_{j,m} = (\underbrace{1,\ldots,1,}_{K_j^0\text{ entries}}~
    \underbrace{1,\ldots,1,}_{K_j^{\text{com}}\text{ entries}}~
    \underbrace{*,\ldots,*,}_{K_j^{\text{dif}}\text{ entries}}~
    \underbrace{0,\ldots,0}_{K_j^{\text{rem}}\text{ entries}}),
    \quad \forall m\in\{1,\ldots,\ell_j=2^{K_j^{\text{dif}}}\}.
\end{align}
For the middle $K_j^{\text{dif}}$ entries, the $ \aaa_{j,1},\aaa_{j,2},\ldots,\aaa_{j,\ell_j}$ range over all the $\ell_j$ possible binary vector configurations; for example, $\aaa_{j,1}$ has the middle $K_j^{\text{dif}}$ entries being all zeros, and the $\aaa_{j,\ell_j}$ has the middle $K_j^{\text{dif}}$ entries being all ones, etc.
Then
\begin{align*}
    F_j 
    \geq &~
    \sum_{i=1}^N I(\hat\aa_i = \aa_i^0) 
    I(\hat\aa_i \succeq \qq_j^0)
    D( \theta_{j,\one} {\|} \bar\theta_{j,\hat z_i}^{\hat\ZZ} )
    \\
    = &~
    \sum_{m=1}^{\ell_j}
    \sum_{i=1}^N 
    I(\hat\aa_i = \aa_i^0) 
    I(\hat\aa_i \succeq \qq_j^0)
    I(\aa_i^0\text{ equals }\aaa_{j,m}\text{ in the first }K_j^{\max}\text{ entries})D(\theta_{j,\one} {\|} \bar\theta_{j,\hat z_i}^{\hat\ZZ}),
\end{align*}
where the last equality above holds because of the definitions in \eqref{eq-qq} and \eqref{eq-ajm}.
Next, note that for any $\aa_i^0\text{ which equals }\aaa_{j,m}\text{ in the first }K_j^{\max}$ entries, 
there must be 
\begin{align}\label{eq-ajm2}
\bar\theta_{j,\hat z_i}^{\hat\ZZ} 
=&~ \bar\theta_{j,\aaa_{j,m}}^{\hat\ZZ}
    \\ \notag
=&~
\frac{\sum_{i} \theta_{j,\one} I(\aa_i^0\text{ equals }\aaa_{j,m}\text{ in all entries }1,\ldots,K_j^0+K_j^{\text{com}})}
{\sum_{i} I(\aa_i^0\text{ equals }\aaa_{j,m}\text{ in all entries }K_j^0+1,\ldots,K_j^{\text{max}})}
\\ \notag
&~+
\frac{\sum_{i} \tilde\theta_{j,\aaa} I(\aa_i^0\text{ doesn't equal }\aaa_{j,m}\text{ in some entries in }1,\ldots,K_j^0+K_j^{\text{com}})}
{\sum_{i} I(\aa_i^0\text{ equals }\aaa_{j,m}\text{ in all entries }K_j^0+1,\ldots,K_j^{\text{max}})},
\end{align}
where $\tilde\theta_{j,\aaa}$ is defined as
\begin{align*}
&~ \tilde\theta_{j,\aaa}
\sum_i I(\aa_i^0\text{ doesn't equal }\aaa_{j,m}\text{ in some entries in }1,\ldots,K_j^0+K_j^{\text{com}})
\\
=&~ \sum_{i} P_{ij} I(\aa_i^0\text{ doesn't equal }\aaa_{j,m}\text{ in some entries in }1,\ldots,K_j^0+K_j^{\text{com}}).
\end{align*}
Therefore for each $m\in\{1,\ldots,\ell_j\}$ there is
\begin{align*}
&~ D\Big( \theta_{j,\one} \Big{\|} \bar\theta_{j,\aaa_{j,m}}^{\hat\ZZ} \Big)
\\
\geq &~
\left(
\frac{\sum_i I(\aa_i^0\text{ doesn't equal }\aaa_{j,m}\text{ in some entries in }1,\ldots,K_j^0+K_j^{\text{com}}) (\theta_{j,\one} - \tilde\theta_{j,\aaa})}
{\sum_{i} I(\aa_i^0\text{ equals }\aaa_{j,m}\text{ in all entries }K_j^0+1,\ldots,K_j^{\text{max}})}
\right)^2
\\
\geq &~
\left( 
\frac{N\cdot p_N\cdot (2^{K_j^0 + K_j^{\text{com}}} - 1) 2^{ K - K_j^0 - K_j^{\text{com}}} }{N}
\right)^2 \cdot (\theta_{j,\one} - \tilde\theta_{j,\aaa})^2
\\
\geq &~
\epsilon^2 \cdot (2^{K_j^0 + K_j^{\text{com}}} - 1)^2
\cdot 2^{- 2K_j^0 - 2K_j^{\text{com}}}
\cdot \beta_J
\end{align*}
Now insert the above lower bound back into the lower bound for $F_j$ and we have
\begin{align*}
    F_j 
    \geq &~
    \sum_{m=1}^{\ell_j}
    \sum_{i=1}^N 
    I(\hat\aa_i = \aa_i^0) 
    I(\hat\aa_i \succeq \qq_j^0)
    I(\aa_i^0\text{ equals }\aaa_{j,m}\text{ in the first }K_j^{\max}\text{ entries})
    \\
    &~ \qquad\qquad \times
    \epsilon^2 \cdot (2^{K_j^0 + K_j^{\text{com}}} - 1)^2
\cdot 2^{- 2K_j^0 - 2K_j^{\text{com}}}
\cdot \beta_J
\\
= &~ \epsilon^2 \cdot (2^{K_j^0 + K_j^{\text{com}}} - 1)^2
\cdot 2^{- 2K_j^0 - 2K_j^{\text{com}}}
\cdot \beta_J
\cdot \sum_{i=1}^N I(\hat\aa_i = \aa_i^0) 
    I(\hat\aa_i \succeq \qq_j^0)
\\
\geq &~ \epsilon^2 \cdot (2^{K_j^0 + K_j^{\text{com}}} - 1)^2
\cdot 2^{- 2K_j^0 - 2K_j^{\text{com}}}
\cdot \beta_J
\cdot 
\left[
\sum_{i=1}^N
    I(\hat\aa_i \succeq \qq_j^0) 
    -
    \sum_{i=1}^N
    I(\hat\aa_i \neq \aa_i^0) 
\right]
\\
\geq &~ \Omega(N)\cdot\beta_J\cdot(1-o_P(1)),
\end{align*}
where the last inequality follows from a similar argument in the last step of the proof of the previous case (a).
    
\medskip
Now summarizing case (a) and case (b), we have that $F_j \geq I(\hat\qq_j \neq \qq_j^0)\cdot\Omega(N)\cdot\beta_J\cdot(1-o_P(1))$, therefore
\begin{align*}
o_P(NJ\cdot \gamma_{J}) 
=
    \sum_{j=1}^J F_j 
    \geq &~ 
    \sum_{1\leq j\leq J\atop j\not\in\mathcal B^{J,\ee}} I(\hat\qq_j \neq \qq_j^0) \cdot \Omega(N) \cdot\beta_J\cdot(1-o_P(1))
    \\
    = &~ \frac{1}{J}\sum_{1\leq j\leq J\atop j\not\in\mathcal B^{J,\ee}} I(\hat\qq_j \neq \qq_j^0) \cdot \Omega(NJ)\cdot\beta_J\cdot(1-o_P(1)).
\end{align*}
This implies that
\begin{align*}
    \frac{1}{J}\sum_{1\leq j\leq J\atop j\not\in\mathcal B^{J,\ee}} I(\hat\qq_j \neq \qq_j^0) 
    = o_P\left( 
    \frac{\gamma_J}{\beta_J}
    \right)
\end{align*}
Combining \eqref{eq-multqa} and the above gives
\begin{align}\label{eq-pf-qrate}
	\frac{1}{J}\sum_{j=1}^J I(\hat\qq_j^0 \nsim \hat\qq_j) = o_P\left(\frac{\gamma_{J}}{\beta_J\cdot p_N}\right).
\end{align}
This completes the proof of the theorem.


\color{black}

\bigskip

\subsection{{A Toy Example Illustrating Assumption \ref{as-weak}}} 
\label{sec-pftoy}
	Consider a single item $j$ with $\qq_j^0=(1,1,0)$ and rows of $\AA^0$ being $\aa^0_1=(0,0,0),\,\aa^0_2=(0,0,1), \aa^0_3=(0,1,0),\,\aa^0_4=(0,1,1),\, \aa^0_5=(1,0,0),\,\aa^0_6=(1,0,1),\, \aa^0_7=(1,1,0),\,\aa^0_8=(1,1,1)$.
	We claim that $f_j(\ZZ^0) = \min_{\ZZ=(\qq_j,\AA)}
		f_j(\ZZ)$ holds for this item $j\in\mathcal E_0$. 
	Under the multi-parameter SLAM, for $\qq_j^0$ there are four item parameters for this item $j$: $\theta_{j,\,(000)}=\theta_{j,\,(001)}$, $\theta_{j,\,(010)}=\theta_{j,\,(011)}$, $\theta_{j,\,(100)}=\theta_{j,\,(101)}$, and $\theta_{j,\,(110)}=\theta_{j,\,(111)}$; we denote these four parameters by $\theta_{j,\,(00*)}$, $\theta_{j,\,(01*)}$, $\theta_{j,\,(10*)}$, and $\theta_{j,\,(11*)}$, respectively.
Define
\begin{align*}
& L_j=\min_{ab\in\{00,\,01,\,10\}} (\theta_{j,\,(11*)} - \theta_{j,\,(a b*)})^2;\\
& U_j
		=\max_{ab,cd\in\{00,\,01,\,10\}\atop ab\neq cd} D(\theta_{j,\,(ab*)}\| \theta_{j,\,(cd*)}).
\end{align*}
	First, the $f_j(\ZZ^0)$ in the right hand side of \eqref{eq-less} can be upper bounded as follows,
	\begin{align}\label{eq-lessright}
		&~f_j(\ZZ^0)=\sum_{i=1}^N D\Big(P^{\mult}_{i,j} \Big\| P^{2,\ZZ^0}_{i,j} \Big) \\ \notag
		=&~ \sum_{ab\in\{00,\,01,\,10\}} 2D\Big(\theta_{j,\,(ab*)}  \Big\| \frac{\theta_{j,\,(00*)} + \theta_{j,\,(01*)} + \theta_{j,\,(01*)}}{3}\Big)\\ \notag
		\leq &~ \sum_{ab\in\{00,\,01,\,10\}} 2\frac{\sum_{cd\in\{00,\,01,\,10\}} D(\theta_{j,\,(ab*)}\| \theta_{j,\,(cd*)})}{3}\\ \notag
		\leq &~ \frac{2}{3}\sum_{ab,cd\in\{00,\,01,\,10\}\atop ab\neq cd}D(\theta_{j,\,(ab*)}\| \theta_{j,\,(cd*)})\\ \notag
		\leq &~ 4\max_{ab,cd\in\{00,\,01,\,10\}\atop ab\neq cd} D(\theta_{j,\,(ab*)}\| \theta_{j,\,(cd*)})
		=4 U_j,
	\end{align} 
	where the last but third inequality is due to the convexity of the KL divergence with respect to its second argument.
	Next we consider $f_j(\ZZ)$ regarding an arbitrary $\ZZ$ in the left hand side of \eqref{eq-less}. There is 
	\begin{align*}
		f_j(\ZZ)
		=&~ \sum_{i=1}^N D\Big(P^{\mult}_{i,j} \Big\| P^{2,\ZZ}_{i,j} \Big) 
		\geq %
		\sum_{i: \,\aa_i\succeq\qq_j^0} D\Big(\theta_{j,\,(11*)} \Big\| P^{2,\ZZ}_{i,j} \Big).
	\end{align*}
	Note that $\ZZ$ under $\sum_{k=1}^K q_{j,k}\leq \sum_{k=1}^K q^0_{j,k}=2$ induces a partition of the $N=8$ subjects into at most $2^2=4$ latent classes. If $\ZZ\neq\ZZ^0$, then the partition induced by $\ZZ$ is different from those under $\ZZ^0$. 
		Consider two possible cases, (1)  subjects $i=7,8$ belong to the same latent class under $\ZZ$, (2) subjects $i=7,8$ belong to two different latent classes under $\ZZ$. In case (1), denote the number of other subjects falling in the same cluster of $i=7,8$ by $m$. Since we assume $\ZZ\neq\ZZ^0$, we must have $m\geq 1$ and 
	\begin{align*}
	  f_j(\ZZ,\,\text{case (1)})
	\geq &~ 2 D\Big(\theta_{j,\,(11*)} \Big \| \frac{2\theta_{j,\,(11*)} + \sum_{\ell=1}^m\theta_{j,\,(a_\ell b_\ell*)}}{2+m}\Big)\\
	\geq &~ 
	\frac{4m^2}{(2+m)^2} \min_{ab\in\{00,\,01,\,10\}} (\theta_{j,\,(11*)} - \theta_{j,\,(a b*)})^2
	=: \frac{4m^2}{(2+m)^2} L_j 
	\geq \frac{4}{9} L_j.
	\end{align*}
	In case (2), the two subjects $i=7,8$ belong to two different latent classes under $\ZZ$, and we denote by $m_1, m_2$ the number of other subjects assigned to these two different clusters, respectively. Since a two-parameter approximation under $\ZZ$ only contain two latent classes, there must be $m_1\geq 1$ or $m_2\geq 1$, then similar to case (1) there is
	\begin{align*}
	f_j(\ZZ,\,\text{case (2)})
	\geq &~ \left[\frac{2m_1^2}{(1+m_1)^2} + \frac{2m_2^2}{(1+m_2)^2} \right] \min_{ab\in\{00,\,01,\,10\}} (\theta_{j,\,(11*)} - \theta_{j,\,(a b*)})^2
	\geq L_j.
	\end{align*}
	Combining cases (1) and (2), we obtain that 
	\begin{align*}
		\min_{\ZZ:\, \ZZ\neq\ZZ^0} f_j(\ZZ) 
		\geq \frac{4}{9} L_j,\quad & L_j:=\min_{ab\in\{00,\,01,\,10\}} (\theta_{j,\,(11*)} - \theta_{j,\,(a b*)})^2;\\
		 f_j(\ZZ_0) 
		\leq 4 U_j,\quad & U_j:
		=\max_{ab,cd\in\{00,\,01,\,10\}\atop ab\neq cd} D(\theta_{j,\,(ab*)}\| \theta_{j,\,(cd*)}).
	\end{align*}
Therefore in order to have $\min_{\ZZ} f_j(\ZZ) \geq f_j(\ZZ^0)$, it suffices to have $L_j\geq 9 U_j$ for this item $j$. 
In summary, by working out this toy example, we shed light on the intuition behind \eqref{eq-less} in Assumption \ref{as-weak}. That is, $L_j/U_j \geq C $ for some constant $C$ for all $j$ would intuitively lead to \eqref{eq-less}. It is worth noting that the two-parameter DINA model has $L_j>0$ and $U_j= 0$ and hence $L_j/U_j=\infty$. Our derivation here shows that a multi-parameter model with $U_j\neq 0$ can have a behavior that the ``oracle'' two-parameter approximation is the best among all the possible two-parameter approximations.
\color{black}

\subsection{Proof of Theorem \ref{thm-mis-weak} and Theorem \ref{thm-mis-strong}}
\label{sec-pf34}
We combine the proofs of Theorem \ref{thm-mis-weak} and Theorem \ref{thm-mis-strong} here because they share a same first step in analyzing the misspecificed log likelihood. After such a step 1, we will go on to separately discuss the different scenarios in the two theorems in Case (1) and Case (2), respectively.

\bigskip
\noindent\textbf{Step 1.} 
Recall the true probability of observing $r_{i,j}=1$ under the true data-generating multi-parameter model (e.g., GDINA) by $P_{i,j}^{\true}$.
	Denote the log-likelihood under a two-parameter model (e.g., DINA) by $\ell^{2\approx}(\RR\mid \QQ,\AA)$, and denote its expectation with respect to the distribution of the true data generating mechanism by $\ME_{\true} [\ell^{2\approx}(\RR\mid \QQ,\AA)]$.  Given $\ZZ=(\QQ,\AA)$, define $\xi_{i,j} = \prod_{k=1}^K a_{i,k}^{q_{j,k}}$ and
	\begin{equation}\label{eq-hatbar}
		\hat\theta_{j,a} = \frac{\sum_{i=1}^N I(\xi_{i,j}=a) r_{i,j}}{\sum_{i=1}^N I(\xi_{i,j}=a) },
		\quad
		\bar\theta_{j,a} = \frac{\sum_{i=1}^N I(\xi_{i,j}=a) P_{i,j}^{\true}}{\sum_{i=1}^N I(\xi_{i,j}=a) },~~a=0,1.
	\end{equation}
Also define $n_{j,a} = \sum_{i=1}^N I(\xi_{i,j}=a)$.

\begin{lemma}\label{prop-expr-mis}
The following display holds 
\begin{align*}
	&~\ell^{2\approx}(\RR\mid \QQ,\AA) - \ME_{\true} [\ell^{2\approx}(\RR\mid \QQ,\AA)]\\
   =&~\sum_{j=1}^J\sum_{a=0,1} n_{j,a} D(\hat\theta_{j,a}\|\bar\theta_{j,a}) + \sum_{i=1}^N \sum_{j=1}^J (r_{i,j} - P_{i,j}^{\true})\log\left( \frac{\bar\theta_{j,\xi_{i,j}}}{1-\bar\theta_{j,\xi_{i,j}}} \right).
\end{align*}
Furthermore, 
\begin{align*}
	\frac{1}{NJ}\max_{\ZZ} |\ell^{2\approx}(\RR;\, \ZZ) - \ME_{\true} [\ell^{2\approx}(\RR;\, \ZZ)]| 
	=o_P\left(\gamma_{J}\right).
\end{align*}
\end{lemma}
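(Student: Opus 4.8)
The plan is to reduce the statement to the machinery already developed for the correctly-specified two-parameter case in Section \ref{sec-pf1}, the single conceptual change being that the underlying success probabilities $P_{i,j}^{\true}$ are now generated by a multi-parameter model rather than by the two-parameter model being fit. For the first (algebraic) identity I would simply invoke Lemma \ref{lem-express}. That lemma is purely algebraic: it holds for \emph{any} independent Bernoulli array $(r_{i,j})$ with parameters $(P_{i,j})$ and any latent class structure $\ZZ$, and it never uses that the data actually come from the fitted model. Viewing the two-parameter likelihood $\ell^{2\approx}$ as a latent class model with exactly $L_j=2$ local classes per item $j$ — the classes being $\{i:\xi_{i,j}=1\}$ and $\{i:\xi_{i,j}=0\}$ — and setting $P_{i,j}=P_{i,j}^{\true}$, the profile maximizers become exactly the $\hat\theta_{j,a}$ and $\bar\theta_{j,a}$ of \eqref{eq-hatbar}. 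Lemma \ref{lem-express} then yields the claimed decomposition at once, with linear remainder $\sum_{i,j}(r_{i,j}-P_{i,j}^{\true})\log\frac{\bar\theta_{j,\xi_{i,j}}}{1-\bar\theta_{j,\xi_{i,j}}}$.

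For the uniform bound I would mirror Steps 2--3 of the proof of Theorem \ref{thm-joint-both}, controlling the two pieces of the decomposition separately and uniformly over the $(2^K)^N$ assignments $\ZZ$. For the sum of divergences $\sum_j\sum_{a=0,1} n_{j,a} D(\hat\theta_{j,a}\|\bar\theta_{j,a})$, I claim the high-probability estimate of Lemma \ref{lem-kl-2p} continues to hold verbatim. The only point needing care is that, under misspecification, the responses $r_{i,j}$ within a fixed local class $a$ are Bernoulli with heterogeneous means $P_{i,j}^{\true}$ rather than a common mean. The underlying Chernoff estimate $\MP(\sum_{i:\xi_{i,j}=a} r_{i,j}\ge n_{j,a}x)\le \exp(-n_{j,a}D(x\|\bar\theta_{j,a}))$ nonetheless survives this heterogeneity, because the AM--GM step $\prod_i(1-P_{i,j}^{\true}+P_{i,j}^{\true}e^{t})\le(1-\bar\theta_{j,a}+\bar\theta_{j,a}e^{t})^{n_{j,a}}$ bounds the moment generating function by the average parameter $\bar\theta_{j,a}$, which is precisely how $\bar\theta_{j,a}$ is defined in \eqref{eq-hatbar}.

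For the linear remainder I set $X=\sum_{i,j} r_{i,j}\log\frac{\bar\theta_{j,\xi_{i,j}}}{1-\bar\theta_{j,\xi_{i,j}}}$ and write $X_{i,j}$ for its summands. Assumption \ref{assume-pij} forces each $\bar\theta_{j,a}$, being a convex combination of the $P_{i,j}^{\true}$, to lie in $[J^{-d},1-J^{-d}]$, so $|X_{i,j}|\le d\log J$ and $\sum_{i,j}\ME[X_{i,j}^2]\le MNJ(d\log J)^2$; Bernstein's inequality then delivers the same tail bound for $|X-\ME X|$ as in Step 3. Combining the two bounds through a union bound over $\ZZ$ and taking $\delta_{NJ}=N\sqrt{MJ\log 2^K}(\log J)^{1+\epsilon}$ reproduces the computation in Proposition \ref{prop-scale1}: the scaling hypotheses $\sqrt J=O(\sqrt M N^{1-c})$ and $K=o(MJ\log J)$ send both tail probabilities to zero, yielding $\frac{1}{NJ}\max_\ZZ|\ell^{2\approx}(\RR;\ZZ)-\ME_{\true}[\ell^{2\approx}(\RR;\ZZ)]|=o_P(\gamma_J)$.

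The main obstacle — and the only place where misspecification genuinely enters — is confirming that the concentration inequalities behind Lemma \ref{lem-kl-2p} are insensitive to the within-class heterogeneity of the true Bernoulli parameters. Once the average-parameter Chernoff bound is in hand, the remainder of the argument is word-for-word identical to the correctly-specified two-parameter case, since all quantities $\hat\theta_{j,a}$, $\bar\theta_{j,a}$, and $n_{j,a}$ depend on $\QQ,\AA$ only through the binary indicators $\xi_{i,j}$, exactly as before.
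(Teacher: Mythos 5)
Your proposal is correct and follows essentially the same route as the paper's proof: the paper likewise re-derives the algebraic identity by the computation underlying Lemma \ref{lem-express} (with $P_{i,j}=P_{i,j}^{\true}$ and $L_j=2$), then bounds the divergence sum by the argument of Lemma \ref{lem-kl-2p} and the linear term by Bernstein's inequality before taking the union bound over $\ZZ$ under the same scaling. The one place you are more explicit than the paper is the heterogeneous-mean Chernoff step --- the paper simply asserts that $\hat\theta_{j,a}$ is an average of independent Bernoullis ``with mean $\bar\theta_{j,a}$'' and cites a ``similar argument,'' whereas your AM--GM bound on the moment generating function is exactly the justification needed for that claim under misspecification.
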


\begin{proof}
Please see Page \pageref{pf-lem8}.
\end{proof}

We continue with the proof of the theorem. Denote the true latent structure that generates the data $\RR$ by $\ZZ^0$ and the estimator obtained from maximizing the misspecified likelihood \eqref{eq-prob1} by $\hat\ZZ^{2\approx}$. Consider the following difference of expected log-likelihoods,
\begin{align}\label{eq-diff-eg}
	&~\ME_{\true} [\ell^{2\approx}(\RR;\, \ZZ^0)] - \ME_{\true} [\ell^{2\approx}(\RR;\, \hat\ZZ^{2\approx})] \\ \notag
=   &~\ME_{\true} [\ell^{2\approx}(\RR;\, \ZZ^0)] - \ell^{2\approx}(\RR;\, \ZZ^0)
     +\ell^{2\approx}(\RR;\, \hat\ZZ^{2\approx}) - \ME_{\true} [\ell^{2\approx}(\RR;\, \hat\ZZ^{2\approx})]
     \\ \notag
     &~
     +\underbrace{\ell^{2\approx}(\RR;\, \ZZ^0) - \ell^{2\approx}(\RR;\, \hat\ZZ^{2\approx})}_{\leq 0}
      \\ \notag
\leq &~ 2 \max_{\ZZ} |\ell^{2\approx}(\RR;\, \ZZ) - \ME_{\true} [\ell^{2\approx}(\RR;\, \ZZ)]|
=o_P(NJ\cdot\gamma_{J}),
\end{align}
where the last but second inequality $\ell^{2\approx}(\RR;\, \ZZ^0) - \ell^{2\approx}(\RR;\, \hat\ZZ^{2\approx})\leq 0$ follows from the definition that $\hat\ZZ^{2\approx}$ maximizes $\ell^{2\approx}(\RR;\, \ZZ)$.
Recall the definition 
\begin{equation}\label{eq-sing}
	\mathcal E_0=\{j\in[J]:\, \qq^0_j = \ee_k~~\text{for some}~~k\in[K]\}.
\end{equation} 
From now on, we slightly abuse  the notation and denote by $\xi_{i,j} = \xi_{i,j}(\hat\ZZ^{2\approx})$ the ideal response structure under the misspecified MLE $\hat\ZZ^{2\approx} = (\hat\QQ^{2\approx},\hat\AA^{2\approx})$.
We next decompose $\ME_{\true} [\ell^{2\approx}(\RR;\, \ZZ^0)] - \ME_{\true} [\ell^{2\approx}(\RR;\, \hat\ZZ^{2\approx})]$ in \eqref{eq-diff-eg} into two parts for $j\in\mathcal E_0$ and $j\in[J]\setminus\mathcal E_0$,
\begin{align*}
	&~\text{Eq.}~\eqref{eq-diff-eg} \\
	= &~\sum_{j\in\mathcal E_0}\sum_{i=1}^N \Big\{ P_{i,j}^{\true}\log(P_{i,j}^{\true}) + (1-P_{i,j}^{\true})\log(1-P_{i,j}^{\true}) 
	\\
	&~\qquad 
	- \Big[ P_{i,j}^{\true}\log(\bar\theta^{2\approx}_{j,\,\hat\xi_{i,j}}) + (1-P_{i,j}^{\true})\log(1-\bar\theta^{2\approx}_{j,\,\hat\xi_{i,j}}) \Big] \Big\}\\
	+&~\sum_{j\notin\mathcal E_0}\sum_{i=1}^N \Big\{ P_{i,j}^{\true}\log(\bar\theta^{2\approx}_{j,\,\xi^0_{i,j}}) + (1-P_{i,j}^{\true})\log(1-\bar\theta^{2\approx}_{j,\,\xi^0_{i,j}}) 
	\\
	&~ \qquad 
	- \Big[ P_{i,j}^{\true}\log(\bar\theta^{2\approx}_{j,\,\hat\xi_{i,j}}) + (1-P_{i,j}^{\true})\log(1-\bar\theta^{2\approx}_{j,\,\hat\xi_{i,j}}) \Big] \Big\} \\
	=&~\sum_{j\in\mathcal E_0}\sum_{i=1}^N D(P_{i,j}^{\true} \| \bar\theta^{2\approx}_{j,\,\hat\xi_{i,j}})
	+ \sum_{j\notin\mathcal E_0}\sum_{i=1}^N \Big[ {-D(P_{i,j}^{\true} \| \bar\theta^{2\approx}_{j,\,\xi^0_{i,j}})}
	   +D(P_{i,j}^{\true} \| \bar\theta^{2\approx}_{j,\,\hat\xi_{i,j}}) \Big]\\
	=&~ \sum_{i=1}^N\sum_{j=1}^J D(P_{i,j}^{\true} \| \bar\theta^{2\approx}_{j,\,\hat\xi_{i,j}}) - 
	   {
	   \underbrace{\sum_{j\notin\mathcal E_0}\sum_{i=1}^N D(P_{i,j}^{\true} \| \bar\theta^{2\approx}_{j,\,\xi^0_{i,j}})}_{\text{(NE)}},
	    }
\end{align*}
where there is
\begin{align}\label{eq-def-ne}
{
\text{(NE)}=\sum_{j\notin\mathcal E_0}\sum_{i=1}^N D(P_{i,j}^{\true} \| \bar\theta^{2\approx}_{j,\,\xi^0_{i,j}})
}
= &~\sum_{j\notin\mathcal E_0}\sum_{i=1}^N I(\aa^0_i\succeq\qq^0_j) D(\theta^{\true}_{j,\qq^0_j}\|\theta^{\true}_{j,\qq^0_j}) \\
\notag
&~+  \sum_{j\notin\mathcal E_0}\sum_{i=1}^N I(\aa^0_i\nsucceq\qq^0_j) D\Big(\theta^{\true}_{j,\aa_i^0}\Big\|\frac{\sum_{m=1}^N I(\aa^0_m\nsucceq\qq^0_j)\theta^{\true}_{j,\aa_i}}{\sum_{m=1}^N I(\aa^0_m\nsucceq\qq^0_j)}\Big) \\
\notag
=&~ \sum_{j\notin\mathcal E_0}\sum_{b=1}^{L_j-1}|S_{j,\aaa_b}| D\Big(\theta^{\true}_{j,\aaa_b} \Big\| \frac{\sum_{c=1}^{L_j-1}|S_{j,\aaa_c}|\theta^{\true}_{j,\aaa_c}}{\sum_{c=1}^{L_j-1}|S_{j,\aaa_c}|} \Big)
\end{align}
Note Eq.~\eqref{eq-diff-eg}$=o_P(NJ\cdot\gamma_{J})=\sum_{i=1}^N\sum_{j=1}^J D(P_{i,j}^{\true} \| \bar\theta^{2\approx}_{j,\,\hat\xi_{i,j}}) - \text{(NE)}$. 

\bigskip
\noindent\textbf{Step 2.} In this step we separately consider the different scenarios of Theorem \ref{thm-mis-weak} and Theorem \ref{thm-mis-strong}, respectively.

\noindent\textbf{Case (1).}
In this part we prove Theorem \ref{thm-mis-weak}.
Under Assumption \ref{as-weak}, there is
\begin{align}\label{eq-case2}
	\sum_{j\notin\mathcal E_0}\sum_{i=1}^N \Big[ {D(P_{i,j}^{\true} \| \bar\theta^{2\approx}_{j,\,\xi^0_{i,j}})}
	   -D(P_{i,j}^{\true} \| \bar\theta^{2\approx}_{j,\,\hat\xi_{i,j}}) \Big] = 
	   o(NJ \cdot \eta_J).
\end{align} 
Then we have that
\begin{align*}
\text{Eq.~}\eqref{eq-diff-eg}
&=\sum_{i=1}^N\sum_{j=1}^J D(P_{i,j}^{\true} \| \bar\theta^{2\approx}_{j,\,\hat\xi_{i,j}}) - 
	   \sum_{j\notin\mathcal E_0}\sum_{i=1}^N D(P_{i,j}^{\true} \| \bar\theta^{2\approx}_{j,\,\xi^0_{i,j}})
	    \\
&=\sum_{i=1}^N\sum_{j\in\mathcal E_0}^J D(P_{i,j}^{\true} \| \bar\theta^{2\approx}_{j,\,\hat\xi_{i,j}}) - 
	   \sum_{j\notin\mathcal E_0}\sum_{i=1}^N 
	   \left[ D(P_{i,j}^{\true} \| \bar\theta^{2\approx}_{j,\,\xi^0_{i,j}}) - D(P_{i,j}^{\true} \| \bar\theta^{2\approx}_{j,\,\hat\xi_{i,j}})\right],
\end{align*}
therefore
\begin{align*}
\sum_{i=1}^N\sum_{j\in\mathcal E_0}^J D(P_{i,j}^{\true} \| \bar\theta^{2\approx}_{j,\,\hat\xi_{i,j}}) 
&= \text{Eq.~}\eqref{eq-diff-eg} + \sum_{j\notin\mathcal E_0}\sum_{i=1}^N 
	   \left[ D(P_{i,j}^{\true} \| \bar\theta^{2\approx}_{j,\,\xi^0_{i,j}}) - D(P_{i,j}^{\true} \| \bar\theta^{2\approx}_{j,\,\hat\xi_{i,j}})\right]\\
&= o_P(NJ\cdot\gamma_{J}) + o_P(NJ\cdot\eta_J)
=o_P(NJ\cdot(\gamma_J\vee\eta_J)).
\end{align*}
Note that for any $j\in\mathcal E_0$, the $\qq_j=\ee_k$ for some $k\in[K]$ and the multi-parameter model  reduces to a two-parameter model for this $j$. 
Therefore the above display can be equivalently rewritten as a bound for $\sum_{i=1}^N\sum_{j\in\mathcal E_0}^J D(P^{2\approx}_{i,j} \| \bar\theta^{2\approx}_{j,\,\hat\xi_{i,j}})$, that is, a bound under the two-parameter model.
Then under Assumption \ref{as-weak} with $\min_{j\in\mathcal E_0}\left(\theta^0_{j,\one_K} - \theta^0_{j,\zero_K}\right)^2 \geq \zeta_J$, following a similar argument as Steps 5-6 in the proof of Theorem \ref{thm-joint-both}, we obtain
$$
	\frac{1}{J}\sum_{j\in\mathcal E_0} I(\hat\qq^{2\approx}_j \neq \qq_j^0) 
	= o_P\left(\frac{\gamma_J \vee \eta_J}{\zeta_J\cdot p_N}\right),
	\qquad
	\frac{1}{N}\sum_{i=1}^N I(\aa_i^0 \neq \hat\aa^{2\approx}_i)
    = o_P\left(\frac{\gamma_J \vee \eta_J}{\zeta_J\cdot\delta_J}\right),
$$
holds up to a permutation of the $K$ attributes.
Here $p_N$ and $\delta_J$ are those specified in Assumption \ref{cond-id}.
This proves the conclusion  of Theorem \ref{thm-mis-weak}.

\bigskip
\noindent\textbf{Case (2).}
In this part we prove Theorem \ref{thm-mis-strong}.
Under Assumption \ref{as-strong}, there is
$$
\text{(NE)}=
\sum_{j\notin\mathcal E_0}\sum_{i=1}^N {D\Big(P_{i,j}^{\true} \Big\| P^{2,\ZZ^0}_{i,j}\Big)}
=o_P(NJ\cdot\eta'_J).
$$
In this case, Eq.~\eqref{eq-diff-eg}$=o_P(NJ\cdot\gamma_{J})=\sum_{i=1}^N\sum_{j=1}^J D(P_{i,j}^{\true} \| \bar\theta^{2\approx}_{j,\,\hat\xi_{i,j}}) - \text{(NE)}$ obtained prior to Case (1) indicates that 
$\sum_{i=1}^N\sum_{j=1}^J D(P_{i,j}^{\true} \| \bar\theta^{2\approx}_{j,\,\hat\xi_{i,j}}) = O_P(NJ\cdot(\gamma_J\vee\eta_J))$.
The expression of (NE) in \eqref{eq-def-ne} implies that
\begin{align*}
\text{(NE)}\leq &~ \sum_{j\notin\mathcal E_0}
     \frac{1}{N - |S_{j,\aaa_{L_j}}|} \sum_{b\neq c\in[L_j-1]} |S_{j,\aaa_b}| |S_{j,\aaa_c}| D(\theta^{\true}_{j,\aaa_b} \| \theta^{\true}_{j,\aaa_c}),
\end{align*}
where the inequality results from the convexity of the KL-divergence with respect to its second argument. 
First, $\sum_{j=1}^J\sum_{i=1}^N D(P_{i,j}^{\true} \| \bar\theta^{2\approx}_{j,\,\hat\xi_{i,j}}) = o(NJ\cdot(\gamma_J\vee\eta_J))$ indicates 
\begin{align*}
	o(NJ\cdot(\gamma_J\vee\eta_J)) = &~
	\sum_{j\in\mathcal E_0}\sum_{i=1}^N D(P_{i,j}^{\true} \| \bar\theta^{2\approx}_{j,\,\hat\xi_{i,j}})
	= \sum_{j\in\mathcal E_0}\sum_{a=0,1} \sum_{i=1}^N I(\xi^0_{i,j} = a) D(\theta^{\true}_{j,a} \| \bar\theta^{2\approx}_{j,\,\hat\xi_{i,j}}),
\end{align*}  
then a similar argument as Steps 5-6 in the proof of Theorem \ref{thm-joint-both} gives 
\begin{align}\label{eq-med}
	\frac{1}{J}\sum_{m=1}^{J_{\min} }\sum_{j\in\mathcal B^m} I(\hat\qq^{2\approx}_j\neq \qq_j^0)
    =&~
	o_P\left(\frac{\gamma_J\vee\eta'_J}{\Delta_J\cdot p_N}\right),\quad
	\frac{1}{N}\sum_{i=1}^N I(\hat\aa^{2\approx}_i\neq\aa^0_i)
   =o_P\left(\frac{\gamma_J\vee\eta'_J}{\Delta_J \cdot  \delta_J}\right).
\end{align}
Recall that for $j\in\mathcal B^m$ in the above display, the $\qq_j^0$ is some single-attribute vector and it remains to show the convergence of other multi-attribute $\qq$-vectors.
Second, we claim that for $j\in[J]$ such that $\hat\qq^{2\approx}_j\neq\qq_j^0$, there is
\begin{align}\label{eq-cl1}
	\sum_{i=1}^N  D(P_{i,j}^{\true} \| \bar\theta^{2\approx}_{j,\,\hat\xi_{i,j}}) = \Omega(N)\cdot \Delta_J.
\end{align}
If the above Claim \eqref{eq-cl1} is true, then
\begin{align*}
o_P(NJ\cdot(\gamma_J\vee\eta'_J))=
	\sum_{j=1}^J \sum_{i=1}^N  D(P_{i,j}^{\true} \| \bar\theta^{2\approx}_{j,\,\hat\xi_{i,j}})
\geq \sum_{j=1}^J I(\hat\qq^{2\approx}_j\neq\qq_j^0)\cdot  \Omega(N)\cdot \Delta_J,
\end{align*}
and further $(1/J) \sum_{j=1}^J I(\hat\qq^{2\approx}_j\neq\qq_j^0) = o_P((\gamma_J\vee\eta'_J)/\Delta_J)$. We next prove Claim \eqref{eq-cl1}. For notational simplicity, we simply write $\hat\qq_j^{2\approx}$ as $\hat\qq_j$. If $\hat\qq_j\neq\qq_j^0$, then
\begin{align*}
&~\sum_{i=1}^N  D(P_{i,j}^{\true} \| \bar\theta^{2\approx}_{j,\,\hat\xi_{i,j}})
\geq \sum_{i=1}^N 
I(\aa^0_i\succeq\qq^0_j)  
I(\hat\aa_i\succeq\hat\qq_j)
D(\theta^{\true}_{j,\aaa_{L_j}} \| \bar\theta^{2\approx}_{j,\,\hat\xi_{i,j}}) \\
= &~ \sum_{i=1}^N 
I(\aa^0_i\succeq\qq^0_j)
I(\hat\aa_i\succeq\hat\qq_j) D\Big(\theta^{\true}_{j,\aaa_{L_j}} \Big\| \frac{\sum_{m=1}^N I(\hat\xi_{m,j} = \hat\xi_{i,j}) P^{\true}_{m,j}  }{\sum_{m=1}^N I(\hat\xi_{m,j} = \hat\xi_{i,j}) } \Big) \\
= &~ \sum_{i=1}^N 
  I(\aa^0_i\succeq\qq^0_j)
  I(\hat\aa_i\succeq\hat\qq_j) 
  D\Big(\theta^{\true}_{j,\aaa_{L_j}} \Big\| \frac{\sum_{m=1}^N I(\hat\aa_m\succeq\hat\qq_j) P^{\true}_{m,j}  }{\sum_{m=1}^N I(\hat\aa_m\succeq\hat\qq_j) } \Big) \\
\geq &~ \sum_{i=1}^N 
  I(\aa^0_i\succeq\qq^0_j)
  I(\hat\aa_i\succeq\hat\qq_j) 
  \frac{ 2 \left[\sum_{m=1}^N I(\hat\aa_m\succeq\hat\qq_j,\, \aa^0_m\nsucceq\qq^0_j)(\theta^{\true}_{j,\aaa_{L_j}} - P^{\true}_{m,j})\right]^2}{\left[\sum_{m=1}^N I(\hat\aa_m\succeq\hat\qq_j)\right]^2} \\
= &~ \sum_{i=1}^N 
  I(\aa^0_i\succeq\qq^0_j)
  I(\hat\aa_i\succeq\hat\qq_j) 
  \frac{ 2\left[\sum_{m=1}^N I(\hat\aa_m\succeq\hat\qq_j,\, \aa^0_m\nsucceq\qq^0_j)\right]^2 (\theta^{\true}_{j,\aaa_{L_j}} - P^{\true}_{m,j})^2}{ \left[\sum_{m=1}^N I(\hat\aa_m\succeq\hat\qq_j)\right]^2 }\\
\geq &~ 
\frac{2}{N^2}\sum_{i=1}^N  I(\aa^0_i\succeq\qq^0_j)   I(\hat\aa_i\succeq\hat\qq_j) 
   \left[\sum_{m=1}^N I(\hat\aa_m\succeq\hat\qq_j,\, \aa^0_m\nsucceq\qq^0_j)\right]^2 (\theta^{\true}_{j,\aaa_{L_j}} - P^{\true}_{m,j})^2 \\
%
%
\geq &~ 
\frac{2}{N^2}\sum_{i=1}^N  
I(\aa^0_i = \hat\aa_i)
I(\aa^0_i\succeq\qq^0_j)   
I(\aa^0_i\succeq\hat\qq_j) 
   \left[\sum_{m=1}^N I(\aa^0_m = \hat\aa_m) I(\aa^0_m\succeq\hat\qq_j) I(\aa^0_m\nsucceq\qq^0_j)\right]^2 \cdot \Delta_J\\
\geq &~ \frac{2}{N^2}\cdot N\min\{1-o_P(1), 2^{K-2K_0}p_N\}\cdot N^2\min\{1-o_P(1), 2^{2K-2K_0}p_N^2\}\cdot\Delta_J \\
\geq &~ 
\Omega(N)\cdot\Delta_J\quad \text{with probability tending one},
\end{align*}
where the last but second inequality holds as long as $\hat\qq_j\neq\qq^0_j$. 
Now that we have proved Claim \eqref{eq-cl1}, the argument right after \eqref{eq-cl1} gives $(1/J) \sum_{j=1}^J I(\hat\qq_j\neq\qq_j^0) = o_P((\gamma_J\vee\eta'_J)/\Delta_J)$.
Combined with \eqref{eq-med}, we have shown
	\begin{align*}
		\frac{1}{J}\sum_{j=1}^J I(\hat\qq^{2\approx}_j\nsim \qq^0_j) = o_P\left(\frac{\gamma_J\vee\eta'_J}{\Delta_J\cdot p_N}\right),\quad
		\frac{1}{N}\sum_{i=1}^N I(\hat\aa^{2\approx}_i\nsim \aa^0_i) = o_P\left(\frac{\gamma_J\vee\eta'_J}{\Delta_J\cdot \delta_J}\right).
	\end{align*}
This completes the proof of Theorem \ref{thm-mis-strong}.

\bigskip
\subsection{Proofs of Technical Lemmas}\label{sec-lemma}
\begin{proof}[Proof of Lemma \ref{lem-express}]\label{pf-lem1}
Given a fixed $\ZZ$, denote $n^{(\ZZ)}_{j,a} = \sum_{i=1}^N Z_{i,a}$. The maximizing properties of $\hat \theta_{j,a}$ and $\bar \theta_{j,a}$ in \eqref{eq-zmle} imply that
\begin{equation}\label{eq-mprop}
	n_{j,a}\hat\theta_{j,a} = \sum_{i=1}^N Z_{i,a}r_{i,j},\quad
    n_{j,a}\bar\theta_{j,a} = \sum_{i=1}^N Z_{i,a}P_{i,j}.
\end{equation}
Recall $L=2^K$ denotes the number of latent class.
Using \eqref{eq-mprop}, we have the following,
\begin{align*}
 &~\ell(\RR;\, \ZZ) - \mathbb E[\ell (\RR;\,\ZZ)]\\
=&~ \sum_{j=1}^J \sum_{i=1}^N \sum_{a=1}^{L_j} Z_{i,a} [r_{i,j} \log\hat\theta_{j,a} + (1-r_{i,j})\log(1-\hat\theta_{j,a})]
 \\ 
&~ \qquad\qquad 
- \sum_{j=1}^J \sum_{a=1}^{L_j} Z_{i,a} [P_{i,j} \log\bar\theta_{j,a} + (1-P_{i,j})\log(1-\bar\theta_{j,a})]\\
=&~ \sum_{j=1}^J \sum_{a=1}^{L_j} n_{j,a}[\hat\theta_{j,a}\log \hat\theta_{j,a} + (1-\hat\theta_{j,a})\log(1-\hat\theta_{j,a})]\\
&~ \qquad\qquad - 
   \sum_{j=1}^J \sum_{a=1}^{L_j} n_{j,a}[\bar\theta_{j,a}\log \bar\theta_{j,a} + (1-\bar\theta_{j,a})\log(1-\bar\theta_{j,a})] \\
=&~ \sum_{j=1}^J \sum_{a=1}^{L_j} n_{j,a} \Big\{[\hat\theta_{j,a}\log \hat\theta_{j,a} + (1-\hat\theta_{j,a})\log(1-\hat\theta_{j,a})] 
 -  [\hat\theta_{j,a}\log \bar\theta_{j,a} + (1-\hat\theta_{j,a})\log(1-\bar\theta_{j,a})]\Big\} \\
 &  +  \sum_{j=1}^J \sum_{a=1}^{L_j} n_{j,a} \Big\{[\hat\theta_{j,a}\log \bar\theta_{j,a} + (1-\hat\theta_{j,a})\log(1-\bar\theta_{j,a})] 
 -   [\bar\theta_{j,a}\log \bar\theta_{j,a} + (1-\bar\theta_{j,a})\log(1-\bar\theta_{j,a})] \Big\}\\
  =&~  \sum_{j=1}^J \sum_{a=1}^{L_j} n_{j,a} D(\hat\theta_{j,a} \| \bar \theta_{j,a}) 
    + \sum_i \sum_j \Big\{[r_{i,j}\log \bar\theta_{j,z_i} + (1-r_{i,j})\log(1-\bar\theta_{j,z_i})] \\
    &\qquad\qquad\qquad\qquad\qquad\qquad\qquad  - [P_{i,j}\log \bar\theta_{j,z_i} + (1-P_{i,j})\log(1-\bar\theta_{j,z_i})] \Big\} \\
 =&~ \sum_{a=1}^{L_j} n_{j,a} \sum_{j=1}^J D(\hat\theta_{j,a} \| \bar \theta_{j,a}) 
    + \sum_{i=1}^N \sum_{j=1}^J r_{i,j} \log\Big( \frac{\bar\theta_{j,z_i}}{1-\bar\theta_{j,z_i}} \Big)
    - \sum_{i=1}^N \sum_{j=1}^J P_{i,j} \log\Big( \frac{\bar\theta_{j,z_i}}{1-\bar\theta_{j,z_i}} \Big).
\end{align*}
Define the random variable 
\begin{equation}
\label{eq-defx}	
X=\sum_{i=1}^N \sum_{j=1}^J r_{i,j} \log( \bar\theta_{j,z_i}/(1-\bar\theta_{j,z_i} )),
\end{equation}
then $X$  depends on $\ZZ$ and the above display equals the sum of $\sum_{a=1}^{L_j} n_{j,a} \sum_j D(\hat\theta_{j,a} \| \bar \theta_{j,a})$ and $X-\mathbb E[X]$.
This establishes \eqref{eq-lemma} in Lemma \ref{lem-express}.
\end{proof}

\begin{proof}[Proof of Lemma \ref{lem-kl-2p}]\label{pf-lem2}
	Given any fixed latent class memberships $\ZZ$, every $\hat\theta_{j,a}$ is an average of $n_{j,a}$ independent Bernoulli random variables $R_{1,j},\ldots, R_{N,j}$ with mean $\bar\theta_{j,a}$. We apply the Chernoff-Hoeffding theorem to obtain
\begin{equation}
	\mathbb P(\hat\theta_{j,a} \geq \bar \theta_{j,a} + t) 
	\leq e^{ -n_{j,a} D(\bar\theta_{j,a} + t \| \bar \theta_{j,a}) },\quad
	\mathbb P(\hat\theta_{j,a} \leq \bar \theta_{j,a} + t) \leq e^{ -n_{j,a} D(\bar\theta_{j,a} - t \| \bar \theta_{j,a}) }.
\end{equation}
Note that given a fixed $\ZZ$, each $\hat\theta_{j,a}$ can take values only in the finite set $\{0, 1/n_{j,a}, 2/n_{j,a}$,$\ldots$, $(n_{j,a}-1)/n_{j,a}, 1\}$ of cardinality $n_{j,a}+1$. 
We denote this range of $\hat\theta_{j,a}$ by $\hat\Theta^{j,a}$. Then $$\mathbb P(\hat\theta_{j,a} = \vartheta)\leq \exp\{ -n_{j,a} D(\vartheta \| \bar \theta_{j,a})\}$$ for any $\vartheta \in \hat\Theta^{j,a}$. 
Then 
$\mathbb P(\hat\theta_{j,a}\in\hat\Theta^{j,a})\leq e^{ -n_{j,a} D(\hat\theta_{j,a} \| \bar \theta_{j,a}) }.$
Further denote the range of the matrix $\hat\TT=(\hat\theta_{j,a})$ by $\hat\Theta$. Since entries of $\RR$ are independent given $\ZZ$, the following holds for any $\tilde\TT\in\hat\Theta$,
\begin{equation}\label{eq-that}
\mathbb P(\hat\TT = \tilde\TT\mid \ZZ) \leq
\exp\Big\{ 
-\sum_{j=1}^J\sum_{a=0,1} n_{j,a} D(\tilde \theta_{j,a} \| \bar \theta_{j,a})
\Big\}.
\end{equation}
Now consider the cardinality of the set $\hat\Theta$ given $\ZZ$. Since for each of the $J\times L$ entries in $\hat\TT$, $\hat\theta_{j,a}$ can independently take on $n_{j,a}+1$ different values, there is $|\hat\Theta| = [\prod_{a=0,1}(n_{j,a}+1)]^J$. Considering the natural constraint $\sum_{a=0,1} n_{j,a} = N$, we have 
\begin{equation}\label{eq-That}
	|\hat\Theta| = \prod_{j=1}^J\prod_{a=0,1}(n_{j,a}+1)
	\leq \Big(\frac{N}{2}+1\Big)^{2J}.
\end{equation}
Define the event
	$\hat\Theta_\epsilon = \{\hat\TT\in\hat\Theta:\, \sum_{j=1}^J\sum_{a=0,1} n_{j,a} D(\hat\theta_{j,a}  \| \bar \theta_{j,a} ) \geq \epsilon\}$, and combine \eqref{eq-that} and \eqref{eq-That} to obtain
\begin{align*}
&~\mathbb P\left(\sum_{j=1}^J\sum_{a=0,1} n_{j,a} D(\hat\theta_{j,a}  \| \bar \theta_{j,a} ) \geq \epsilon \mid\ZZ\right) \\
= &~ \sum_{\tilde\TT\in\hat\Theta_{\epsilon}} \mathbb P\left(\hat\TT=\tilde\TT,~\sum_{j=1}^J\sum_{a=0,1} n_{j,a} D(\tilde\theta_{j,a}  \| \bar \theta_{j,a})  \geq \epsilon \mid\ZZ\right) \\
\leq &~ |\hat\Theta_{\epsilon}| \exp\left(- \sum_{j=1}^J\sum_{a=0,1} n_{j,a} D(\tilde\theta_{j,a}  \| \bar \theta_{j,a})\right)
\leq \Big(\frac{N}{2}+1\Big)^{2J} e^{-\epsilon}.
\end{align*}
The above result holds for fixed $\ZZ$, we apply a union bound over all the $L^N$ possible assignment $\ZZ$ and obtain 
$$
\mathbb P\left(\sum_{j=1}^J\sum_{a=0,1} n_{j,a} D(\hat\theta_{j,a}  \| \bar \theta_{j,a} ) \geq \epsilon \right) \leq L^N \Big(\frac{N}{2}+1\Big)^{2J} e^{-\epsilon}.
$$
Now take $\delta = (2^K)^N\Big(\frac{N}{2^K}+1\Big)^{J2^K} e^{-\epsilon}$, then $\epsilon = N\log(2^K) + JL\log(\frac{N}{2^K} + 1) - \log\delta$. Therefore the following event happens with probability at least $1-\delta$,
\begin{equation*}
\sum_{j=1}^J\sum_{a=0,1}  n_{j,a} D(\hat\theta_{j,a}  \| \bar \theta_{j,a}) < \epsilon = N\log (2^K) + 2J\log\Big(\frac{N}{2} + 1\Big) - \log\delta.
\end{equation*}
This concludes the proof of Lemma \ref{lem-kl-2p}.
\end{proof}

\begin{proof}[Proof of Lemma \ref{lem-j012}]\label{pf-lem3}
For $j\in\mathcal J_1$, there is $\min\{N^j_{10},N^j_{00}\} + \min\{N^j_{11},N^j_{01}\} = N_{10}^j + N_{11}^j = \sum_{i=1}^N I(\xi^0_{i,j} = 1)\geq 2^{K-K_j}\cdot Np_N$ (under a two-parameter SLAM).
Now for an arbitrary positive constant $b\in(0,1)$, we look at 
	\begin{align*}
		\MP \left(|\mathcal J_1| \geq b\frac{J\cdot\gamma_{J}}{\beta_J}\right)
	\leq &~ \MP\left(\sum_{j\in\mathcal J_1} (N_{10}^j + N_{11}^j) \geq b\frac{J\cdot\gamma_{J}}{\beta_J}\cdot 2^{K-K_j}\cdot N p_{N}\right) \\
	\leq &~ \MP\left( \bar\ell(\ZZ^0) - \bar\ell(\hat\ZZ) \geq b N J\gamma_{J}\cdot 2^{K-K_j}\cdot p_{N}\right) \\
	\leq &~ \MP\left( \bar\ell(\ZZ^0) - \bar\ell(\hat\ZZ) \geq b N J\gamma_{J}\cdot 2^{K-K_0}\cdot p_{N}\right) \\
	\leq &~  \MP\left( \bar\ell(\ZZ^0) - \bar\ell(\hat\ZZ) \geq \frac{b\epsilon}{2^{K_0}} N J\gamma_{J}\right).
	\end{align*}
The conclusion of Step 4 gives $\bar\ell(\ZZ^0) - \bar\ell(\hat\ZZ) = o_P(NJ\gamma_{J})$ with $\gamma_{J}\to 0$ as $N,J,K\to\infty$.
Now that $b,\epsilon,K_0$ are constants, we obtain that $\MP (|\mathcal J_1| \geq b\cdot J\gamma_{J}/\beta_J) = o(1)$ and $|\mathcal J_1| = o_P(J\gamma_{J}/\beta_J)$. Similar arguments gives $|\mathcal J_2| = o_P(J\gamma_{J}/\beta_J)$.
Since $\mathcal J_0 = [J]\setminus (\mathcal J_1 \cup \mathcal J_2)$, we have $1 - |\mathcal J_0|/J =o_P(\gamma_{J}/\beta_J)$. This completes the proof of Lemma \ref{lem-j012}.
\end{proof}

\begin{proof}[Proof of Lemma \ref{lem-qa}]\label{pf-lem4}
	\label{pf-lem2}
	
	\noindent
	\textbf{Part (a) of the lemma:}
	Let $\aaa^1=\mathbf 0_K, \aaa^2 = \ee_1,\ldots, \aaa^{2^K} = \mathbf 1_K$ denote the distinct $2^K$ number of $K$-dimensional binary vectors in $\{0,1\}^K$. Suppose the two $K\times 2^K$ matrices $\xi(I_K, \{\aaa^1,\aaa^2,\ldots,\aaa^{2^K}\}) = \xi(\mathbf Q, \{\aa^1,\aa^2,\ldots,\aa^{2^K}\})$. First, since $\xi(I_K, \aaa^K) = \aaa^K$, the matrix $\xi(I_K, \{\aaa^1,\aaa^2,\ldots,\aaa^{2^K}\})$ has $2^K$ distinct column vectors arranged in 
	\begin{align*}\setlength{\dashlinegap}{1pt}
		\xi(I_K, \{\aaa^1,\aaa^2,\ldots,\aaa^{2^K}\}) = 
		\left(\begin{array}{c:c:c:c}
			\aaa^1 &\aaa^2 & \cdots &\aaa^{2^K}
		\end{array}\right).
	\end{align*}
	First, if the set $\{\aa^1,\ldots,\aa^{2^K}\}$ contain some identical vectors $\aa^m = \aa^\ell$, then their corresponding columns in the ideal response matrix must be identical as well, $\xi(\QQ, \aa^m) = \xi(\QQ,\aa^\ell)$ for any $Q$. So without loss of generality, we next consider the case where $\aa^1,\ldots,\aa^{2^K}$ are distinct, so $\{\aa^1,\ldots,\aa^{2^K}\}=\{0,1\}^K$.
We next show that if $\QQ\nsim I_K$, the $\xi(I_K, \{\aaa^1,\ldots,\aaa^{2^K}\})$ must contain identical column vectors.
If $Q\nsim I_K$, then there must exist some $k\in[K]$ such that vector $\ee_k$ does not belong to the set of row vectors of $Q$. Consider the $m,\ell\in[2^K]$ such that $\aa^m=\zero_K, \aa^\ell=\ee_k$, then 
	\begin{equation}\label{eq-ir}
		\xi(\QQ,\aa^m)= \xi(\QQ,\zero) = \xi(\QQ,\ee_k) = \xi(\QQ,\aa^\ell),
	\end{equation}
This is because the two attribute patterns $\aa^m$ and $\aa^\ell$ will have identical ideal response for any item with a $\qq$-vector not equal to $\ee_k$, and that $Q$ does not have any row vector $\ee_k$. This shows $Q$ must equal $I_K$ up to a column permutation and $\{\aa^1,\ldots,\aa^{2^K}\}$, proving part (a) of the lemma.

\medskip\noindent
\textbf{Part (b) of the lemma:} Suppose $\xi(\qq_j, \{0,1\}^K) = \xi(\tilde\qq_j, \{0,1\}^K)$ and $\qq_j \nsim \tilde\qq_j$. Consider two scenarios: (1) $\qq_j\nsucceq \tilde\qq_j$ and (2) If $\xi(\qq_j, \{0,1\}^K) = \xi(\tilde\qq_j, \{0,1\}^K)$. First, if $\qq_j\nsucceq \tilde\qq_j$, we can just take an attribute pattern $\aa^m=\qq_j$. Since $\aa^m\succeq \qq_j$ and $\aa^m\nsucceq \tilde\qq_j$, the following holds by the definition of $\xi$,
$$
\xi(\qq_j,\aa^m) = 1\neq 0 = \xi(\tilde\qq_j,\aa^m).
$$
This implies $\xi(\qq_j,\{0,1\}^K)\neq \xi(\tilde\qq_j,\{0,1\}^K)$ and contradicts the assumption of part (b). So we must have $\tilde\qq_j\nsim \qq_j$. This proves part (b) of the lemma.
\end{proof}

\begin{proof}[Proof of Lemma \ref{lem-kl}]\label{pf-lem5}
Following a similar argument as the proof of Lemma \ref{lem-kl-2p}, we have 
\begin{equation}\label{eq-that}
\mathbb P(\hat\TT = \tilde\TT\mid \ZZ) \leq
\exp\Big\{ 
-\sum_{j=1}^J\sum_{a=1}^{L_j} n_{j,a} D(\tilde \theta_{j,a} \| \bar \theta_{j,a})
\Big\}.
\end{equation}
Now consider the cardinality of the set $\hat\Theta$ given $\ZZ$. Since for each of the $J\times L$ entries in $\hat\TT$, $\hat\theta_{j,a}$ can independently take on $n_{j,a}+1$ different values, there is $|\hat\Theta| = [\prod_{a=1}^{L_j}(n_{j,a}+1)]^J$. Considering the natural constraint $\sum_{a=1}^{L_j} n_{j,a} = N$ and also $L_j=2^{K_j}\leq 2^{K_0}$ by Assumption \ref{cond-id}, we have 
\begin{equation}\label{eq-That}
	|\hat\Theta| = \prod_{j=1}^J\prod_{a=1}^{L_j}(n_{j,a}+1)
	\leq \Big(\frac{N}{2^{K_0}}+1\Big)^{J2^{K_0}}.
\end{equation}
Define the event
	$\hat\Theta_\epsilon = \{\hat\TT\in\hat\Theta:\, \sum_{j=1}^J\sum_{a=1}^{K_0} n_{j,a} D(\hat\theta_{j,a}  \| \bar \theta_{j,a} ) \geq \epsilon\}$, and combine \eqref{eq-that} and \eqref{eq-That} to obtain
\begin{align*}
&~\mathbb P\left(\sum_{j=1}^J\sum_{a=1}^{L_j} n_{j,a} D(\hat\theta_{j,a}  \| \bar \theta_{j,a} ) \geq \epsilon \mid\ZZ\right) 
\leq \Big(\frac{N}{2^{K_0}}+1\Big)^{J2^{K_0}} e^{-\epsilon}.
\end{align*}
The above result holds for fixed $\ZZ$, we apply a union bound over all the $L^N$ possible assignment $\ZZ$ and obtain 
$$
\mathbb P\left(\sum_{j=1}^J\sum_{a=1}^{2^{K_0}} n_{j,a} D(\hat\theta_{j,a}  \| \bar \theta_{j,a} ) \geq \epsilon \right) \leq (2^K)^N \Big(\frac{N}{2^{K_0}}+1\Big)^{J2^{K_0}} e^{-\epsilon}.
$$
Therefore the following event happens with probability at least $1-\delta$,
\begin{equation*}
\sum_{j=1}^J\sum_{a=1}^{L_j} n_{j,a} D(\hat\theta_{j,a}  \| \bar \theta_{j,a}) < \epsilon = N\log (2^K) + J2^{K_0}\log\Big(\frac{N}{2^{K_0}} + 1\Big) - \log\delta.
\end{equation*}
This concludes the proof of Lemma \ref{lem-kl}.
\end{proof}


\begin{proof}[Proof of Lemma \ref{prop-scale}]\label{pf-lem6}
Combining the results of Step 2 and Step 3, since that there are $(2^K)^N$ possible assignments of $\ZZ$, we apply the union bound to obtain
\begin{align}\label{eq-deltanj}
	&~\MP (\max_{\ZZ} |\ell(\RR;\,\ZZ) - \ME \ell(\RR;\,\ZZ)| \geq 2\epsilon \delta_{NJ})\\ \notag
\leq &~ L^N \MP \left[
\left\{\sum_{j=1}^J\sum_{a=1}^{L_j} n_{j,a} D(\hat\theta_{j,a} \| \bar \theta_{j,a})\geq \epsilon \delta_{NJ}\right\}
\cup
\left\{|X - \ME[X]| \geq \epsilon \delta_{NJ}\right\}
\right]\\ \notag
\leq &~  \exp\Big\{N\log(2^K) + J2^{K_0}\log\Big(\frac{N}{2^{K_0}} + 1\Big) - \epsilon \delta_{NJ}\Big\} \\ \notag
&~  + 2\exp\Big\{N\log(2^{K}) -\frac{\epsilon^2 \delta_{NJ}}{4(MNJ/\delta_{NJ})(\log J)^2 + (4/3)\epsilon\log J) } \Big\}.
\end{align}
In order for the second term on the right hand side of the above display to go to zero, the following of $\delta_{NJ}$ would suffice,
\begin{equation}
	\label{eq-scale}
	\delta_{NJ} \succsim N\sqrt{MJ\log(2^{K})} \log J.
\end{equation}
We take $\delta_{NJ} = N\sqrt{MJ\log(2^{K})} (\log J)^{1+\epsilon}$ for a small positive constant $\epsilon$. Further, under this $\delta_{NJ}$, in order for the first term on the right hand side of \eqref{eq-deltanj} to go to zero, 
Then the right hand side of \eqref{eq-deltanj} goes to zero as $N, J$ go large. 
Then the scaling $\sqrt{J}\cdot 2^{K_0} = O(\sqrt{M} N^{1-c})$ and $K=o(MJ\log J)$ described in the theorem
 yields $\MP (\max_{\ZZ} |\ell(\RR;\,\ZZ) - \ME \ell(\RR;\,\ZZ)| \geq 2\epsilon \delta_{NJ}) = o(1)$, which implies
\begin{equation*}
	\frac{1}{NJ}\max_{\ZZ} |\ell(\RR;\,\ZZ) - \ME \ell(\RR;\,\ZZ)| 
	=o_P\left(\frac{\sqrt{M\log (2^{K_0})}}{\sqrt{J}} (\log J)^{1+\epsilon}\right).
\end{equation*}
This completes the proof of the lemma.
\end{proof}

\begin{proof}[Proof of Lemma \ref{prop-expr-mis}]\label{pf-lem8}
Recall that $\ZZ = (\QQ, \AA)$ and $\xi_{i,j} = \prod_{k=1}^K a_{i,k}^{q_{j,k}}$, then
\begin{align*}
	&~\ell^{2\approx}(\RR;\, \ZZ) - \ME_{\true} [\ell^{2\approx}(\RR;\, \ZZ)]\\
=   &~\sum_{i=1}^N\sum_{j=1}^J\sum_{a=0,1} I(\xi_{i,j}=a)\Big[ r_{i,j}\log(\hat\theta_{j,\xi_{i,j}}) + (1-r_{i,j})\log(1-\hat\theta_{j,\xi_{i,j}})\Big] \\
 &\quad -\sum_{i=1}^N\sum_{j=1}^J\sum_{a=0,1} I(\xi_{i,j}=a)\Big[ P_{i,j}^{\true}\log(\bar\theta_{j,\xi_{i,j}}) + (1-P_{i,j}^{\true})\log(1-\bar\theta_{j,\xi_{i,j}})\Big] \\
=   &~\sum_{j=1}^J\sum_{a=0,1} n_{j,a}\left[\hat\theta_{j,a}\log\left(\frac{\hat\theta_{j,a}}{\bar\theta_{j,a}}\right) + (1-\hat\theta_{j,a})\log\left(\frac{1-\hat\theta_{j,a}}{1-\bar\theta_{j,a}}\right)\right] \\
 &\quad +\sum_{j=1}^J\sum_{a=0,1} n_{j,a}\left[(\hat\theta_{j,a}-\bar\theta_{j,a})\log(\bar\theta_{j,a}) - (\hat\theta_{j,a}-\bar\theta_{j,a})\log(1-\bar\theta_{j,a})\right] \\
= &~ \sum_{j=1}^J\sum_{a=0,1} n_{j,a} D(\hat\theta_{j,a}\|\bar\theta_{j,a}) + \sum_{i=1}^N \sum_{j=1}^J (r_{i,j} - P_{i,j}^{\true})\log\left( \frac{\bar\theta_{j,\xi_{i,j}}}{1-\bar\theta_{j,\xi_{i,j}}} \right).
\end{align*}
Given any fixed $\ZZ$, every $\hat\theta_{j,a}$ is an average of $n_{j,a}$ independent Bernoulli random variables $r_{1,j},\ldots, r_{N,j}$ with mean $\bar\theta_{j,a}$ because $\ME_{\true}[r_{i,j}] = P_{i,j}^{\true}$. Following a similar argument as the proof of Lemma \ref{lem-kl-2p}, 
the following event happens with probability at least $1-\delta$,
\begin{equation*}
\sum_{j=1}^J\sum_{a=0,1}  n_{j,a} D(\hat\theta_{j,a}  \| \bar \theta_{j,a}) < \epsilon = N\log (2^K) + 2J\log\left(\frac{N}{2} + 1\right) - \log\delta.
\end{equation*}
Further, a similar argument as Step 3 of the proof of Theorem \ref{thm-joint-both} gives
\begin{align*}
	&~\MP\left( \max_{\ZZ} |\ell^{2\approx}(\RR;\, \ZZ) - \ME_{\true} [\ell^{2\approx}(\RR;\, \ZZ)]| > 2\epsilon\delta_{}NJ\right)
	\\ \notag
\leq &~  \exp\Big\{N\log (2^K) + 2J\log\Big(\frac{N}{2} + 1\Big) - \epsilon \delta_{NJ}\Big\} \\ \notag
&~  \qquad + 2\exp\Big\{N\log(2^K) -\frac{\epsilon^2 \delta_{NJ}}{2d^2(MNJ/\delta_{NJ})(\log J)^2 + (4/3)\epsilon\log J) } \Big\}.
\end{align*}
Therefore under the scaling $\sqrt{J} = O(\sqrt{M}N^{1-c})$ and $K=o(MJ\log J)$, we reach the conclusion of the lemma.
\end{proof}

\section{Additional Algorithms}
\label{sec-addalgo}
In this section, we provide two additional algorithms, Algorithm \ref{algo-adg-saem} and Algorithm \ref{algo-screen-miss}. Algorithm \ref{algo-adg-saem} is an alternating direction Gibbs stochastic-approximation-EM algorithm mentioned in Section \ref{sec-est-2p} in the main text. This algorithm applies the stochastic approximation to both $\QQ$ and $\AA$ in each iteration, instead of only to $\QQ$ as in Algorithm \ref{algo-screen} presented in the main text. In practice, we found through simulations that in cases where $N$ and $J$ are very large relative to $K$, this Algorithm \ref{algo-adg-saem} yields better estimation accuracy than Algorithm \ref{algo-screen}. The theoretical investigations of the properties of the algorithms are left to the future study.
Algorithm \ref{algo-screen-miss} is for estimating $\QQ$ and $\AA$ with missing entries in the data matrix $\RR$, as mentioned in Section \ref{sec-real} in the main text.

\begin{algorithm}[h!]
\caption{ADG-SAEM: Alternating Direction Gibbs Stochastic Approximation EM}
\label{algo-adg-saem}

\SetKwInOut{Input}{input}
\SetKwInOut{Output}{Output}

\KwData{Response matrix $\RR=(r_{i,j})_{N\times J}\in\{0,1\}^{N\times J}$ and number of attributes $K$.}
 Initialize $\AA=(a_{i,k})_{N\times K}\in\{0,1\}^{N\times K}$ and  $\QQ=(q_{j,k})_{J\times K}\in\{0,1\}^{J\times K}$.\\
 Initialize parameters $\ttt^+$ and $\ttt^-$. 
 Set $t=1$, ~$\AA^{\text{ave}}=\zero$.\\
 \While{not converged}{
 
   \lFor{$(i,j)\in[N]\times [J]$}{
      $
      \psi_{i,j} \leftarrow 
      r_{i,j}\log[\theta^+_j / \theta^-_j] + 
      (1-r_{i,j})\log[(1-\theta^+_j) / (1-\theta^-_j)]
      $
  }
  
  $\AA^{\text{new}}\leftarrow\zero$, \quad $\QQ^{\text{new}}\leftarrow\zero$.\\
\For{$r\in[C]$}{
    \For{$(i,k)\in[N]\times[K]$}{
    Draw $a_{i,k}\sim\text{Bernoulli}\Big(\sigma\Big(-\sum_{j} q_{j,k} \prod_{m\neq k} a_{i,m}^{q_{j,m}} \psi_{i,j} \Big)\Big)$
    }
     $\AA^{\text{new}} \leftarrow \AA^{\text{new}}+\AA$;    
    }

  \For{$r\in[C]$}{
    \For{$(j,k)\in[J]\times[K]$}{
    Draw $q_{j,k}\sim\text{Bernoulli}\Big(\sigma\Big(\sum_{i} (1-a_{i,k}) \prod_{m\neq k} a_{i,m}^{q_{j,m}} \psi_{i,j} \Big)\Big)$
    }
     $\QQ^{\text{new}} \leftarrow \QQ^{\text{new}}+\QQ$;
}

  \begin{align*}
  &\AA^{\text{ave}} \leftarrow \frac{1}{t}\AA^{\text{new}}/C
  + \Big(1-\frac{1}{t}\Big)\AA^{\text{ave}};
  \quad \AA= I\Big(\AA^{\text{ave}}>\frac{1}{2}\Big)~~\text{element-wisely;}
   \\
  &\QQ^{\text{ave}} \leftarrow \frac{1}{t}\QQ^{\text{new}}/C
  + \Big(1-\frac{1}{t}\Big)\QQ^{\text{ave}};
  \quad \QQ= I\Big(\QQ^{\text{ave}}>\frac{1}{2}\Big)~~\text{element-wisely;}
  \end{align*}

  $\mathbf M = (M_{i,j})_{N\times J} =\Big( \prod_k a_{i,k}^{q_{j,k}} \Big)_{N\times J}$;

   \For{$j\in [J]$}{
\begin{align*}
&\theta_{j,\text{new}}^+ \leftarrow \frac{\sum_{i}r_{i,j} M_{i,j}}{\sum_{i}M_{i,j}},
\qquad\qquad
\theta_{j,\text{ave}}^+ \leftarrow 
\frac{1}{t}\theta_{j,\text{new}}^+ +\Big(1-\frac{1}{t}\Big)\theta_{j,\text{ave}}^+;
\\
&\theta_{j,\text{new}}^- \leftarrow \frac{\sum_{i}r_{i,j} (1-M_{i,j})}{\sum_{i}(1-M_{i,j})},
\qquad
\theta_{j,\text{ave}}^- \leftarrow
\frac{1}{t}\theta_{j,\text{new}}^- + \Big(1-\frac{1}{t}\Big)\theta_{j,\text{ave}}^-;
\end{align*}
   }
   
   $t\leftarrow t+1;$
 }

\Output{$\hat\QQ$ and $\hat\AA$.}
\end{algorithm}


\begin{algorithm}[h!]
\caption{ADG-EM with missing data}\label{algo-screen-miss}

\SetKwInOut{Input}{input}
\SetKwInOut{Output}{Output}

\KwData{Responses $\RR$ with the set of indices of observed entries $\Omega\subseteq[N]\times [J]$.}
 Initialize attribute patterns $(a_{i,k})_{N\times K}\in\{0,1\}^{N\times K}$; and structural matrix $(q_{j,k})_{J\times K}\in\{0,1\}^{J\times K}$.\\
 Initialize parameters $\ttt^+$ and $\ttt^-$. \quad
 Set $t=1$, ~$\AA^{\text{ave}}=\zero$.\\
 \While{not converged}{
 
   \lFor{$(i,j)\in\Omega$}{
         $
      \psi_{i,j} \leftarrow 
      r_{i,j}\log[\theta^+_j / \theta^-_j] + 
      (1-r_{i,j})\log[(1-\theta^+_j) / (1-\theta^-_j)]
      $
  }
  
  $\AA^{\text{s}}\leftarrow\zero$, \quad $\QQ^{\text{s}}\leftarrow\zero$.\\
\For{$r\in[C]$}{
    \For{$(i,k)\in[N]\times[K]$}{
    $\text{Draw~~}a_{i,k}\sim\text{Bernoulli}\Big(\sigma\Big(-\sum_{j:\,(i,j)\in\Omega} q_{j,k} \prod_{m\neq k} a_{i,m}^{q_{j,m}} \psi_{i,j} \Big)\Big);$
    }
     $\AA^{\text{s}} \leftarrow \AA^{\text{s}}+\AA$    
    }
    
  $\AA^{\text{ave}} \leftarrow t^{-1}\AA^{\text{s}}/C
  + \Big(1-t^{-1}\Big)\AA^{\text{ave}};\quad
  t\leftarrow t+1.$

  \For{$r\in[C]$}{
    \For{$(j,k)\in[J]\times[K]$}{
    $\text{Draw~~}q_{j,k}\sim\text{Bernoulli}\Big(\sigma\Big(\sum_{i:\,(i,j)\in\Omega} (1-a_{i,k}) \prod_{m\neq k} a_{i,m}^{q_{j,m}} \psi_{i,j} \Big)\Big);$
    }
    $\QQ^{\text{s}} \leftarrow \QQ^{\text{s}}+\QQ$
}
  
  $\QQ= I(\QQ^{\text{s}}/C > 1/2)$ element-wisely;\quad
  $\mathbf I^{\text{ave}} =\Big( \prod_k \{a_{i,k}^{\text{ave}}\}^{q_{j,k}} \Big)_{N\times J}$;
  
   \For{$j\in [J]$}{
   \begin{align*}
   	\theta_j^+ 
   	\leftarrow \frac{\sum_{i:\,(i,j)\in\Omega}r_{i,j} I^{\text{ave}}_{i,j}}{\sum_{i:\,(i,j)\in\Omega}I_{i,j}^{\text{ave}}},\quad
   	\theta_j^- 
   	\leftarrow \frac{\sum_{i:\,(i,j)\in\Omega}r_{i,j} (1-I^{\text{ave}}_{i,j})}{\sum_{i:\,(i,j)\in\Omega}(1-I_{i,j}^{\text{ave}})};
   \end{align*}
   }
 }

$\hat \AA = I(\AA^{\text{ave}} > 1/2)$ element-wisely.

\Output{$\hat\QQ$ and $\hat\AA$.}
\end{algorithm}


%
%

\end{document}